\def\BibTeX{{\rm B\kern-.05em{\sc i\kern-.025em b}\kern-.08em
    T\kern-.1667em\lower.7ex\hbox{E}\kern-.125emX}}
\newcommand{\stitle}[1]{\vspace{1ex}\noindent{\bf #1}}
\newcommand{\etitle}[1]{\vspace{1ex}\noindent{\underline{\em #1}}}
\newcommand{\sstab}{\rule{0pt}{8pt}\\[-1.8ex]}
\newcommand{\zyf}[1]{\textcolor{blue}{#1}}
\newcommand{\ys}[1]{\textcolor{olive}{#1}}
\definecolor{zyjc}{rgb}{0,0.55,0.07}
\newcommand{\zyj}[1]{\textcolor{zyjc}{#1}}
\newcommand{\yaof}[1]{\textcolor{orange}{#1}}
\newcommand{\eat}[1]{}
\newcommand{\ie}{i.e.}
\newcommand{\red}[1]{\textcolor{red}{#1}}
\newcommand{\gray}[1]{\textcolor{gray}{#1}} 
\newtheorem{lemma}{Lemma}
\newtheorem{theorem}{Theorem}
\newcommand{\GA}{\mathcal{G}}
\newcommand{\AP}{\mathcal{A}}
\newcommand{\SC}{\mathcal{S}}
\newcommand{\go}{GoGraph\xspace}
\theoremstyle{plain}
\newcommand{\F}{\mathcal{F}}
\newcommand{\M}{\mathcal{M}}
\newcommand{\R}{\mathcal{R}}
\newcommand{\ord}{p}
\newcommand{\gr}[1]{\textcolor{black}{#1}}
\begin{document}

\title{Fast Iterative Graph Computing with Updated Neighbor States
{
}
}

\author{
Yijie Zhou$^{\dagger}$, 
Shufeng Gong$^{\dagger}$\thanks{The first two authors contributed equally to this paper.}, 
Feng Yao$^{\dagger}$, 
Hanzhang Chen$^{\dagger}$, 
Song Yu$^{\dagger}$, 
Pengxi Liu$^{\dagger}$, \\ 
Yanfeng Zhang\href{mailto:zhangyf@mail.neu.edu.cn}{\textsuperscript{\Envelope}}$^{\dagger}$\thanks{Yanfeng Zhang\href{mailto:zhangyf@mail.neu.edu.cn}{\textsuperscript{\Envelope}} is the corresponding author.}, 
Ge Yu$^{\dagger}$, 
Jeffrey Xu Yu$^{\ddagger}$ \\
\IEEEauthorblockA{
\textit{${\dagger}$Northeastern University, ${\ddagger}$The Chinese University of Hong Kong}\\
\{zhouyijie, yaofeng, chenhanzhang, yusong, liupengxi\}@stumail.neu.edu.cn, \\ \{gongsf, zhangyf, yuge\}@mail.neu.edu.cn, yu@se.cuhk.edu.hk}%
}






\maketitle

\begin{abstract}
Enhancing the efficiency of iterative computation on graphs has garnered considerable attention in both industry and academia.
Nonetheless, the majority of efforts focus on expediting iterative computation by minimizing the running time per iteration step, ignoring the optimization of the number of iteration rounds, which is a crucial aspect of iterative computation.
We experimentally verified the correlation between the vertex processing order and the number of iterative rounds, thus making it possible to reduce the number of execution rounds for iterative computation.
In this paper, we propose a graph reordering method, \go, which can construct a well-formed vertex processing order effectively reducing the number of iteration rounds and, consequently, accelerating iterative computation. Before delving into \go, a metric function is introduced to quantify the efficiency of vertex processing order in accelerating iterative computation.  
This metric reflects the quality of the processing order by counting the number of edges whose source precedes the destination.
\go employs a divide-and-conquer mindset to establish the vertex processing order by maximizing the value of the metric function. 
Our experimental results show that \go
outperforms current state-of-the-art reordering algorithms
by 1.83$\times$ on average (up to 3.34$\times$) in runtime. 
\gr{Compared with traditional synchronous computation, our method improves the iterative computations up to 6.30$\times$ in runtime.}

\end{abstract}

\begin{IEEEkeywords}
Graph reorder, Iterative computation, Asynchronous model, CPU Cache
\end{IEEEkeywords}

\section{Introduction}\label{sec:intro}

Iterative computation is an important method for 
graph mining, such as PageRank\cite{page1999pagerank}, single source shortest path (SSSP), breadth first search (BFS), and so on.
Since iterative computation involves traversing the entire graph multiple times to update vertex states until convergence, which is time-consuming. There have been many works that improve iterative computation from practice to theory \cite{low2012distributed,gonzalez2012powergraph,zhu2016gemini,galois,fan2017grape,yi2022flag,wu2022link}. \eat{For example, graph processing systems such as \red{GraphLab, PowerGraph, Gemini, Galois, and GRAPE} accelerate iterative computations by fully making full use of hardware performance and reducing the time of each iteration, while algorithms such as \red{PageRank and SSSP} theoretically accelerate iterative computations by obtaining an approximate result with a lower error rate.}

Most of them 
\cite{zhu2016gemini,fan2017grape,macko2015llama,liu2023mbfgraph} 
accelerates iterative computations by reducing the runtime of each iteration round under the assumption that, for a given graph and iterative algorithm, the number of iteration rounds required to attain a specific convergence state is fixed.\eat{, because they believe that for a given graph and iterative algorithm, the number of iteration rounds is fixed.} For example, in PageRank\cite{brin1998anatomy}, it may take 20-30 iterations to converge; in SSSP, the number of iteration rounds is roughly 
the graph diameter \cite{sigact18ssspprove}.
However, in practice, the number of iteration rounds for a given graph and iterative algorithm may exhibit substantial variation depending on the vertex update method.
We elaborate on this point in the following observation.

\eat{However, hardware resources have been fully utilized by the current graph processing systems, and some work has even looked for new hardware resources to accelerate iterative calculations, such as \red{GPUs, FPGAs}, etc. While the optimizations of a specific algorithm are not able to be applied to other algorithms. \textit{Can we accelerate iterative computations by only performing simple changes on graphs themselves without reducing the utilization of hardware resources of graph processing systems and the accuracy of results?} It is very attractive for industry when the acceleration method is simple, effective, and cheap.} 

\stitle{Observation.} In general, traditional iterative algorithms are typically designed with a synchronous mode\cite{xie2015sync,fan2017grape}, processing each vertex in a round-robin fashion. 
Specifically, to simplify the parallel semantics, in each round, every vertex is updated based on the state of its neighbors from the previous iteration. 
\eat{yaof: I think the async needs to be mentioned here because Fig 1 shows the async. gsf: I agree. Prof. Yu also mentioned this.} 
However, in some algorithms employing an asynchronous mode\cite{zhang2013maiter,fan2018adaptive}, vertices can update their state using the neighbors' state from the current iteration rather than the previous one.
This adjustment is beneficial because the updated neighbors' state is closer to convergence. Consequently, computations based on these updated state values yield results closer to convergence, 
thereby reducing the number of iteration rounds.
Furthermore, we found that the vertex processing order further affects whether the latest state values can be utilized by its neighbors 
in the current iteration.
It means that the vertex processing order is significant for iterative computation to reduce the number of iteration rounds. 

\begin{figure}
  \begin{minipage}{0.49\textwidth}
    \centering
    \includegraphics[width=0.88\textwidth]{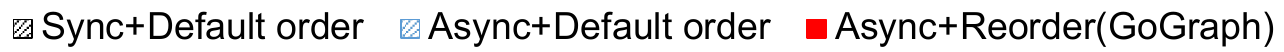}
    \\
    \subfloat[Runtime]{\includegraphics[width=0.5\textwidth]{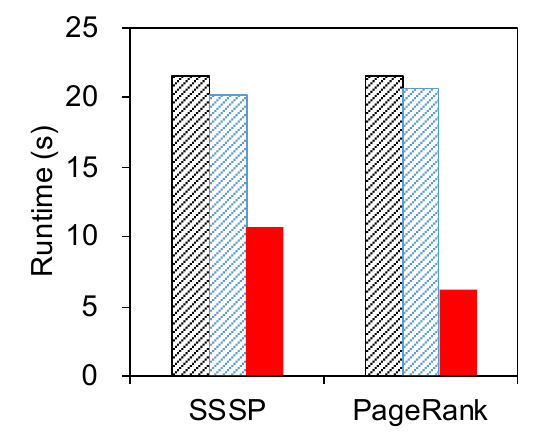}\label{fig:exam-time}}
    \subfloat[Iteration rounds]{\includegraphics[width=0.5\textwidth]{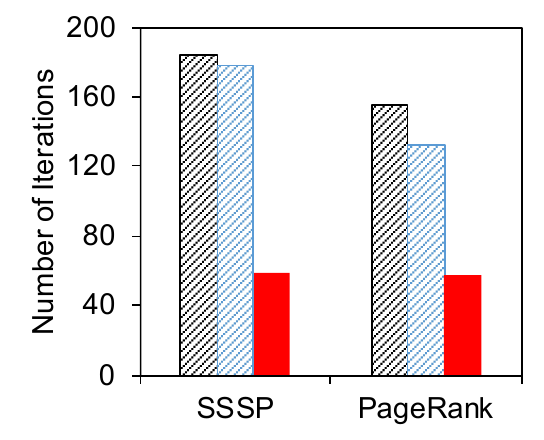}\label{fig:exam-round}}
    \vspace{-0.05in}
    \caption{Runtime \& Number of iterations of SSSP and PageRank with different vertex updating modes (Sync. vs. Async.) and different vertex processing orders (default order vs. reordered with \go) on wiki-2009 dataset \eat{suggest runtime as a figure (include SSSP and PageRank) and numter of iterations as a figure. because you want to compare different orders, but not runtime vs number. sync+rand order, async+rand order, async+reorder (GoGraph)}}\label{fig:exam-time-round}
  \end{minipage}
  \vspace{-0.2in}
\end{figure}

\begin{figure*}[htb]

\begin{minipage}{0.16\textwidth}
    \centering
    \includegraphics[width=\textwidth]{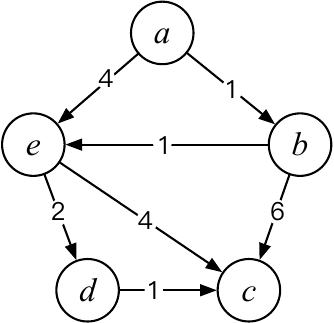}
    \subcaption{Origin graph}\label{fig:eg1-a}
\end{minipage}
\hfill
\begin{minipage}{0.28\textwidth}
    \centering
    \includegraphics[width=\textwidth]{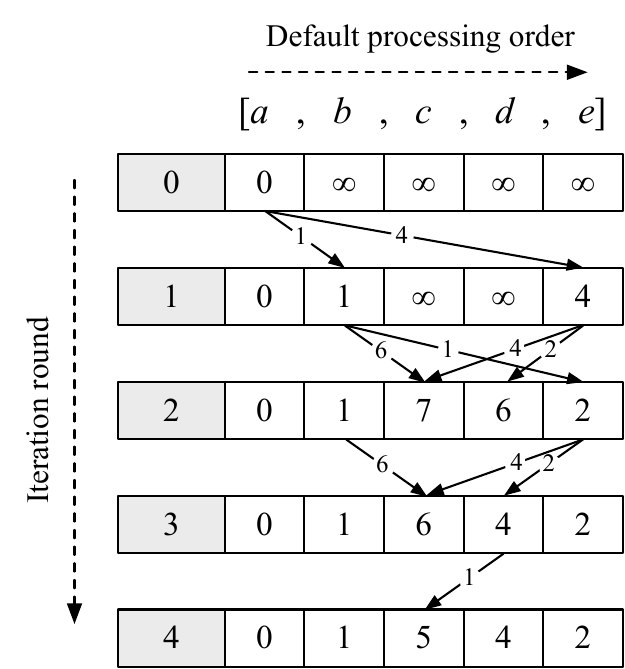}
    \subcaption{Sync. w/ default order \eat{Processing order, P should be lower case}}\label{fig:eg1-b}
\end{minipage}
\hfill
\begin{minipage}{0.22\textwidth}
    \centering
    \includegraphics[width=\textwidth]{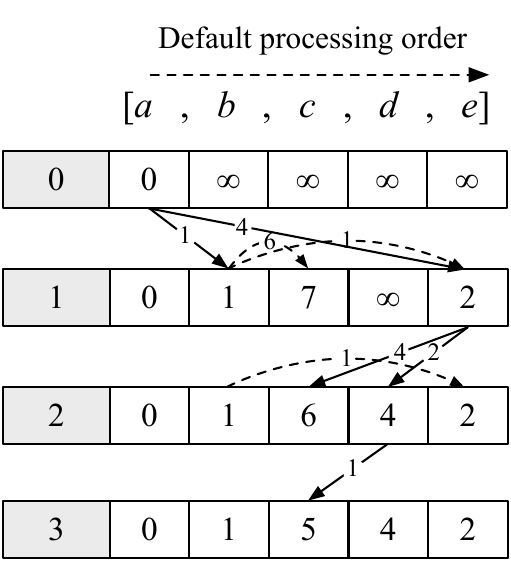}
    \subcaption{Async. w/ default order}\label{fig:eg1-c}
\end{minipage}
\hfill
\begin{minipage}{0.23\textwidth}
    \centering
    \includegraphics[width=\textwidth]{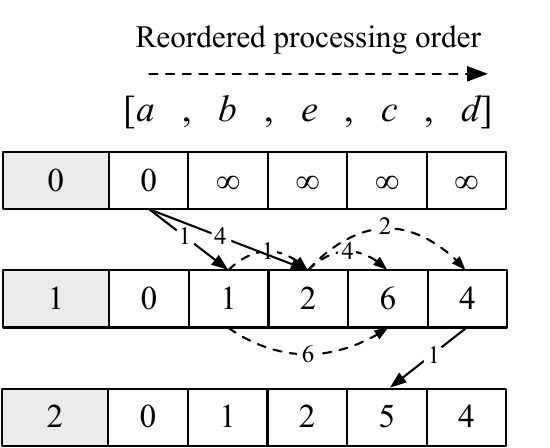}
    \subcaption{Async. w/ reordered order}\label{fig:eg1-d}
\end{minipage}
\vspace{-0.05in}
\caption{Iterative process and the number of iteration rounds generated by employing different iterative computation modes and vertex processing orders when running the SSSP algorithm, where the source vertex is $a$, the default processing order is alphabetical order based on vertex labels $[a, b, c, d, e]$, the reordered order is obtained by our method $[a, b, e, c, d]$\eat{what is original order? may need an arrow in the fig to indicate the order [checked. modified the figures.]}}\label{fig:eg1-all}
\vspace{-0.2in} 
\end{figure*}

Fig. \ref{fig:exam-time-round} shows the runtime and number of iterations of SSSP and PageRank with different vertex updating modes and different vertex processing orders on wiki-2009~\cite{nr}. Both 
SSSP and PageRank have less runtime and fewer iterations in the asynchronous case. 
Furthermore, we reordered the vertices with \go (our vertex reordering method), ensuring that as many vertices as possible can leverage the latest state values of their neighbors from the current round. It can be observed that, after reordering vertices, the advantage of asynchronous mode is significantly improved, and the number of iteration rounds and runtime are significantly reduced.

\eat{Since the updated neighbors' state is closer to convergence, the result of computations based on such state values is also closer to convergence, which speeds up the convergence of the iterative computation.}

\eat{
\begin{figure}
   \centering
   \subfloat[SSSP]{\includegraphics[width = 0.5\linewidth]{Expr/sec-1/111.pdf}\label{fig:exam-time}} 
  \subfloat[PageRank]{\includegraphics[width = 0.5\linewidth]{Expr/sec-1/222.pdf}\label{fig:exam-round}}
  \caption{Runtime \& Number of iteration rounds of SSSP and PageRank with different vertex updating modes (Sync. vs. Async.) and different vertex processing orders (original order vs. reorder with GoGraph) on sk-2005 dataset. \zyf{suggest runtime as a figure (include SSSP and PageRank) and numter of iterations as a figure. because you want to compare different orders, but not runtime vs number. sync+rand order, async+rand order, async+reorder (GoGraph)}}
\label{fig:exam-time-round}
\end{figure}
}


\eat{We use the example of SSSP algorithm in Fig. \ref{fig:eg1-all} to explain the reason for the different runtimes in Fig. \ref{fig:exam-time-round}. The topology of the example graph is shown in Fig. \ref{fig:eg1-a} and $a$ is the origin vertex.}
We use Fig. \ref{fig:eg1-all} as an example to illustrate the results in Fig. \ref{fig:exam-time-round}, where the iterative algorithm is SSSP and the source vertex is $a$.
Fig. \ref{fig:eg1-a} gives the topology of the graph.
In Fig. \ref{fig:eg1-b}, we employ synchronous iteration, where each vertex is updated using the values of its neighbors from the previous iteration. For example, in the first iteration, $e$'s state is updated to 4 based on the initial state of $a$, while in the second iteration, $e$ is updated to 2 based on the previous iteration state of $b$. It takes four iterations to achieve converged results.
Fig. \ref{fig:eg1-c} employs an iterative asynchronous technique in which each vertex uses the updated states of its neighbors if they have been updated. For example, in the first iteration, $e$'s state is updated to 2 based on the updated state of $b$ from the current iteration. Compared to the synchronous mode in Fig. \ref{fig:eg1-b}, $e$ changes and converges faster. Finally, the asynchronous mode achieves the same converged result in three rounds of iterations.
\eat{need more elaboration, why it needs two rounds. how do you update state? what is the origin order? [checked]} 
We observed that the updates for $c$ and $d$ depend on the state value of $e$. In Fig. \ref{fig:eg1-d}, we rearrange the processing order by placing $e$ before $c$ and $d$. With this adjustment, each vertex ensures that all of its incoming neighbors are updated before it, leading to the convergence of the graph after two iterations.
\eat{It means 
that for graph analysis algorithms \eat{that can be executed asynchronously}\yaof{executed in an asynchronous manner}, the order of vertex processing significantly impacts the convergence speed of the algorithm.}




\eat{look here}

\stitle{Motivation}.
\eat{A natural question arises: since the convergence speed of the iterative computation can be improved by using the 
updated neighbor states from the current iteration when updating the vertices, 
would it further accelerate the convergence 
\yaof{if more neighbors have been updated in the current iteration?}
Fig. \ref{fig:eg1-e} gives an affirmative answer. }
From Fig. \ref{fig:exam-time-round} and Fig. \ref{fig:eg1-all}, we can see that the processing order has a significant impact on the efficiency of asynchronous iterative computation. An effective vertex processing order can accelerate iterative computation. This motivates us to search for the optimal processing order to speed up iterative computation. 
\eat{What is the meaning of this sentence? gsf:In the worst processing order where each vertex is updated before its incoming neighbors, the asynchronous mode has the same efficiency as the synchronous.}
\stitle{Our goal}. 
As observed above, 
we can rearrange the vertices 
processing order so that when updating each vertex, its neighbors have already been processed and updated. Since the updated neighbor state from the current iteration make the vertex closer to the convergence state, the iterative computation will be accelerated. 
\eat{Specifically, this can be done by assigning smaller IDs to the vertices processed preferentially 
and larger IDs to the vertices processed later, 
then sort the vertices according to their IDs. As a result, the vertex processing order is rearranged. However, which vertices should be processed first and assigned smaller IDs, and which vertices should be processed later and assigned larger IDs is not a simple matter.}
In this paper, we aim to investigate the design of a vertex reordering method capable of accelerating iterative computations while keeping other factors constant,
such as the execution mode and task scheduling strategy of the graph processing systems, as well as result accuracy.

\stitle{GoGraph.} 
To achieve the above goal, we propose a graph reordering method, \go,  which can construct an efficient vertex processing order to accelerate the iterative computation. It has the following unique considerations and design.

\etitle{Measure the quality of the processing order.} Before formulating an effective reordering method, it is necessary to design a metric for quantifying the efficiency of the vertex processing order in accelerating iterative computation. While some intuitive metrics, such as runtime and the number of iteration rounds, can be regarded as benchmarks, they require completing iterative computations for meaningful results. Evaluating the quality of the vertex processing order becomes quite challenging in the absence of full iterative computation. Regarding the above problem, based on the theoretical guidance, we introduce a metric function \eat{\zyf{use the word metirc}}that counts the number of edges whose sources are in front of destinations, which in turn reflects the quality of the processing order. 

\etitle{A divide-and-conquer method for vertex reordering.} 
It is not trivial to find an optimal vertex processing order that enhances iterative computational efficiency.
We demonstrate \eat{found \zyf{do not use past tense, demonstrate?}}that finding the optimal processing order is 
an NP-hard problem. 
Therefore, directly establishing the optimal order for a given metric is impractical.
\go adopts the divide-and-conquer mindset. 
Initially, high-degree vertices are extracted from the graph to minimize their impact on the localization decisions of numerous low-degree vertices.
Subsequently, the remaining graph is divided into smaller subgraphs to simplify the complexity of reordering, with a focus on intra- and inter-subgraph perspectives.
The reordering process involves finding an optimal position for each vertex in the vertex processing order within the subgraph, and an optimal position for each subgraph in the subgraph processing order.
This is achieved by maximizing the value of the metric function.
Finally, the high-degree vertices are taken into account to determine the complete vertex processing order.
We theoretically prove the effectiveness and efficiency of \go. 

To summarize, we make the following contributions:


\sstab
1) \textit{Metric Function}. A novel metric function $\M(\cdot)$ is proposed to measure the efficiency of processing order in accelerating iterative computation. We theoretically prove the effectiveness of $\M(\cdot)$ (Section \ref{sec:state}). 


\sstab
2) \textit{\go}. An heuristic \eat{intuitive?heuristic?}graph reordering method \go is proposed, that can accelerate iterative computation significantly. We also provide an efficient implementation of \go 
(Section \ref{sec:gograph}).

\sstab
3) \textit{Evaluation}. We evaluate the effectiveness of \go in accelerating iterative computation with a comprehensive experiment 
(Section \ref{sec:expr}). \eat{a sentence should have a subjective} 
 
\eat{should show the effectiveness and results before the organization description}

Furthermore, we first provide some preliminary foundational knowledge and some based definitions in Section \ref{sec:pre}. And, finally, we list a comprehensive related work in Section \ref{sec:relate} and conclude this paper in Section \ref{sec:conclu}.


\eat{
\stitle{Challenges.} In order to design an effective vertex reordering method, the following problems need to be addressed beforehand. \zyf{maybe unnecessary, you can include these in the contribution. adding challenges making the logic flow too long. the two challenges look the same to me.}

\etitle{How to measure the quality of the processing order?} 
\yaof{Before formulating an effective reordering method, it is necessary to design a metric for quantifying the efficiency of the vertex processing order in accelerating iterative computations. 
While some intuitive metrics, such as runtime and the number of iteration rounds, can be regarded as benchmarks, they require completing iterative computations for meaningful results. Evaluating the quality of the vertex processing order becomes quite challenging in the absence of full iterative computations.}
\eat{It is difficult to measure the quality of vertex processing order without performing iterative computations.
}

\etitle{How to find an efficient vertex processing order?} 
\yaof{To find an optimal vertex processing order that enhances iterative computational efficiency, it is essential to design an efficient reordering method. 
}
However, 
\zyj{we found that finding the optimal processing order is not only an NP-hard problem but also an NP-approximate problem}. 
Therefore, it is impractical to directly determine the optimal order for a given metric. 
\eat{Furthermore, the complex relationships between vertices in the graph make it more difficult to reorder vertices. 
In the processing sequence, changes in the position of vertices in the processing order may have a greater impact on the efficiency of iterative calculations, since the source of some edges may change from the front of the destination to the back of the destination.}
\eat{Furthermore, the complex relationships among vertices in the graph contribute to the difficulty of reordering. Modifying  the processing order of one vertex may result in changes in the relative processing order of the \eat{destination} vertices connected to it through edges,} \eat{causing unintended effects.}

\stitle{Contributions.} 
\red{\textbf{sketch} To address the two challenges mentioned above, we first propose a score function to measure the quality of the graph processing order. This function counts the number of edges whose source is in front of the destination in the processing order. We theoretically prove that the score function can be used as a metric to measure the efficiency of processing order in accelerating iterative computations. For the second challenge, we propose a graph reordering method, \go, which tries to increase the number of edges whose source is in front of the destination in the processing order.}

\eat{an objective function that employs the concept of \textit{prior neighbors}, namely, the 
incoming neighbors of each vertex that have been processed before it is processed, to evaluate the quality of the vertex processing order. As illustrated by the example in Fig. \ref{fig:eg1-all}, the greater the number of prior neighbors, i.e., the more incoming neighbors that have been processed of each vertex in the current iteration, the fewer iteration rounds are required to achieve convergence. 
Therefore, designing a metric based on the number of \textit{prior neighbors} to measure the quality of the vertex processing order makes intuitive sense. Furthermore, we prove that a processing order with a higher objective function value always has a better performance in accelerating iterative computation.}
}

\section{Preliminaries}\label{sec:pre}

This section will introduce the fundamental knowledge related to our work. 

\stitle{Graphs}. Consider a directed graph denoted as $G(V, E)$, where $V$ and $E$ are the set of vertices and edges. 
Given a vertex $v$, $IN(v)=\{u \mid (u,v) \in E\}$ and $OUT(v)=\{w \mid (v,w) \in E\}$ represent the set of incoming neighbors (abbr. \textit{in-neighbors}) and  outgoing neighbors (abbr. \textit{out-neighbors}) of $v$, respectively,
$|IN(v)|$ and $|OUT(v)|$ are the number of incoming neighbors (abbr. \textit{in-degree}) and outgoing neighbors (abbr. \textit{out-degree}), respectively. 

\stitle{Vertex processing order}. The vertex processing order $O_V=[v_0, \cdots, v_{|V|-1}]$ is one permutation of the vertices $V$ in $G$. There are $|V|!$ permutations for the graph with $|V|$ vertices. We aim to find one vertex permutation that accelerates the iterative computation as much as possible. 

\stitle{Ordinal number}. We define the \textit{ordinal number} of a vertex $v$ as its position in the processing order, denoted as $\ord(v)$. The value of $\ord(v)$ ranges from 0 to $|V|-1$.
\eat{, where $N=|V|$ is the number of vertices. }
In each iteration, if a vertex $u$ is processed before vertex $v$, then its ordinal number $\ord(u)$ is smaller than that of $v$, i.e., $\ord(u) < \ord(v)$, indicating that a lower ordinal number corresponds to an earlier processing order. For example, the ordinal numbers corresponding to the five vertices $a$, $b$, $c$, $d$, $e$ in Fig. \ref{fig:eg1-d} are 0, 1, 2, 3, 4, with $\ord(e)=4$. However, in the processing order shown in Fig. \ref{fig:eg1-d}, as the vertices are reordered, $\ord(e)$ is changed to 2.

\stitle{Positive/Negative edge}. An edge $(u, v)$ is defined as a \textit{positive edge} if the ordinal number of the source $u$ is smaller than that of the destination $v$, i.e., $\ord(u) < \ord(v)$. Otherwise, edge $(u, v)$ is a \textit{negative edge}, i.e., $\ord(u) > \ord(v)$. \eat{If the rank of the source is smaller than that of the destination,}In the case of the positive edges, the source has been updated when the destination is \eat{updated}being processed. Using the updated states of the source to update the destination may speed up the convergence of the destination. On the contrary, in the case of the negative edges, when the destination is updated, \eat{we have to use the state of the source from the previous iteration.}the source has not been updated yet, so it has to use the state of the source from the previous iteration.

\eat{
The basic notations used in this paper are listed in Table \ref{tab:notation}.

\begin{table}[h]
    \caption{Notations}
    \label{tab:notation}
    \centering
    \footnotesize
    {\renewcommand{\arraystretch}{1.2}
    \setlength{\tabcolsep}{6pt} 
    \begin{tabular}{c c c }
        \toprule
        {\textbf{Symbols}} &
        {\textbf{Notations}} & \\
         \midrule
         $V$ & Set of vertices \\
         $E$ & Set of edges \\
         $G(V,E)$ & Graph consists of $V$ and $E$ \\
         ($u$, $v$) & Edge from vertex $u$ to vertex $v$\\
         $IN(v), OUT(v)$ & Set of in-/out-neighbors of $v$ \\
         $\ord(v)$ & Ordinal number of vertex $v$  \\
         \bottomrule
    \end{tabular}
    }
\end{table}
}

\stitle{Iterative Computation}.
In each round of iterative computation, each vertex is updated using function $\F(\cdot)$, where the input of $\F(\cdot)$ is the set of the states of its incoming neighbors. In the $k$-th round of iteration, the update of the vertex $v$ can be expressed as follows,
\begin{equation}\label{eq:syn}
    \begin{aligned} 
        x^k_v=\F\big(\{x^{k-1}_u| u\in IN(v)\}\big),
    \end{aligned}
\end{equation}
where 
$x_v^k$ is the state value of $v$ after the $k$-th iteration.
When processing vertex $v$, it accumulates the most recent state values of its in-neighbors $u$ and subsequently applies the function $\F(\cdot)$ based on these collected values. In synchronous iteration, the state values of $u$ are uniformly updated at the end of each round, so the state values of $u$ that vertex $v$ gathers in the $k$-th round originate from the $k-1$-th round. To illustrate this iterative process, we will employ the PageRank and SSSP algorithms as specific examples.


\begin{figure}[ht]
\vspace{-0.1in}
\small
    \begin{minipage}[t]{0.22\linewidth}
        \texttt{PageRank:}
    \end{minipage}%
    \begin{minipage}[t]{0.75\linewidth}
        \raggedright
        $x_v^k = \sum\limits_{u\in IN(v)} x_u^{k-1} \cdot d / OUT(u)$, \\
        \texttt{s.t., $d$ is the damping factor.}
    \end{minipage}
\end{figure}
\vspace{-0.2in}
\begin{figure}[ht]
\small
    \begin{minipage}[t]{0.22\linewidth}
        \texttt{SSSP:}
    \end{minipage}%
    \begin{minipage}[t]{0.75\linewidth}
        \raggedright
        \setlength{\baselineskip}{1.1\baselineskip}
        $x_v^k = \min \{x_v^{k-1}, x_u^{k-1}+d(u,v)| u\in IN(v) \}$, \\
        \texttt{s.t., $d(u,v)$ is the distance between $u$ and $v$.}
    \end{minipage}
    \vspace{-0.1in}
\end{figure}


\eat{
\begin{algorithm}
\caption{PageRank Iterative Algorithm}
\label{alg:sec2-pr-iter}
\ys{$d$ is a constant damping factor.} \\
\For{v in V}{
    $x_t = 0$ \\
    \For{u in IN(v)}{
        \gray{$x_t = {x_u^{k-1}}*d/|OUT(u)|$} \\ 
        \ys{$x_t = x_t + {x_u^{k-1}}*d/|OUT(u)|$}
    }
    $x_v^{k} = x_v^{k} + x_t$ \\
}
\end{algorithm}

\begin{algorithm}
\caption{SSSP Iterative Algorithm}
\label{alg:sec2-sssp-iter}
\ys{$d_{u,v}$ is the weight of edge ($u$, $v$).} \\
\For{v in V}{
  \For{u in IN(v)}{
    $x_v^k = \min(x_v^{k-1}, x_u^{k-1}+d_{u,v})$
  }
}
\end{algorithm}
}

\eat{Referring to the GAS model, we can conclude that the iterative computation can be interpreted as a process of \emph{message passing} between vertices. In each iteration, each vertex receives messages from its in-neighbors, which are then aggregated. Then the vertex updates its own state based on the received messages. Finally, the updated state and received messages are used to generate messages that are sent to the vertex's out-neighbors. This message passing process continues until all vertex states remain unchanged or their changes become negligible, indicating the convergence of the iterative computation.}

As discussed in Section \ref{sec:intro}, using the updated state of the neighbors from the current iteration can accelerate iterative computation. 
In the case of neighbors that have not yet been processed, we continue to use their states from the previous iteration. The update function in Eq. \ref{eq:syn} can be reformulated as Eq. \ref{eq:asyn}.
\eat{If each vertex uses its neighbors' state values that have been updated in the current iteration, rather than all the values from the previous round, then
the messages received in the \textit{Gather} step of Equation \ref{eq:gas} do not solely originate from $m_{u,v}^{k-1}$.
In the $k$-th iteration round, 
we use $u'$ to represent the in-neighbors that processed before $v$, and
$u_1$'s state value will be updated in the $k$-th round prior to $v$. The message $v$ \textit{gathers} partly originate from $u_1$, denoted as $m_{u_1,v}^{k}$, and the other messages are from the in-neighbors that are processed after $v$, i.e., $u_2$, and the messages are denoted as $m_{u_2,v}^{k-1}$.}

\begin{equation}\label{eq:asyn}
\begin{aligned}
x^k_v=\F\big(&\{x^{k}_{u_1}| u_1\in IN(v), \ord(u_1) < \ord(v)\} \\
\cup &\{x^{k-1}_{u_2}| u_2\in IN(v), \ord(u_2) > \ord(v)\}\big).
\end{aligned}
\end{equation}
In the $k$-th round of the Eq. \ref{eq:asyn}, $u_1$ with ordinal numbers smaller than $v$ are updated, while $u_2$ with ordinal numbers larger than $v$ remain unchanged.
The vertex state update examples for PageRank and SSSP can also be modified accordingly:
\begin{figure}[ht]
\vspace{-0.1in}
\small
    \begin{minipage}[t]{0.22\linewidth}
        \texttt{PageRank:}
    \end{minipage}%
    \begin{minipage}[t]{0.75\linewidth}
        \raggedright
        \setlength{\baselineskip}{1.1\baselineskip}
        $x_v^k = \sum\limits_{u_1\in IN(v), p(u_1) < p(v)} x_{u_1}^{k} \cdot d / OUT({u_1}) + \sum\limits_{u_2\in IN(v), p(u_2) > p(v)} x_{u_2}^{k-1} \cdot d / OUT({u_2})$, \\
        \texttt{s.t., $d$ is the damping factor
        }
    \end{minipage}
\end{figure}
 \vspace{-0.2in}
\begin{figure}[ht]
\small
    \begin{minipage}[t]{0.22\linewidth}
        \texttt{SSSP:}
    \end{minipage}%
    \begin{minipage}[t]{0.75\linewidth}
        \raggedright
        \setlength{\baselineskip}{1.1\baselineskip}
        $x_v^k = \min \{x_v^{k-1}, x_{u_1}^{k}+d(u,v), x_{u_2}^{k-1}+d(u,v) |$\\
        $u_1, u_2\in IN(v), p(u_1) < p(v), p(u_2) > p(v)\}$, \\
        \texttt{s.t., $d(u,v)$ is the distance between $u$ and $v$; $u_1$ is processed before $v$ and $u_2$ is after $v$.}
    \end{minipage}
    \vspace{-0.1in}
\end{figure}

\eat{
\begin{algorithm}
\caption{PageRank Iterative Algorithm - Async}
\label{alg:sep-pr-iter}
\For{v in V}{
    $x_t = 0$ \\
    \For{u in IN(v)}{
      \If{$r(u) < r(v)$}{
        $x_t = x_u^{k} * d / \lvert OUT(u) \rvert$\
      }
      \Else{
        $x_t = x_u^{k-1} * d / \lvert OUT(u) \rvert$\
      }
      $x_v^k = x_v^{k-1} + x_t$\
    }
}
\end{algorithm}

\begin{algorithm}
\caption{SSSP Iterative Algorithm - Async}
\label{alg:sep-sssp-iter}
\For{v in V}{
  \For{u in IN(v)}{
      \If{$r(u) < r(v)$}{
        $x_v^k = \min(x_v^{k}, x_u^{k}+d_{u,v})$\
      }
      \Else{
        $x_v^k = \min(x_v^{k-1}, x_u^{k-1}+d_{u,v})$\
      }
  }
}
\end{algorithm}
}

In practice, the vertex update method in Eq. \ref{eq:asyn} has been applied in many fields, such as Gauss-Seidel iteration \cite{koester1994parallel} in linear algebra, \gr{Adsorption \cite{adsorption}, Katz metric \cite{katz1953new}, SimRank \cite{jeh2002simrank}, Belief propagation \cite{belief} and so on \cite{wang2020powerlog, yao2023ragraph}}.

\eat{The convergence criterion for Equation \ref{eq:delta-gas} is that the value of $\Delta m$ passed by each vertex is lower than a predetermined threshold.}
\eat{
\stitle{Message Passing \& Vertex Order.}
In the iterative process of the asynchronous system, every vertex is traversed once in each iteration and uses the most current state to participate in the iterative computation. An example of the message passing process when running SSSP on a graph $G$ is illustrated in Fig. \ref{fig:message-pass-1}. The black numbers on the edges represent weights, which indicate the distance between two vertices. The red numbers surrounding the vertices indicate the \yaof{current} shortest distance from the source vertex to them. The shaded vertices represent that they are being traversed. The yellow rhombuses represent the message, and the dashed lines are the propagation path. Fig. \ref{fig:message-1-0} shows the initial state of $G$ and the traversal order is assumed to be $s$-$a$-$b$-$c$. Fig. \ref{fig:message-1-1} shows the first step $s_1$ of the first iteration, where the source vertex $s$ is accessed and passes messages (the distance to the neighbors) outwards. After receiving the message, vertices $a$ and $b$ immediately update their state (the direct distance to $s$). Fig. \ref{fig:message-1-2} and Fig. \ref{fig:message-1-3} are the two consecutive steps of accessing $a$ and $b$. They pass messages to the out-neighbor $c$ based on the latest state, and $c$ promptly updates its state (the shortest distance for $c$ to reach $s$ through $a$ and $b$). Each vertex is currently in its optimal state so the iteration terminates. However, such perfect order usually does not happen to exist in natural graphs. The example in Fig. \ref{fig:message-pass-2} traverses $G$ in the order of $c$-$a$-$b$-$s$, requiring one more round to get the final result. Fig. \ref{fig:abcmessage-direc} illustrates the impact of the different processing order in Figures \ref{fig:message-pass-1} and \ref{fig:message-pass-2} on the efficiency of message passing. The processing order in Fig. \ref{fig:message-direc-1} is more efficient because the traversing order of the vertices in a iteration exactly matches the direction of message passing, where the subsequent vertices can timely utilize the updated values of the antecedent vertices, resulting in a high message utilization rate.

\begin{figure}[htbp]
    \centering
    \subfigure[$s_0$: initial state] {\includegraphics[width=0.35\linewidth]{Figures/message-1-0.pdf} \label{fig:message-1-0}} 
    \hspace{2mm}
    \subfigure[$s_1$: $s$ is accessed\\(1st iteration)]{\includegraphics[width=0.35\linewidth]{Figures/message-1-1.pdf} \label{fig:message-1-1}}
    \vspace{-2mm}
    \subfigure[$s_2$: $a$ is accessed\\(1st iteration)]{\includegraphics[width=0.35\linewidth]{Figures/message-1-2.pdf} \label{fig:message-1-2}}
    \hspace{2mm}
    \subfigure[$s_3$: $b$ is accessed\\(1st iteration)]{\includegraphics[width=0.35\linewidth]{Figures/message-1-3.pdf} \label{fig:message-1-3}}
    \caption{An example of message passing when running SSSP on Fig. $G$, where the vertex processing order is $s-a-b-c$.}
    \label{fig:message-pass-1}
\end{figure}

\begin{figure}[htbp]
\vspace{-5mm} 
\centering
    \subfigure[$s_0$: initial state]{\includegraphics[width=0.31\linewidth]{Figures/message-1-0.pdf}\label{fig:message-2-0}}
    \hspace{1mm}
    \subfigure[$s_1$: $c$ is accessed\\(1st iteration)]{\includegraphics[width=0.31\linewidth]{Figures/message-2-1.pdf}\label{fig:message-2-1}}
    \hspace{1mm}
    \subfigure[$s_2$: $a$ is accessed\\(1st iteration)]{\includegraphics[width=0.31\linewidth]{Figures/message-2-2.pdf}\label{fig:message-2-2}}
    \\
    \subfigure[$s_3$: $b$ is accessed\\(1st iteration)]{\includegraphics[width=0.31\linewidth]{Figures/message-2-3.pdf}\label{fig:message-2-3}}
    \hspace{1mm}
    \subfigure[$s_4$: $s$ is accessed\\(1st iteration)]{\includegraphics[width=0.31\linewidth]{Figures/message-2-4.pdf}}\label{fig:message-2-4}
    \hspace{1mm}
    \subfigure[$s_5$: $c$ is accessed\\(2nd iteration)]{\includegraphics[width=0.31\linewidth]{Figures/message-2-5.pdf}\label{fig:message-2-5}}
    \\
    \subfigure[$s_6$: $a$ is accessed\\(2nd iteration)]{\includegraphics[width=0.31\linewidth]{Figures/message-2-6.pdf}\label{fig:message-2-6}}
    \hspace{1mm}
    \subfigure[$s_7$: $b$ is accessed\\(2nd iteration)]{\includegraphics[width=0.31\linewidth]{Figures/message-2-7.pdf}\label{fig:message-2-1}}
\caption{Another example of message passing on $G$, where the vertex processing order is $c-a-b-s$.}
\label{fig:message-pass-2}
\end{figure}

\begin{figure}[htbp]
\vspace{-0.2in} 
    \centering
    \subfigure[Proc. order 1]{\includegraphics[width=0.29\linewidth]{Figures/sssp-reorder-1.pdf}\label{fig:message-direc-1}} 
    \hspace{3mm}
    \subfigure[Proc. order 2]{\includegraphics[width=0.6\linewidth]{Figures/sssp-reorder-2.pdf}\label{fig:message-direc-2}}	
    \caption{The message passing efficiency with different graph processing orders.}
    \label{fig:abcmessage-direc}
\vspace{0in} 
\end{figure}

In synchronous graph processing systems, vertices do not immediately utilize the most up-to-date state after receiving an update message. Instead, they rely on the old state for computations and uniformly update the state at the end of each iteration. Consequently, the processing order of vertices has no impact on the number of iterations, which is in contrast to asynchronous systems. Motivated by previous studies, our research aims to optimize vertex reordering to improve message passing efficiency. Specifically, we seek to maximize the utilization of messages propagated by each vertex within the current iteration, thereby minimizing the number of messages residing in already processed vertices.
In addition to the SSSP algorithm, other iterative algorithms, such as PageRank \cite{page1999pagerank}, Belief Propagation \cite{pearl2022reverend}, Penalized Hitting Probability (PHP) \cite{php}, Single Source Widest Path (SSWP) \cite{pollack1960maximum}, Gauss-Seidel Iteration \cite{koester1994parallel}, are affected by the processing order of the vertices. Therefore, using the reordering method to arrange vertex orders to fit the  direction of message passing will improve the efficiency of various graph applications.
}

\eat{
\stitle{CPU Cache \& Graph Locality.}
The CPU cache is a high-speed memory buffer located between the CPU and main memory, designed to alleviate the discrepancy between the CPU's processing speed and the memory I/O rate. The main function of cache is to mitigate the CPU's waiting time for reading or writing data to the main memory, as the CPU computation rate exceeds the memory I/O rate. By storing frequently accessed data in the cache, referred to as a cache hit, the CPU can access the data from the cache directly, without having to fetch it from the main memory. Data transfer between the main memory and the cache occurs in fixed-size data blocks called cache lines. Fig. \ref{fig:cache} shows the process of loading a needed memory block of cache line size from main memory into the cache of the example of Fig. \ref{fig:pr-eg-3}. 

\begin{figure}[htbp]
\centering
\includegraphics[width=0.65\linewidth]{Figures/cache.pdf}
\caption{Loading a needed memory block of cache line size from main memory into the cache.}
\label{fig:cache}
\end{figure}

When the CPU needs to access a memory address, it performs a lookup operation in the cache to determine whether the target memory address resides in the cache. If the address is found in the cache, it results in a cache hit and the CPU can retrieve the requested data directly from the cache. Conversely, if the target address is not found in the cache, it leads to a cache miss, and the cache must fetch the required data from the main memory and store it in the cache for subsequent use. 
Graph data distribution is often irregular, which means that the memory locations of a vertex's neighbors are not necessarily contiguous or predictable. In the example shown in Fig. \ref{fig:pr-eg-3}, the vertex access order is $d-c-b-a$. If they are stored in memory in a disordered manner as shown in Fig. \ref{fig:cache}, when accessing vertex $d$, it is necessary to find the memory blocks where its neighbors $b$ and $c$ are located, as they need to be accessed subsequently. Since $b$ and $c$ are stored far apart, beyond the size of a cache line, it leads to non-contiguous memory access and cache miss. 
The complexity of graph data arrangement makes the efficient processing of graphs a difficult task. Appropriate data distribution can indeed be arranged to minimize cache misses, which in turn reduces the frequency of data transfers from the main memory to the CPU cache. This can improve CPU performance by reducing the time cost of waiting for data to be fetched from main memory and improving the overall throughput of the CPU.

Locality refers to the tendency of a program to access data that is close to other data that has already been accessed. In the case of graph data processing, graph locality means that the memory locations of a vertex and its neighbors are likely to be close to each other, which can enable efficient caching and reduce the number of cache misses.
However, the irregular distribution of graph data can make it difficult to achieve good locality, which in turn can result in more frequent cache misses and reduced performance.
One of our goals is to improve graph locality by arranging vertex order.
}


\section{Problem Statement}\label{sec:state}



\eat{It looks like the problem has not been formally defined. Define the problem with x* and their sub}

Before introducing the vertex reordering method, we first clarify the properties that function $\F(\cdot)$ required to have.
Next, we formally define the vertex reordering problem and introduce a metric function designed to assess the efficiency of processing order in accelerating iterative computation.
Finally, we propose the objective function of our paper based on the metric function. 


In fact, if utilizing the updated state of neighbors to accelerate the iterative algorithm, the vertex states change monotonically increasing or decreasing, i.e., the update function $\F(\cdot)$ should be a monotonically increasing function. 

\stitle{Monotonic}. In iterative computation, considering a monotonically increasing function $\F(\cdot)$, the vertex states progressively decrease (or increase), moving closer to a convergent state. During the update of each vertex, its state value diminishes (or grows), advancing towards convergence, especially when the state value of its neighbors is smaller (or larger).
Then, we have the following inequality 
\begin{equation}\label{eq:mot}
\begin{aligned}
    &\begin{aligned}
        &\check{x}_v=\F(x_{u_1}, \cdots \check{x}_{u_i}, \cdots x_{u_{|IN(v)|}}) \\
        &\leq \hat{x}_v=\F(x_{u_1}, \cdots \hat{x}_{u_i}, \cdots x_{u_{|IN(v)|}}) 
    \end{aligned} 
 \\ &{if} \quad \check{x}_{u_i} \leq \hat{x}_{u_i}
 \end{aligned}
\end{equation}
where $\{u_1, \cdots u_i, \cdots u_{IN(v)}\}$ is the set of $v$'s incoming neighbors, $\check{x}_v, \check{x}_{u_i}$ and $\hat{x}_v, \hat{x}_{u_i}$ are two state values of $v$ and $u_i$ respectively. 

There is a broad range of iterative algorithms with monotonic vertex update functions, including SSSP \eat{(Single-Source Shortest Path)}, Connected Components (CC)\cite{hsu1994cc}, Single-Source Weighted Shortest Path (SSWP)\cite{yang2020sswp}, PageRank, Penalized Hitting Probability (PHP)\cite{wu2014php}, Adsorption\cite{baluja2008Adsorption}, and more. Monotonic property provides a wider optimization space for iterative computation and has been emphasized in many graph analysis works\cite{feng2021risgraph,vora2017kickstarter,wang2020powerlog,zhang2013maiter}. 
We also organize our vertex reordering method according to the monotonic function $\F(\cdot)$. The subsequent lemma is derived directly from the monotonicity of $\F(\cdot)$.

\begin{lemma}\label{lem:mont}
    Given the monotonically increasing function $\F(\cdot)$ of an iterative algorithm, $\check{x}_v=\F(x_{u_1}, \cdots \check{x}_{u_i}, \cdots x_{u_{|IN(v)|}})$ and $\hat{x}_v=\F(x_{u_1}, \cdots \hat{x}_{u_i}, \cdots x_{u_{|IN(v)|}})$ are two state values of $v$, where $\{u_1, \cdots u_i, \cdots u_{|IN(v)|}\}$ are the incoming neighbor of $v$, $\check{x}_{u_i}$ and $\hat{x}_{u_i}$ are two state values of $i$-th incoming neighbors of $v$, $x^*_v$ and $x^*_{u_i}$ are the converged states of $v$ and $u_i$ respectively. Then we have \eat{$u_i$ is only one neighbor?[modified]}
    \begin{equation}\label{eq:deltmot}
        |x^*_v-\check{x}_v| \leq |x^*_v-\hat{x}_v| \quad if \quad |x^*_{u_i}-\check{x}_{u_i}| \leq |x^*_{u_i}-\hat{x}_{u_i}|
    \end{equation}
\end{lemma}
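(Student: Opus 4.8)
The plan is to reduce the claim to a statement about a single-variable increasing map and then exploit the fact that, in a monotonically converging iteration, every iterate of a vertex stays on one fixed side of its limit. Concretely, I would freeze all in-neighbors of $v$ except $u_i$ and view the update as the one-variable function $g(t)=\F(x_{u_1},\dots,t,\dots,x_{u_{|IN(v)|}})$, which is increasing by the monotonicity property \eqref{eq:mot}. Then $\check{x}_v=g(\check{x}_{u_i})$, $\hat{x}_v=g(\hat{x}_{u_i})$, and, by the fixed-point characterization of convergence, $x^*_v=g(x^*_{u_i})$. The whole argument then becomes: does the increasing map $g$ send the point closer to $x^*_{u_i}$ to the point closer to $x^*_v=g(x^*_{u_i})$?

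First I would eliminate the absolute values. Because the states approach convergence monotonically (the property asserted just before the lemma), the iterates $\check{x}_{u_i}$ and $\hat{x}_{u_i}$ lie on a common side of $x^*_{u_i}$. Treating the increasing regime (both $\le x^*_{u_i}$; the decreasing regime is symmetric), the hypothesis $|x^*_{u_i}-\check{x}_{u_i}|\le|x^*_{u_i}-\hat{x}_{u_i}|$ collapses to the plain chain $\hat{x}_{u_i}\le\check{x}_{u_i}\le x^*_{u_i}$, so $\check{x}_{u_i}$ sits \emph{between} $\hat{x}_{u_i}$ and the limit. Next I would push this ordering through $g$: monotonicity preserves the chain, giving $\hat{x}_v=g(\hat{x}_{u_i})\le\check{x}_v=g(\check{x}_{u_i})\le g(x^*_{u_i})=x^*_v$. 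Hence $\check{x}_v$ lies between $\hat{x}_v$ and $x^*_v$, and dropping the now sign-definite absolute values yields $|x^*_v-\check{x}_v|=x^*_v-\check{x}_v\le x^*_v-\hat{x}_v=|x^*_v-\hat{x}_v|$, which is exactly \eqref{eq:deltmot}. The decreasing regime reverses every inequality and reaches the same conclusion.

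The main obstacle is precisely the elimination of the absolute values: monotonicity of $\F$ by itself does \emph{not} suffice, since an increasing $g$ need not contract distances toward a point unless its arguments sit on one side of that point (a steep-then-flat increasing profile would break the distance comparison). The load-bearing fact is therefore the monotone-convergence assumption that each vertex's iterates approach its limit from a single direction; this is what converts the absolute-value hypothesis into a genuine order relation and makes betweenness stable under $g$. A secondary point I would state carefully is the identity $x^*_v=g(x^*_{u_i})$, i.e.\ that $v$'s converged value is the image under the update of $u_i$'s converged value with the remaining neighbors held fixed; I would justify this from the fixed-point equation defining convergence, so that the endpoint of the preserved chain is genuinely $x^*_v$ and the final distance inequality lands on the quantity the lemma names.
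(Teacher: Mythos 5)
Your overall strategy coincides with the paper's: split into the increasing and decreasing regimes, use one-sidedness of the iterates (monotone convergence) to turn the absolute-value hypothesis into the plain ordering $\hat{x}_{u_i}\leq\check{x}_{u_i}\leq x^*_{u_i}$, and push that ordering through the monotone update to obtain $\hat{x}_v\leq\check{x}_v$. Those steps are exactly the paper's Case 1 (and, with all inequalities reversed, its Case 2), and they are correct.

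The gap is in your ``secondary point,'' the identity $x^*_v=g(x^*_{u_i})$. This identity is false in general, and the justification you propose (the fixed-point equation defining convergence) does not deliver it. The fixed-point equation reads $x^*_v=\F(x^*_{u_1},\dots,x^*_{u_i},\dots,x^*_{u_{|IN(v)|}})$, i.e.\ it requires \emph{all} in-neighbors to sit at their converged values, whereas your $g$ freezes the other in-neighbors at the arbitrary, generally non-converged values $x_{u_1},\dots,x_{u_{|IN(v)|}}$ appearing in the lemma. Hence $g(x^*_{u_i})=\F(x_{u_1},\dots,x^*_{u_i},\dots,x_{u_{|IN(v)|}})$ need not equal $x^*_v$; in the increasing regime it is only bounded by it. The repair is easy, in either of two ways: (i) weaken the equality to the inequality $g(x^*_{u_i})\leq x^*_v$, which follows from monotonicity of $\F$ together with $x_{u_j}\leq x^*_{u_j}$ for the frozen neighbors, so the chain $\hat{x}_v\leq\check{x}_v\leq g(x^*_{u_i})\leq x^*_v$ still closes; or (ii) do what the paper does and bypass $g(x^*_{u_i})$ entirely: in the increasing regime, monotone convergence already gives $x^*_v\geq\max\{\check{x}_v,\hat{x}_v\}$ directly (every state of $v$ produced during the iteration lies below its limit), and then $\hat{x}_v\leq\check{x}_v\leq x^*_v$ immediately yields $|x^*_v-\check{x}_v|\leq|x^*_v-\hat{x}_v|$. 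With either repair your proof is correct and essentially identical to the paper's; as written, the final step rests on an identity that fails.
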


\begin{proof}
    There are two cases in the proof, 1) the vertex state continues to increase, and 2) the vertex state continues to decrease.
    
    When the state values of the vertices continue increasing during the iterative computation, we have $x^*_{u_i} \geq \textit{max}\{\check{x}_{u_i}, \hat{x}_{u_i}\}$ and $x^*_v \geq \textit{max}\{\check{x}_v, \hat{x}_v\}$. Since $|x^*_{u_i}-\check{x}_{u_i}| \leq |x^*_{u_i}-\hat{x}_{u_i}|$, then we have $\check{x}_{u_i} \geq \hat{x}_{u_i}$. According to inequation \ref{eq:mot}, we have $\check{x}_v \geq \hat{x}_v$. It means that $\check{x}_v$ is closer to the converged state, i.e., $|x^*_v-\check{x}_v| \leq |x^*_v-\hat{x}_v|$.

    Similarly, when the state values of the vertices continue decreasing during the iterative computation, we have $x^*_{u_i} \leq \textit{min}\{\check{x}_{u_i}, \hat{x}_{u_i}\}$ and $x^*_v \leq \textit{min}\{\check{x}_v, \hat{x}_v\}$. Since $|x^*_{u_i}-\check{x}_{u_i}| \leq |x^*_{u_i}-\hat{x}_{u_i}|$, then we have $\check{x}_{u_i} \leq \hat{x}_{u_i}$. According to inequation \ref{eq:mot}, we have $\check{x}_v \leq \hat{x}_v$. It means that $\check{x}_v$ is closer to the converged state, i.e., $|x^*_v-\check{x}_v| \leq |x^*_v-\hat{x}_v|$.
\end{proof}

Based on the above lemma, we propose the following theorem to explain why using the updated state of incoming neighbors from the current iteration accelerates iterative computation.


\begin{theorem}\label{thm:mont}
    Given the graph $G$, monotonically increasing update function $\F(\cdot)$ and two processing orders $O^1_{V}=[a, \cdots v,u, \cdots z]$,  $O^2_{V}=[a, \cdots u,v, \cdots z]$. There is an edge $(u,v)$ and no edge $(v,u)$ in $G$. The difference between $O^1_{V}$ and $O^2_{V}$ is the ordinal number of $u$ and $v$, i.e., $\ord(u)>\ord(v)$ in $O^1_{V}$ and $\ord(u)<\ord(v)$ in $O^2_{V}$, which results in $O^2_{V}$ having one more positive edge than $O^1_{V}$. Then, we have 
    \begin{equation}
    \begin{aligned}
    \hat{x}^k_v&=\F_{O^1_V}(x^{k-1}_u, \cdots) \\
    \check{x}^k_v&=\F_{O^2_V}(x^{k}_u, \cdots)
    \end{aligned}
    \quad \text{and} \quad
    \begin{aligned}
        |x^*_v-\hat{x}^k_v| \geq |x^*_v-\check{x}^k_v|
    \end{aligned}
    \end{equation}
    where $x^*_v$ is the converged state of $v$, $\check{x}^k_v$ and $\hat{x}^k_v$ are the state values of $v$ after $k$-th iteration resulted from $O^1_{V}$ and $O^2_{V}$ respectively.
\end{theorem}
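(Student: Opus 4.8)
The plan is to reduce the whole statement to a single application of Lemma~\ref{lem:mont}. The two orders $O^1_V$ and $O^2_V$ are identical except that $u$ and $v$ exchange positions, so when $v$ is updated in round $k$ every in-neighbor of $v$ other than $u$ occupies the same position relative to $v$ in both orders and therefore contributes the same state value to the update of $v$. The sole difference lies in the argument coming from $u$: in $O^1_V$, where $\ord(u) > \ord(v)$, the edge $(u,v)$ is negative and $u$ has not yet been processed in round $k$, so $v$ gathers $x^{k-1}_u$; in $O^2_V$, where $\ord(u) < \ord(v)$, the edge is positive and $u$ has already been updated, so $v$ gathers $x^k_u$. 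Hence $\hat{x}^k_v$ and $\check{x}^k_v$ are values of $\F(\cdot)$ that agree in all arguments except the one indexed by $u$, and it suffices to compare $x^{k-1}_u$ and $x^k_u$ against the converged state $x^*_u$.

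Next I would establish the monotone-convergence fact $|x^*_u - x^k_u| \le |x^*_u - x^{k-1}_u|$. Under the standing assumption that the states evolve monotonically toward the fixed point (non-increasing and bounded below by $x^*$, or non-decreasing and bounded above by $x^*$), the value $x^k_u$ lies between $x^{k-1}_u$ and $x^*_u$, so a single additional update of $u$ cannot move its state away from $x^*_u$. This is the step that carries the real content: it asserts that the round-$k$ state of $u$ (its value after being processed in $O^2_V$) is at least as close to convergence as its round-$(k-1)$ state.

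Finally I would invoke Lemma~\ref{lem:mont} with the distinguished neighbor $u_i := u$, setting $\check{x}_{u_i} := x^k_u$ and $\hat{x}_{u_i} := x^{k-1}_u$ and holding all other in-neighbor states of $v$ fixed at their common values. The hypothesis of the lemma, $|x^*_{u_i} - \check{x}_{u_i}| \le |x^*_{u_i} - \hat{x}_{u_i}|$, is exactly the inequality established in the previous step, so the lemma immediately yields $|x^*_v - \check{x}^k_v| \le |x^*_v - \hat{x}^k_v|$, which is the claim.

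The main obstacle I anticipate is justifying rigorously that every argument of $v$'s update other than the one from $u$ is genuinely identical across the two orders, i.e. that swapping $u$ and $v$ does not perturb, through cascading effects within round $k$, the round-$k$ states of $v$'s other already-processed in-neighbors. The cleanest route is to take $u$ and $v$ to be adjacent in the processing order, so that the swap is a transposition affecting no third vertex's set of already-updated neighbors at the moment that vertex is processed; then the ``all else equal'' reading encoded by the matching $\cdots$ in the two expressions is exactly valid and the single-argument comparison above goes through unchanged.
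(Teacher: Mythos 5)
Your proposal is correct and follows essentially the same route as the paper's proof: identify that the only differing argument to $\F(\cdot)$ is $x^{k-1}_u$ versus $x^k_u$, note that monotone evolution makes the updated state of $u$ at least as close to $x^*_u$, and then apply Lemma~\ref{lem:mont} with $u$ as the distinguished in-neighbor. Your final paragraph is in fact more careful than the paper, which silently assumes the ``all other arguments are identical'' reading that you justify via the adjacency of $u$ and $v$ in the two orders (together with the no-edge-$(v,u)$ hypothesis ensuring $x^k_u$ itself is unaffected by the swap).
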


\begin{proof}
There is an edge $(u,v)$ and no edge $(v,u)$ in $G$. Therefore, compared with $O^1_{V}$, 
in $O^2_{V}$, the ordinal number of $v$ is larger than $u$ and $u$ has been updated before updating $v$. 
Thus when updating $v$, we can use the updated state of $u$ from the current iteration, i.e., $\hat{x}^k_v=\F_{O^1_V}(x^{k}_u, \cdots)$. While in $O^1_{V}$, we have to use the state of $u$ from the previous iteration, i.e., $\check{x}^k_v=\F_{O^2_V}(x^{k-1}_u, \cdots)$. 

Since $\F(\cdot)$ is a monotonically increasing function, the states of the vertices gradually progress toward the convergence states. The state of updated $u$ is closer to converged state, i.e., $|x^*_u-x^{k-1}_v|\geq |x^*_u-x^{k}_v|$. According to the Lemma \ref{lem:mont}, we have $|x^*_v-\hat{x}^k_v| \geq |x^*_v-\check{x}^k_v|$.
\end{proof}

The theorem presented above highlights that leveraging the updated state of incoming neighbors from the current iteration accelerates iterative computation. The more updated incoming neighbors are involved, corresponding to a higher count of positive edges, results in a faster convergence of the vertex. This implies that the speed of vertex convergence is intricately linked to the order in which vertices are processed. Consequently, the promotion of iterative computation can be achieved through the adoption of an efficient processing order, ultimately reducing the required number of iteration rounds. In summary, the problem addressed in this paper can be formulated as follows.

\etitle{Problem Formulation.} Find an optimal graph processing order, denoted as $O_V=\R(G)$, that can minimize the number of iteration rounds $k$ required by the iterative algorithm when performing $\F(\cdot)$ with $O_V$, i.e.,
\begin{equation}
    O_V=\underset{k}{\arg\min} (|X^*_V \ominus \big(X^k_V=\F^k_{\R(G)}(X^0_V)\big)| \leq \epsilon).
\end{equation}
where $X^*_V=\{x^*_v| v\in V\}$ is the state value set of converged vertex, $X^k_V=\F^k_{\R(G)}(X^0_V)$ is the set of vertex states after $k$ iterations using function $\F(\cdot)$ with $O_V=\R(G)$ as the processing order, $\R(G)$ is the graph reordering function that can return a permutation of vertices, $\epsilon \geq 0$ is the maximum tolerable difference between the vertex convergence state and the real state,
$\ominus$ computes the difference between $X^*_V$ and $X^k_V$.
In general, there are two implementations of $|X^*_V \ominus X^k_V|$, \romannumeral1) $\text{max}(|x^*_v - x^k_v|, v\in V)$, e.g., in SSSP, \romannumeral2) $\sum_{v\in V}(|x^*_v - x^k_v|)$, e.g., in PageRank.

However, before iterative computation, it is impossible to know the number of iterations for a given processing order $O_V$. Thus, we are not able to know which processing order $O_V$ returned by $\R(G)$ is optimal. Therefore, it is required to design another metric to measure the quality of processing orders.


As observed in Section \ref{sec:intro} and supported by Theorem \ref{thm:mont}, 
the processing of each vertex shows that more updated incoming neighbors corresponds to fewer iteration rounds and shorter runtime in iterative computation. Based on this insight and intuition, we propose the following measure function.

\etitle{Metric Function}. Given a graph processing order $O_V$, we use the following measure function to measure the efficiency of processing order $O_V$ in accelerating iterative computation.
\begin{equation}\label{eq:objf}
\begin{aligned}
     &\begin{aligned}
    \M(O_V)=\sum_{v\in V}\sum_{u\in IN(v)}\chi(u,v)=\sum_{(u,v)\in E}\chi(u,v)  \\
    \end{aligned}
\\ &{where}
    \\ &\chi(u,v)=\left\{
    \begin{aligned}
    1 & \qquad if \quad p(u)<p(v), \\
    0 & \qquad if \quad p(u)>p(v). \\
    \end{aligned}
    \right.
\end{aligned}
\end{equation}

It can be seen that $\M(O_V)$ counts the sum of incoming neighbors with smaller ordinal numbers for each vertex. In other words, the value of $\M(O_V)$ is the number of edges where the source vertex has a smaller ordinal number than the destination vertex, indicating positive edges. The function $\M(O_V)$ achieves its maximum value when all vertices' incoming neighbors are in front of them, i.e., the incoming neighbors have smaller ordinal numbers. In this scenario, as all incoming neighbors are positioned in front of each vertex, updating each vertex ensures that its incoming neighbors have been updated and are closer to the convergence state. Leveraging the updated state of incoming neighbors propels the vertex closer to convergence, thereby accelerating the iterative computation.


\eat{I think: What we rely on is the monotonicity of $F$, which can be either increased(pagerank) or decreased(sssp). Although only single increase is proven before, the proof of single decrease is the same, but it should be mentioned here? And is it better to use monotonic $F$ instead of monotonically increasing in the following summary?}

Based on the measure function, we propose the following objective function.

\etitle{Objective Function}. Since the value of $\M(\cdot)$ serves as a metric to measure the efficiency of processing order in accelerating iterative computation, the objective of this paper is to identify the optimal graph processing reorder, denoted as $O_V=\R(G)$, that maximizes the $\M(\cdot)$ value. 
\begin{equation}
    O_V=\underset{\R(G)}{\arg\max}\ \M(\R(G)),
\end{equation}
where $\R(G)$ returns a permutation of vertices in $V$.

\etitle{NP-hard \& NP-approximate}. Derived from the objective function, our goal can be seen as identifying an optimal processing order that maximizes the number of positive edges. 
This is similar to the \textit{topological sort}. The topological ordering of a directed graph is a linear ordering of its vertices such that for every directed edge $(u,v)$ from vertex $u$ to vertex $v$, $u$ comes before $v$ in the ordering. 
Therefore, if the graph is a directed acyclic graph, then we can use the topological sorting algorithm\cite{topological-sort} to reorder the vertices. 
In this case, $\M(\R(G))$ will achieve the maximum value $|E|$. 
    
A more viable approach involves first generating a directed acyclic subgraph from the cyclic graph by selectively removing edges, followed by the application of the topological sorting algorithm. Since the primary objective of reordering is to enhance the number of positive edges, preserving as many edges as possible in the generated directed acyclic subgraph is crucial. This problem is known as the Maximum Acyclic Subgraph (MAS) problem, and it has been demonstrated to be both NP-hard and NP-approximate\cite{karp2010reducibility,guruswami2008beating,lucan2016exploring}. Consequently, maximizing the value of $\M(\R(G))$ is inherently an NP-hard and NP-approximate problem.

In theory, \cite{guruswami2008beating} has proven that the lower bound of the number of edges that can be preserved in acyclic subgraphs is $|E|/2$. However, in practice, there have been many works \cite{hassin1994approximations,berger1997tight,charikar2007advantage,guruswami2008beating,cvetkovic2020maximal} that \eat{can find larger directed acyclic subgraphs.}have demonstrated the ability to identify larger directed acyclic subgraphs. Specifically, \cite{goldsmith2008computational,cormen2001introduction,liu2018cutting} derive a directed acyclic subgraph by deleting vertices or edges from Conditional Preference Networks, and \cite{cvetkovic2020maximal} \eat{finds}identifies the maximum acyclic subgraph based on matrix. \eat{However}Despite these achievements, these methods are not suitable for our problem. They are either designed for \eat{a specific scene}specific scenarios or \eat{cannot be used in}are impractical for large-scale graph data. 


On the other hand, even if the maximum acyclic subgraph is obtained and topological sorting is employed to derive the processing order that maximizes $\M(\R(G))$, conventional topological sorting algorithms may overlook the neighbor relationships between vertices. This oversight can result in two connected vertices being positioned far apart from each other in the processing order. During iterative computation, frequent access to the neighbors of each vertex is common. The CPU cache hit ratio tends to decrease when a vertex is situated far away from its neighbors \cite{wei2016gorder}, which reduces computational efficiency. To overcome the above problems, 
in this paper, we introduce an efficient graph reordering method called \go.

\section{Reordering Method}\label{sec:gograph}

In this section, we introduce the vertex reordering algorithm, \go, an efficient and effective vertex reordering method. 

\subsection{\go}
Due to the complex links between vertices in the graph, the change in the ordinal number of a vertex may result in some changes in edges, \ie,  some positive edges become negative and some negative edges become positive, 
which result\eat{result?} in an unpredictable $\M(\cdot)$\eat{value is not necessary}. Therefore, it is difficult to reorder the vertex 
from the perspective of the whole graph. In \go, we adopt a divide-and-conquer method for vertex reordering.\eat{, which first divides the graph into subgraphs, then reorders vertices within and between subgraphs, and finally obtains the whole vertex processing order.} The method initially extracts the high-degree vertices from the graph, then divides the graph into subgraphs and reorders the vertices within and between these subgraphs, yielding the complete vertex processing order.

The overview of \go is illustrated in Fig. \ref{fig:go-eg}. It \eat{consists of 4 steps as follows.}comprises following steps, 1) \textit{extract high-degree \& isolated vertices}, 2) \textit{divide other vertices}, 3) \textit{reorder vertices within subgraphs}, 4) \textit{reorder subgraphs} and 5) \textit{insert high-degree \& isolated vertices.} 
\eat{
\begin{algorithm} 
    \caption{Extract high-degree vertices}
    \KwIn {A graph \(G(V, E)\), vertex degree threshold \(th\)}
    \KwOut{High-degree vertices set \(V_{HD}\), isolated vertex set \(V_{ISO}\), remaining vertex set \(V_{RM}\)}
    \For{vertex \(v\) in \(V\)}{
        \(flag[v] = 1\); \\
        \If{\(|OUT(v)| > th\) or \(|IN(v)| > th\)}
        {
            \(flag[v] = 0\); \\
            \(V_{HD} \leftarrow V_{HD} \cup \{v\}\)\;
        }
    }
    \For{vertex \(v\) in \(V\)}
    {
        \For{vertex \(w\) in \(v.\text{neighbor}\)}
        {
            \If{\(flag[w] = 0\)}
            {
                \(V_{RM} \leftarrow V_{RM} \cup \{w\}\)\;
            }
        }
    }
    \(V_{ISO} \leftarrow V \setminus (V_{HD} \cup V_{RM})\)\;
\end{algorithm}
}

Next, we will introduce the intuitions and details of each step.

\begin{figure}
\centering
  \begin{subfigure}{0.24\textwidth}
    \centering
    \includegraphics[width=0.65\linewidth]{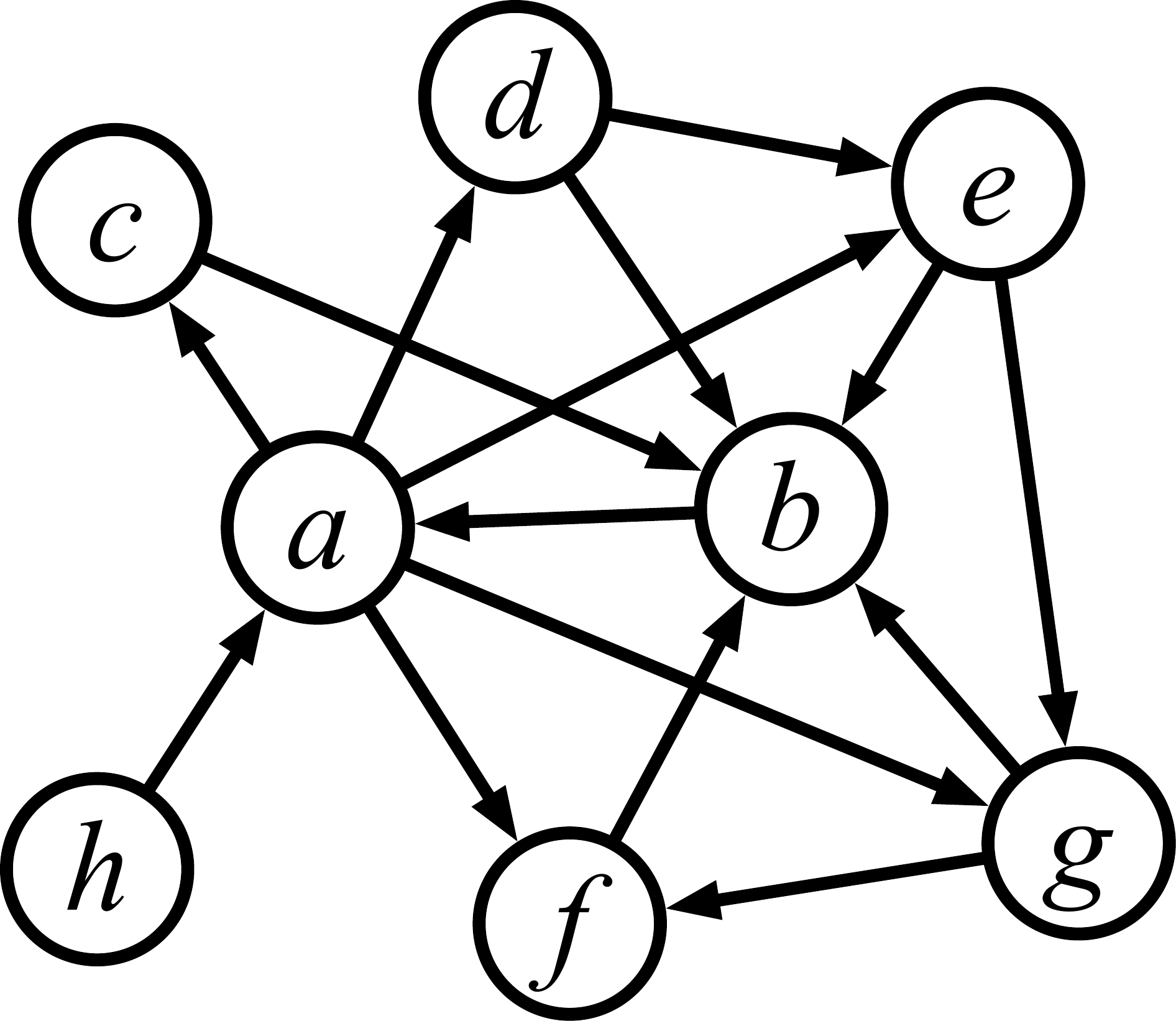}
    \caption{The initial graph.}
    \label{fig:sec3-a}
  \end{subfigure}%
  \begin{subfigure}{0.24\textwidth}
    \centering
    \includegraphics[width=0.65\linewidth]{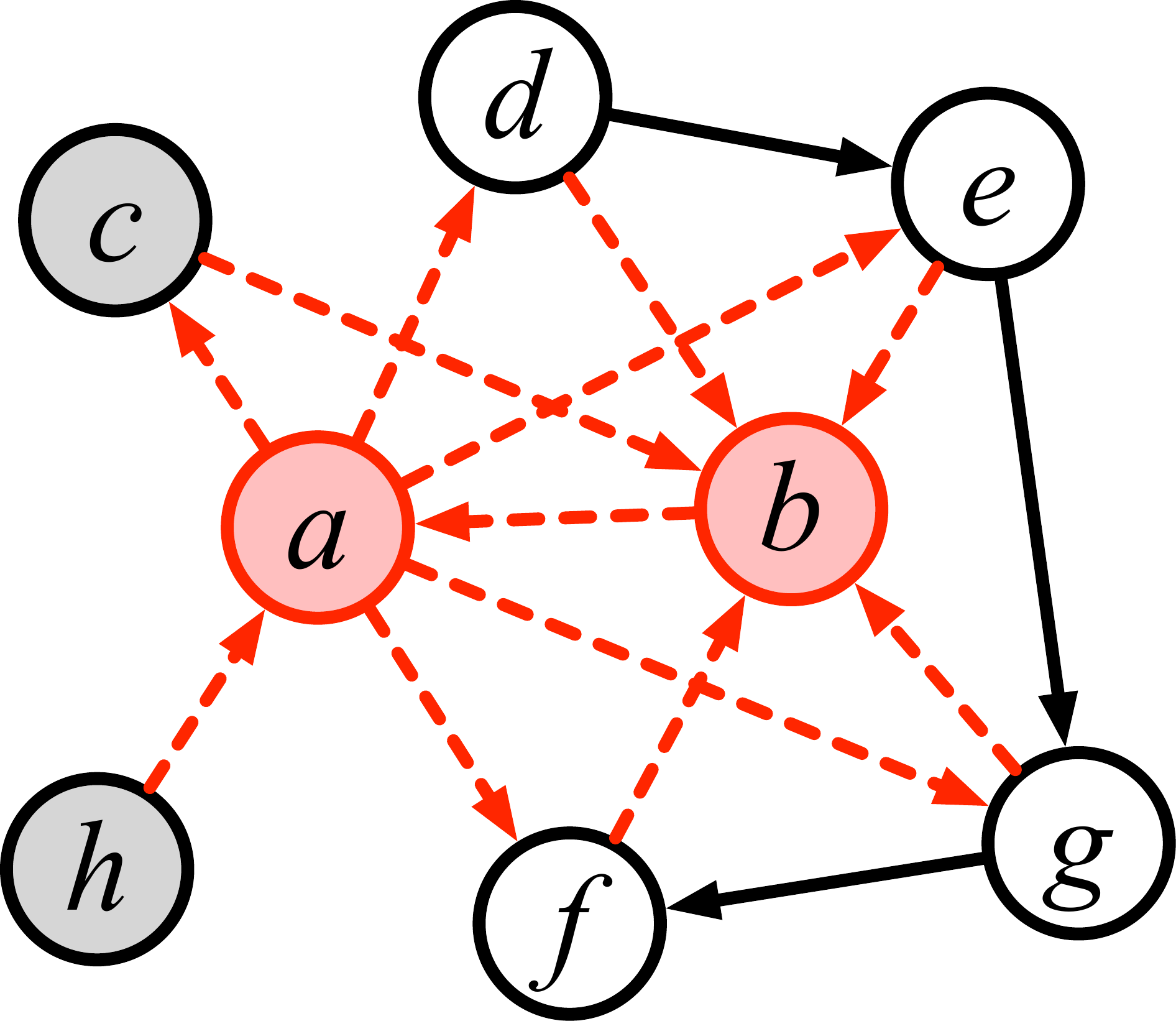}
    \caption{Extract high-degree vertices.}
    \label{fig:sec3-b}
  \end{subfigure}
  
  \vspace{\baselineskip} 
  
  \begin{subfigure}{0.18\textwidth}
    \centering
    \includegraphics[width=0.8\linewidth]{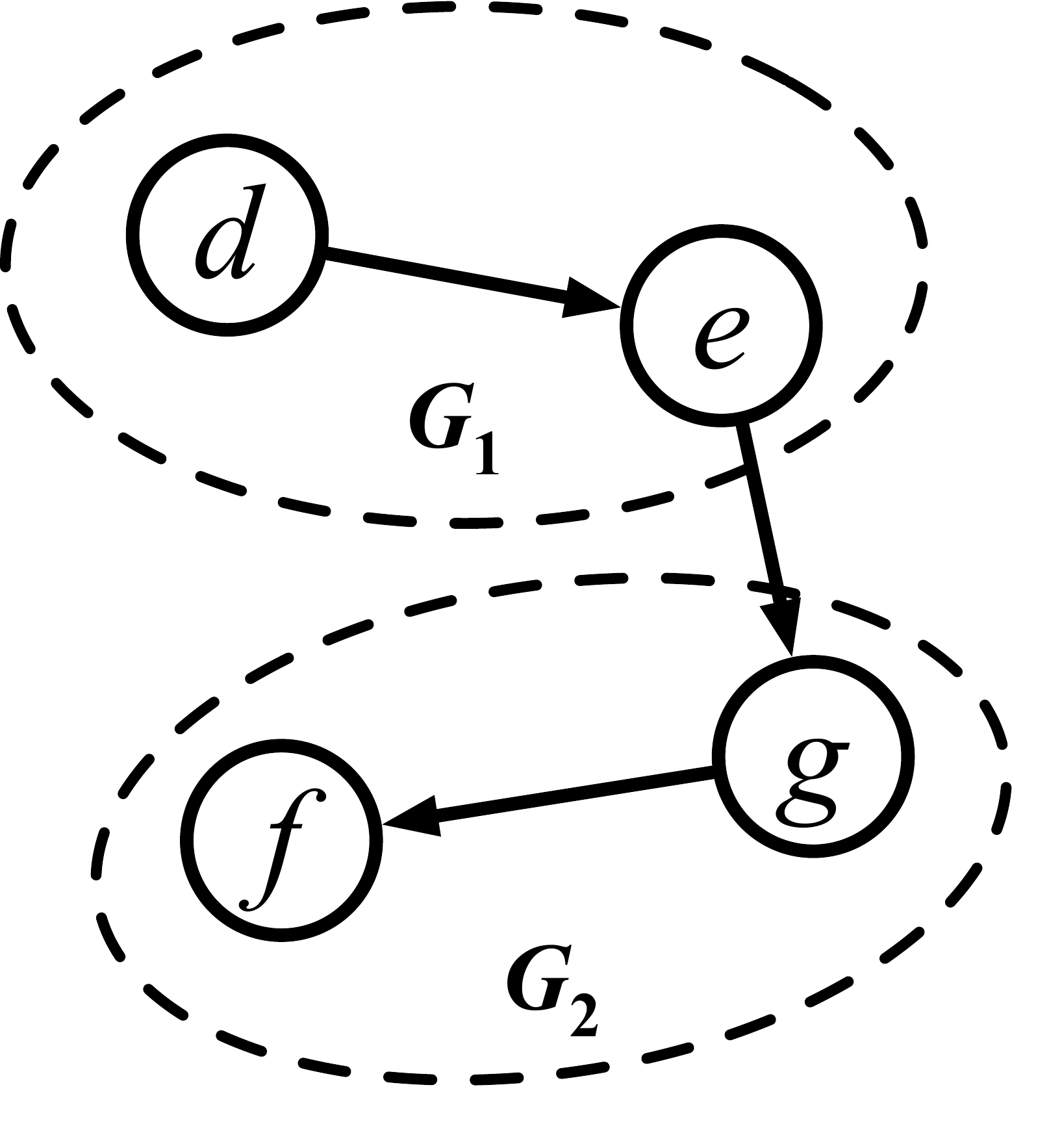}
    \caption{Divide the remaining vertices}
    \label{fig:sec3-c}
  \end{subfigure}%
  \hspace{0.2in}
  \begin{minipage}{0.5\linewidth}
  \vspace{-1.4in}
  \begin{subfigure}{1\textwidth}
    \includegraphics[width=0.75\linewidth]{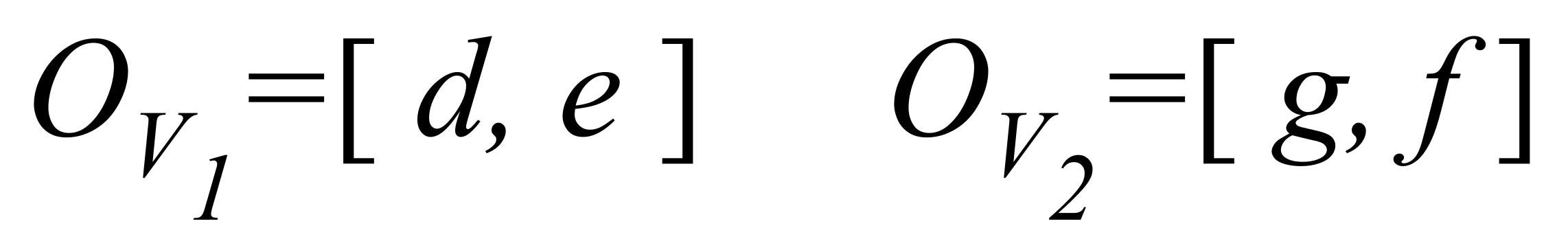}
    \caption{Reordering vertices intra-subgraphs.}
    \label{fig:sec3-d}
    \vspace{0.25in}
    \includegraphics[width=1\linewidth]{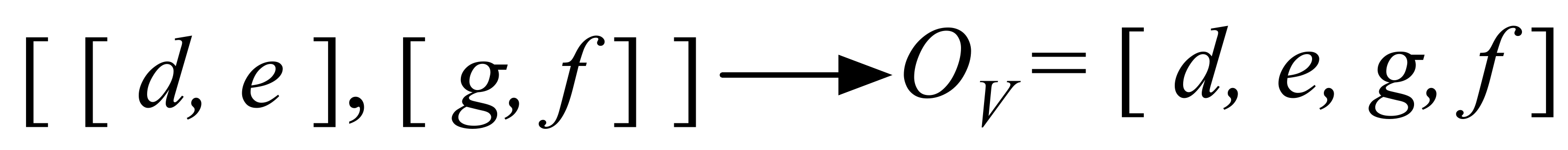}
    \caption{Reordering vertices inter-subgraphs.}
    \label{fig:sec3-e}
  \end{subfigure}
  \end{minipage}
  
  \begin{subfigure}{0.48\textwidth}
    \centering
    \includegraphics[width=0.95\linewidth]{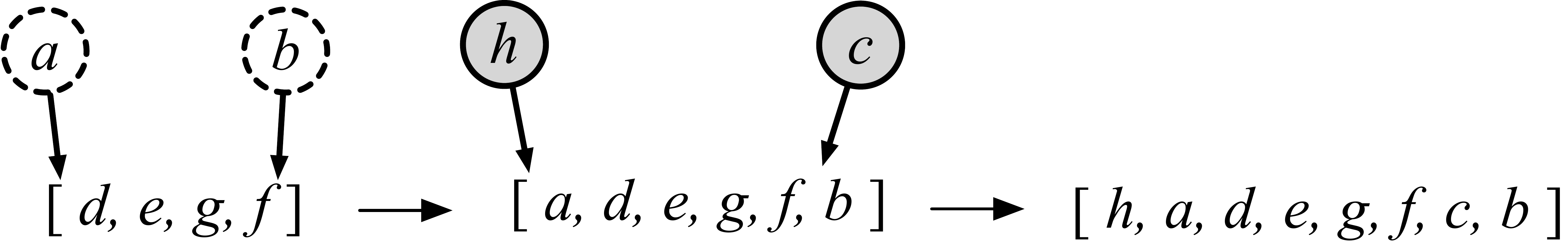}
    \caption{Inserting high-degreed/isolated vertices into the processing order.}
    \label{fig:sec3-f}
  \end{subfigure}
  \caption{An illustrative example of \go}
  \label{fig:go-eg}
\end{figure}

\etitle{Extract high-degree vertices}. It is widely acknowledged that most real-world graphs exhibit a power-law property, \eat{wherein a small number of vertices possess exceedingly high degrees, while a large proportion of vertices have low degrees.}where a very small number of vertices have extremely high degrees, while the majority of vertices have lower degrees. This property \eat{introduces a challenge during}{poses a challenge for} the vertex reordering process. 
In the process of reordering, the placement of high-degree vertices in the reordering subsequence significantly influences the positioning decisions of numerous lower-order vertices. Thus we first remove the high-degree vertice $V_{HD}$ and their edges $E_{HD}$ from the graph.
\eat{This correlation not only increases the difficulty of the reordering but may also reduce the overall quality of the reordering results. This correlation not only increases the complexity of the reordering but may also reduce the overall quality of the reordering results.}

We take 
Fig. \ref{fig:sec3-a} as an example to illustrate this point.  
If the high-degree vertices are not removed, 
we may first reorder $a$ and $b$ and assign a higher ordinal number to $a$ than $b$ to make edge $(b, a)$ positive.
Consequently, we obtain a temporary processing order $[b, a]$. Then we insert other vertices into $[b, a]$. In order to achieve the optimal vertex processing order, 
when inserting each vertex, we try to maximize $\M(\cdot)$ value.
Ultimately, the optimal vertex processing order is $O^1_{V}=[d, e, c, b, h, a, g, f]$. 
On the contrary, 
after removing $a$ and $b$, we first reorder $d, e, f$, and $g$, where $h$ and $c$ become isolated and also be removed. It is evident that the optimal processing order is $[d,e,g,f]$.
Finally, we insert $a, b$ and $c, h$ into $[d,e,g,f]$ and obtain the processing order $O^2_V=[h, a, c, d, e, g, f, b]$. The $\M(\cdot)$ values 
of these two orders are $\M(O^1_V)=10$ and $\M(O^2_V)=14$. \eat{Obviously}Based on the previous definition of $\M(\cdot)$, $O^2_V$ is a better processing order.
Although the edge $(b, a)$ is sacrificed in $O^2_V$ due to $a$'s higher ordinal number than $b$, it brings 
more positive edges.
\eat{It is worth noting that even considering the position of the high-degree vertices in the subsequence during the ordering process rather than initially has an impact on the decision about the position of a large number of low-degree vertices.}

\eat{Taking the graph in Fig. \ref{fig:sec3-a} as an example, if we first reorder the high-degree vertices $a$ and $b$, we assign a higher ordinal number to $a$ than to $b$. Consequently, we obtain a subsequence $[b, a]$ in the processing order, given the presence of an edge $(b, a)$ from $b$ to $a$.
Subsequently, we reorder the remaining vertices by inserting them into $[b, a]$.
Finally, the optimal vertex processing order is $O^1_{V}=[d, e, c, b, h, a, g, f]$.  \yaof{note:It's not clear how to reorder to get this sequence}
On the contrary, if we initially extract $a$ and $b$ from the graph and reorder the remaining vertices, we obtain the subsequence $[h,c,d,e,g,f]$. Finally, we insert $a$ and $b$ into $[h,c,d,e,g,f]$ and obtain the processing order $O^2_V=[h, a, c, d, e, g, f, b]$. The $\M$ value for these two processing orders are $\M(O^1_V)=10$ and $\M(O^2_V)=14$. 
Obviously, $O^2_V$ is a better order.
Although the edge $(b, a)$ is sacrificed in $O^2_V$ due to $a$'s higher ordinal number than $b$, it brings the source's rank of a large number of edges smaller than the destination's rank. }


It is notable that after removing the high-degree vertices and their edges from the graph, there will appear some isolated vertices that have no edges with other vertices. As illustrated in Fig. \ref{fig:sec3-b}, vertex $c$ and $h$ become isolated vertices after removing $a, b$ since they only connect with $a$ and $b$. \eat{Because there are no edges between isolated vertices and other vertices, they have no effect on reordering other vertices. So we remove isolated vertices $V_{I}$ by the way.}Since isolated vertices have no edges connecting them to other vertices, they do not influence the reordering of other vertices. Therefore, we eliminate isolated vertices, denoted as $V_{I}$.

\etitle{Divide other vertices}. Due to the complex interconnection between vertices, although the complexity of the graph structure is reduced after removing high-degree vertices and isolated vertices, it is still difficult to reorder vertices from the perspective of the whole graph. Therefore, to simplify reordering, we divide the remaining graph into smaller subgraphs.
Then design the reordering method considering both intra- and inter-subgraphs perspectives.
The graph dividing result requires that there are as many edges within the subgraph as possible and as few edges between the subgraphs as possible. The reason is explained as follows.

Firstly, when reordering vertices within a subgraph, if there are few edges between vertices intra-subgraph and many edges between vertices inter-subgraph, the better local reordering results may not bring a better global reordering result. Instead, the result of global reordering is determined by the reordering result of vertices between subgraphs. 

\eat{Secondly, cache misses have a significant impact on the performance of graph analysis systems. 
Excessive cache misses can lead to high bandwidth pressure and a reduction in algorithm performance. Due to the frequent neighbor access in the process of graph analysis, to reduce cache misses, vertices that are frequently accessed together should be arranged together.
} 
Secondly, \gr{\cite{wei2016gorder} points out that the locality of vertices in the \\ processing order has a great impact on CPU cache performance. In most graph algorithms, accessing a vertex often necessitates accessing its neighbors. For example, in PageRank, updating a vertex requires accessing the states of its incoming neighbors. Sequential accesses to the neighbor states of vertices stored in more distant locations in physical memory may result in serious cache misses. Therefore, vertices connecting each other should be as close as possible in the processing order.} Thus, we should group the vertices that are closely connected into the same subgraphs.


Therefore, when dividing the graph, there should be as many edges as possible within the subgraph and as few edges as possible between subgraphs. This is similar to the purpose of graph community detection or graph partitioning in distributed computing, so we can employ these methods such as Louvain \cite{louvain} or Metis \cite{metis}. Finally, we obtain a set of subgraphs of the graph $\{G_1, \cdots, G_K\}$, where $G_i=\{V_i, E_i\}$.

\etitle{Reorder vertices within subgraphs}. After dividing the graph, we first reorder the vertices inside each subgraph $G_i$. To begin, we can randomly select a vertex as the initial vertex of the processing order $O_{V_i}$ formed by the vertices $V_i$ in the subgraph. In practice, the initial vertex always has the smallest in-degree, since such vertex tends to rank at the front of the processing order.  
We then select a vertex from the remaining vertices to insert into the processing order. We prefer selecting the vertex $v$ with BFS,  
so that $v$ has better locality with the vertices in $O_{V_i}^c$.

For each selected vertex $v$, 
we search the optimal insertion position from the tail to the head of $O_{V_i}^c$ that maximizes the $\M(\R(G_i))$ value based on the position of the vertices that are in $O^c_{V_i}$. 
\eat{We first consider inserting it at the end of $O_{V_i}^c$, the positive edges are $E^c_v \cap IN(v)$, \ie, the edges from vertices in $O_c$ to $v$. 
We then search the optimal inserting position from the tail of $O_c$ to the head that maximizes the $\M(\R(G_i))$ value.}
Intuitively, the essence of inserting vertices is to maximize the number of 
positive edges.
It is worth mentioning that the overhead of such sequential attempts is not substantial, since we only need to search at locations near $v$'s neighbors in $O_{V_i}^c$ (the details will be discussed in implementation, Section \ref{sec:go:impl})

\eat{Next, we reorder the vertices between different subgraphs, which is done by reordering subgraphs.}

\etitle{Reorder subgraphs}. Next, we reorder the subgraphs to merge the order of vertices within subgraphs into a whole processing order. 
\gr{To make the vertices in the same subgraph continuous in the processing order,  we treat the entire subgraph as a super vertex, 
which brings vertices within the same subgraph physically closer in memory,
thereby reducing CPU cache misses.} There will be edges between two super vertexes if there are edges between vertices in these two subgraphs, as shown in Fig. \ref{fig:sec3-c}, there is an edge between $G_0$ and $G_1$ since there is an edge $(e,g)$. 
Then we reorder the super vertices formed by subgraphs using a similar method that reorders vertices within subgraphs.
The difference is that there is a weight value on the edge between subgraphs, which is defined as the number of edges from one subgraph to another one, i.e., $w_{G_i,G_j}=|\{(u,v)|u \in G_i, v \in G_j\}|$. Then the objective function of ordering super vertices becomes 
$\M(O_P)=\sum_{G_i, G_j \in P} \chi_c(G_i,G_j)$, where $\chi_c(G_i,G_j)=w_{G_i, G_j}$ if $\ord(G_i) < \ord(G_j)$, else $\chi_c(G_i,G_j)=0$, where $\ord(G_i)$ returns the ordinal number of super vertex formed by $G_i$. 

After reordering subgraphs, the processing order of $G'$ will be obtained by decompressing the super vertices formed by subgraphs.

\etitle{Insert high-degree \& isolated vertices.}
So far, we have reordered all vertices except for those with high degrees and isolated vertices. 
Since isolated vertices do not have connected edges with the vertices presently in the processing order, we prioritize inserting vertices with high degrees into the processing order, followed by the insertion of isolated vertices. 
We search for the optimal position to insert $v$ from front to back, maximizing the increase in the $\M(\cdot)$ value. The insertion method is similar to the approach used when reordering vertices within a subgraph.
For isolated vertices, we employ the same method as inserting high-degree vertices into the existing processing order. 

Next, we analyze the effectiveness and efficiency of \go theoretically.

\subsection{Effectiveness}
Randomly, the probability of each edge being positive edge is 1/2, and the $\M(\R(G))$ is $|E|/2$. According to the following analysis, it can be seen that the $\M(\R(G))$ value of the processing order obtained by \go is no smaller than $|E|/2$, which is no worse than a random processing order.

\begin{figure}[t] 
\centering
\includegraphics[width=0.6\linewidth]{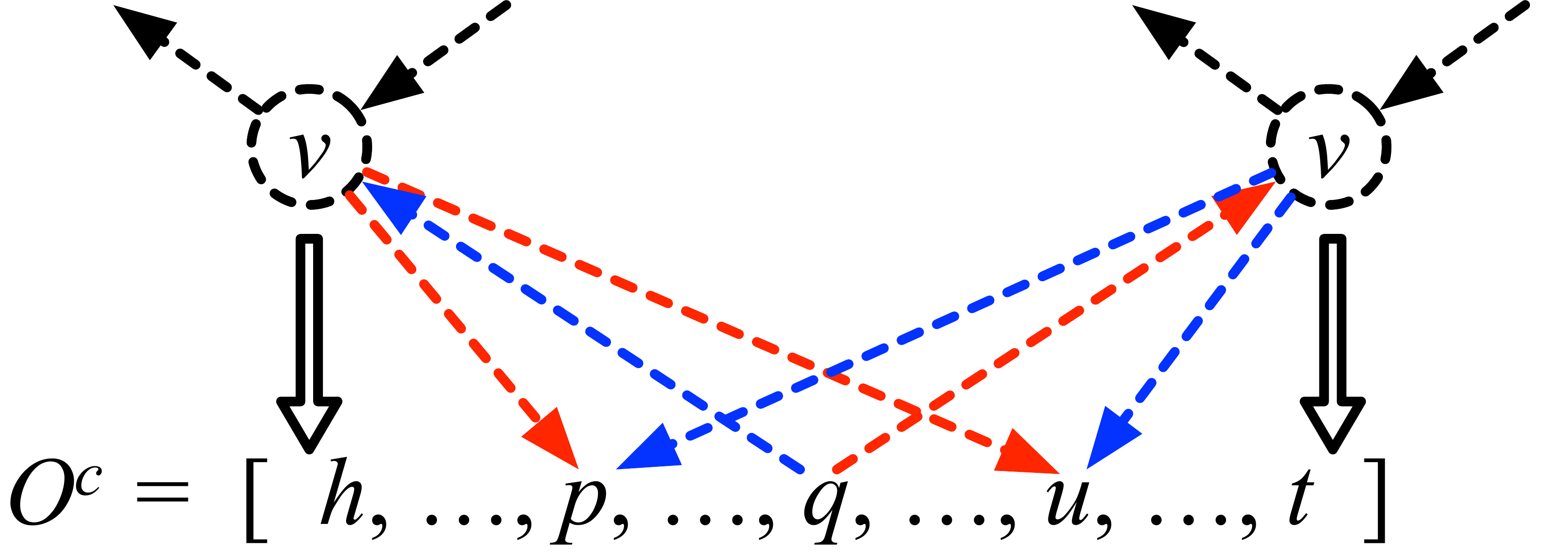} 
\caption{Example of positive and negative edges when vertex $v$ is inserted into the head or tail of $O^c$, where red dashed lines represent positive edges and blue dashed lines represent negative edges}\label{fig:theo-2}
\end{figure}

Before the theoretical analysis, we first propose the following theorem.

\begin{lemma}\label{lem:posnum}
    Given a current processing order $O^c$, and a candidate vertex $v$ to be inserted into $O^c$, after inserting $v$ into the $O^c$ using \go, the value of $\M(O^c)$ will increase by at least $|E^c_v|/2$, where $|E^c_v|$ is the number of edges connecting $v$ and vertices in $O^c$.
\end{lemma}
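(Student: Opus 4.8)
The plan is to prove the bound by an endpoint (pigeonhole) argument, exploiting the fact that \go selects the insertion position that maximizes the increase in $\M$, and that the two extreme placements of $v$ — at the head or at the tail of $O^c$ — already suffice to guarantee the claimed bound. The picture in Fig.~\ref{fig:theo-2} captures exactly the configuration I will analyze.

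First I would observe that inserting $v$ into $O^c$ preserves the relative order of all vertices already in $O^c$ (those beyond the insertion point merely have their ordinal numbers shifted by one), so every edge with both endpoints in $O^c$ keeps its positive/negative status. Hence the change in $\M$ is determined solely by the edges of $E^c_v$, and the increase equals the number of these edges that become positive. Next I would partition $E^c_v$ by direction relative to $O^c$: write $E^c_v = E^{in}_v \cup E^{out}_v$ with $E^{in}_v=\{(u,v): u\in O^c\}$ and $E^{out}_v=\{(v,w): w\in O^c\}$, and set $d_{in}=|E^{in}_v|$, $d_{out}=|E^{out}_v|$, so that $|E^c_v| = d_{in}+d_{out}$.

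Then I would evaluate the two boundary placements. If $v$ is placed at the head of $O^c$ (smallest ordinal among them), then for every out-edge $(v,w)$ we have $\ord(v)<\ord(w)$, making it positive, while every in-edge $(u,v)$ is negative; the gain is exactly $d_{out}$. Symmetrically, placing $v$ at the tail turns every in-edge positive and every out-edge negative, giving a gain of $d_{in}$. Since $d_{in}+d_{out}=|E^c_v|$, at least one of them satisfies $\max(d_{in},d_{out}) \ge |E^c_v|/2$. Because \go searches the candidate positions and keeps the one maximizing the increase in $\M$, the realized gain is at least the better of these two endpoint gains, hence at least $|E^c_v|/2$, which is the claim.

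There is no deep analytic difficulty here; the only point requiring care is the reduction step and the optimality step. I would make explicit that the insertion procedure's search dominates both endpoint placements, so the chosen position cannot yield a gain below $\max(d_{in},d_{out})$ — this is what upgrades the endpoint bound into a bound for \go's actual output. I would also note that the argument is purely combinatorial and does not invoke the monotonicity of $\F(\cdot)$; it concerns only the metric $\M(\cdot)$, and therefore applies uniformly to the intra-subgraph reordering, the inter-subgraph (super-vertex) reordering, and the insertion of high-degree and isolated vertices, which is precisely what later aggregate results will need.
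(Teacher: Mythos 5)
Your proposal is correct and follows essentially the same route as the paper's proof: analyze the head and tail placements of $v$ (which yield gains $d_{out}$ and $d_{in}$ respectively), observe that $\max(d_{in},d_{out}) \ge |E^c_v|/2$, and conclude via the fact that \go's position search can only do at least as well as the better endpoint. Your additional remarks --- that edges internal to $O^c$ keep their sign, and that the purely combinatorial argument carries over to the weighted super-vertex and high-degree/isolated insertions --- are points the paper leaves implicit, but they do not change the underlying argument.
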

\begin{proof}
    There are two situations after $v$ is inserted into $O^c$, \romannumeral1) $v$ is inserted into the head or tail of $O^c$, and \romannumeral2) $v$ is inserted into another position of $O^c$. 

    We denote the neighbors of $v$ that are in $O^c$ are $N^c_v = \{IN(v) \cup OUT(v)\} \cap O^c$, then the edges between $v$ and $N^c_v$ are $E^c_v$, \ie $|N^c_v|=|E^c_v|$.
    
    For the first case, when $v$ is inserted into the head (tail) of $O^c$, outgoing edges (incoming edges) of $v$ in $E^c_v$ are positive edges since $v$'s ordinal number is smaller (larger) than its outgoing neighbors. Then, the incoming edges (outgoing edges) are negative.
    Thus, when $v$ is inserted into head or tail of $O^c$, there are $max(|\{N^c_v \cap IN(v)\}|, |\{N^c_v \cap OUT(v)\}|)$ edges are positive edges, where $N^c_v \cap IN(v)$ and $N^c_v \cap OUT(v)$ are incoming neighbor and outgoing neighbors of $v$ in $O^c$ respectively. As we all know $max(|\{N^c_v \cap IN(v)\}|, |\{N^c_v \cap OUT(v)\}|) \geq (|\{N^c_v \cap IN(v)\}|+|\{N^c_v \cap OUT(v)\}|)/2=|N^c_v|/2$. Therefore, the number of positive edges is no smaller than $|E^c_v|/2$.

    For the second case, $v$ is inserted into another position that maximizes the $\M(\cdot)$ value. It means that we find a better inserting position that makes the number of positive edges larger than inserting into the tail or head. Then the number of positive edges is larger than $|E^c_v|/2$.

    In summary, after inserting $v$ into $O^c$, at least $|E^c_v|/2$ edges are positive edges.
\end{proof}

For the example in Fig. \ref{fig:theo-2}, $(v,p), (q,v), (v,u)$ are edges of $v$ that connect $v$ and vertices in $O^c$. 
If $v$ is inserted into the head of $O^c$, $v$'s outgoing edges $\{(v,p),(v,u)\}$ are positive edges while incoming edge $(q,v)$ is a negative edge. If $v$ is inserted into the tail, then $(q,v)$ becomes a positive edge, while $(v,p)$ and $(v,u)$ become negative edges.


Based on the lemma, we have the following theorem.

\begin{theorem}
   After reordering vertices in graph $G(V, E)$, we obtain the processing order $O_V$. Then we have 
   \begin{equation}
       \M(O_V) \geq |E|/2
   \end{equation}
\end{theorem}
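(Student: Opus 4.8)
The plan is to exploit the fact that \go\ builds $O_V$ through a sequence of \emph{insertions}, and to invoke Lemma~\ref{lem:posnum} at each insertion. First I would partition the edge set according to where each edge is ``resolved'' during the algorithm. After removing the high-degree vertices $V_{HD}$ and the resulting isolated vertices $V_{I}$, the remaining vertices are split into subgraphs $G_1,\dots,G_K$ with $G_i=(V_i,E_i)$. This induces a disjoint partition
\begin{equation}
E = \Big(\bigcup_{i=1}^{K} E_i\Big)\ \cup\ E_{\mathrm{inter}}\ \cup\ E_{HD},
\end{equation}
where $E_{\mathrm{inter}}$ are the edges between distinct subgraphs and $E_{HD}$ are the edges incident to a high-degree or isolated vertex (recall isolated vertices attach only to high-degree ones, so their edges lie in $E_{HD}$). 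Consequently $|E|=\sum_i|E_i|+|E_{\mathrm{inter}}|+|E_{HD}|$, and it suffices to show that each class contributes at least half of its edges as positive edges, since these contributions are additive and never overlap.

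For the intra-subgraph class and the high-degree/isolated class, I would apply Lemma~\ref{lem:posnum} directly. Within each $G_i$ the vertices are placed one at a time; when a vertex $v$ is inserted, the edges it activates are exactly those joining $v$ to previously placed vertices of $V_i$, and every edge of $E_i$ is activated at exactly one such step (when its later endpoint is inserted). Summing the per-insertion guarantee of $|E^c_v|/2$ over all insertions in $G_i$ yields at least $|E_i|/2$ positive edges, and summing over $i$ gives $\sum_i|E_i|/2$. The final insertion of high-degree and isolated vertices is identical in form: each such $v$ is inserted at the position maximizing $\M(\cdot)$, Lemma~\ref{lem:posnum} again gives at least $|E^c_v|/2$ new positive edges, and because each edge of $E_{HD}$ is charged once (at the insertion of its later endpoint) the total is at least $|E_{HD}|/2$.

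The inter-subgraph class requires a small weighted extension of Lemma~\ref{lem:posnum}. Since each subgraph is treated as a contiguous super vertex, once the subgraph order is fixed every vertex of an earlier subgraph precedes every vertex of a later one; hence placing $G_i$ before $G_j$ turns all $w_{G_i,G_j}$ edges from $G_i$ to $G_j$ into positive edges and all $w_{G_j,G_i}$ edges into negative ones. Re-running the argument of Lemma~\ref{lem:posnum} with edge multiplicities, inserting a super vertex at the head (resp. tail) makes its outgoing (resp. incoming) super-edges positive, so choosing the better of the two already secures at least half of the total connecting weight, and \go\ searches for the best position and can only do better. Thus the positive inter-subgraph edges number at least $|E_{\mathrm{inter}}|/2$.

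Combining the three bounds,
\begin{equation}
\M(O_V) \ge \frac{1}{2}\Big(\sum_{i}|E_i| + |E_{\mathrm{inter}}| + |E_{HD}|\Big) = \frac{|E|}{2}.
\end{equation}
The main obstacle I anticipate is bookkeeping rather than a deep idea: I must verify that the three edge classes are genuinely disjoint and jointly exhaustive, that every edge is charged to exactly one insertion step (avoiding double counting when both endpoints enter in the same phase), and that positive edges established in one phase are never destroyed in a later phase. The last point holds because insertion preserves the relative order of already-placed vertices and the super-vertex step keeps each subgraph contiguous, so the per-phase contributions remain additive. The only genuinely new ingredient is the weighted generalization of Lemma~\ref{lem:posnum} for the super-vertex ordering, which follows the original proof verbatim with vertex counts replaced by summed edge weights.
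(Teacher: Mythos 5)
Your proposal is correct and follows essentially the same route as the paper's proof: partition $E$ by the phase in which each edge is resolved (intra-subgraph, inter-subgraph via super vertices, and high-degree/isolated insertions), apply Lemma~\ref{lem:posnum} to each class by charging every edge to the insertion of its later endpoint, and sum the per-class bounds of half the edges. Your treatment is in fact slightly more careful than the paper's, which merely says ``similarly'' where you spell out the weighted super-vertex extension of the lemma and the fact that later phases never destroy positive edges established earlier.
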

\begin{proof}
    During the reordering process, there are 4 types of edges, 1) the edges within each subgraph $E_{V_i}$, 2) the edges between subgraphs $E_P$, 3) the edges between high-degree vertices and the vertices in subgraphs $E_{HD}$, 4) the edges between isolated vertices and high-degree vertices $E_{ISO}$, \ie, $E=\bigcup^N_{i=1}E_{V_i}\cup E_P \cup E_{HD} \cup E_{ISO}$.

    In each subgraph, when inserting each vertex $v$, $|E^c_v|$ only contains the edges connecting $v$ and the vertices in the current local processing order $O^c_{V_i}$, not the edges connecting $v$ and other vertices. Therefore, for each edge $(u,v) \in E_i$, $(u,v)\in E^c_u$ if $v$ is inserted into $O^c_{V_i}$ before $u$ otherwise $(u,v)\in E^c_v$. Thus we have $E_{V_i}=\bigcup_{v\in V_i} N^c_v$. According to Lemma \ref{lem:posnum}, we have $\M(O_{V_i})\geq \sum_{v\in V_i}|E^c_v|/2=|E_i|/2$. 
    
    Similarly, for the other 3 types of edges, the number of positive edges between subgraphs is at least $|E_P|/2$, the number of positive edges between high-degree vertices and the vertices in subgraphs is at least $|E_{HD}|/2$, the number of positive edges between isolated vertices and high-degree vertices is at least $|E_{ISO}|/2$.

    Finally, we have the number of positive edges is at least $\sum^N_{i=1}|E_i|/2 + |E_P|/2 + E_{HD}/2 + |E_{ISO}|/2=|E|/2$, \ie, $\M(O_V) \geq |E|/2$.
\end{proof}

\subsection{Implementation}\label{sec:go:impl}

\SetKwFunction{func}{GetOptVal}
\SetKwProg{Fn}{function}{}{}

\begin{algorithm} 
    \caption{\go Algorithm Sketch}\label{algo:go}
    \KwIn {Graph \(G(V, E)\)}
    \KwOut{Vertex processing order $O_V$}
    Init $v.val$ of each $v$ with $\infty$;\label{glao:go:val} \tcp{$val$ represents the value of the ordinal number
    } 
    \tcp{\textbf{***Divide phase***}}
    Extract high-degree vertices, isolated vertices and their edges $G_{HD}$, $G_{ISO}$ from $G$, the remain vertices and edges form subgraph $G'$; \label{algo:go:extra}




    Divided graph $G'$ into $K$ subgraphs $\{G_1, \cdots, G_K\}$ with graph partitioning or clustering method;\label{algo:go:part}

    \tcp{\textbf{***Conquer phase***}}
    \For{each subgraph $G_i(V_i, E_i)$}
    {\label{algo:go:intra:start}
        $O_{V_i} \gets \emptyset$; \tcp{init the processing order of $G_i$} 
        \For{each vertex $v$ in $V_i$}{
            $v.val\gets$\func{$O_{V_i}$, $v$}\\
            $O_{V_i}.add(v)$;\\
        }
    }\label{algo:go:intra:end}
    \tcp{\textbf{***Combine phase: reorder subgraphs***}}
    \For{each subgraph $G_i$}{\label{algo:go:svs}
        create super vertex $sv_i$ with $V_i$;\\
        $sv_i.val \gets \infty$;\label{glao:go:sval} \tcp{$sv.val$ represents the value of the ordinal number of $sv$
    }
    }\label{algo:go:sve}
    \For{each $sv_i$ }{
        \For{each $sv_j$}{
        \tcp{compute the weight of edge}
            $w(sv_i, sv_j)\gets |\{(u,v)|u \in G_i, v \in G_j\}|$\label{algo:go:se}
        }
    }
    Construct a graph $G_{sv}$ with vertices set $\{sv_i | 1<i\leq K\}$ and edges set $\{(sv_i, sv_j) | 1\leq i\leq K, 1\leq j\leq K\}$;\label{algo:go:gsv}\\ 
    $O_{S}\gets \emptyset$;\tcp{init the processing order of $G_{sv}$}  
            
    \For{each super vertex $sv$}{\label{algo:go:rosvst}
        $sv.val\gets$\func{$O_{S}$, $sv$}\\
        $O_{S}.add(sv)$;\\
    }\label{algo:go:rosven}
    $O_V\gets \emptyset$; \tcp{init the processing reorder of $G$}
    \tcp{\textbf{***decompose super vertices, update $val$***}}
    Sort $O_S$ in the ascending order of $vs.val$;\label{algo:go:sortsv}\\
    $mval_{pre}\gets 0$;\label{algo:go:unsvst}\\ 
    \For{each super vertex $sv \in O_S[i]$}{ 
        $mval_{cur}\gets 0$;\\
        \For{each vertex $v$ in $sv$}
        {
            $v.val\gets v.val+mval_{pre}$;\\
            $mval_{cur}\gets \text{max}(mval_{cur}, v.val)$;\\
            $O_V.add(v)$;
        }
        $mval_{pre}=mval_{cur}$;
    }\label{algo:go:unsven}
    \tcp{\textbf{***reorder high-degree \& isolated vertices***}}
    \For{each vertex $v$ in $V_{HD}$}
    {
        $v.val \gets  $ \func{$O_V$, $v$};\\
        $O_V.add(v)$;
    }
    \For{each vertex $v$ in $V_{ISO}$}
    {
        $v.val \gets  $ \func{$O_V$, $v$};\\
        $O_V.add(v)$;
    }
    
    Sort $O_V$ in the ascending order of $v.val$;\label{algo:go:finsort}\tcp{obtain the final processing order}
    
\end{algorithm}

\setlength{\algotitleheightrule}{0pt}
\begin{algorithm}
\Fn{\func{order $O$, vertex $v$}}{
        $N_v \gets\{IN(v)\cup OUT(v)\} \cap O$; \label{fun:neib}\\
        Sort $N_v$ in the ascending order of $v.val$;\label{fun:sort}\\
        $pe_v=|OUT(v)\cap O|$; \label{fun:peinit}\tcp{init the number of positive edges}
        $val\gets 0$; \tcp{init the $val$}
        $max_{pe_v}\gets -\infty$; \tcp{the max number of postive edges}
        \For{each vertex $N_v[i]$ in $N_v$}
        {\label{fun:upvalst}
            \eIf{$N_v[i] \in OUT(v)$}
            {\label{fun:pesubs}
                \eIf{$v$ is super vertex}{
                    $pe_v \gets pe_v - w(v, N_v[i])$;\label{fun:wesub}
                }{
                    $pe_v \gets pe_v - 1$;
                }\label{fun:pesube}
            }
            {\label{fun:peadds}
                \eIf{$v$ is super vertex}{
                    $pe_v \gets pe_v + w(v, N_v[i])$;\label{fun:weadd}
                }{
                    $pe_v \gets pe_v + 1$;
                }
            }\label{fun:peadde}
            \If{$max_{pe_v} < pe_v$}
                {
                    $max_{pe_v} \gets pe_v$;\\
                    $val \gets (N_v[i].val + N_v[i+1].val)/2$; \label{fun:val} \tcp{insert $v$ between $N_v[i]$ and $N_v[i+1]$}
                }
        }\label{fun:upvaled}
    \Return{$val$};
}
\end{algorithm}


\go adopts a divide-and-conquer idea, and a sketch of its implementation is shown in Algorithm \ref{algo:go}.

In terms of the core idea of \go, the essence lies in finding an optimal position of the processing order for vertex $v$ that maximizes $\M(\cdot)$, then inserting the candidate $v$ into this position. However, before obtaining the global processing order, it is impractical to obtain the real ordinal number of each vertex, because with the vertex insertion, the real ordinal number value of the vertex will continue to change. Therefore, we use $val$ to represent the ordinal number (line \ref{glao:go:val}). The larger the $val$, the larger the oral number of the vertex in the processing order. 
Finally, we sort vertices according to $val$ to derive the real processing order and the ordinal number of each vertex (line \ref{algo:go:finsort}).

\gr{Before computing the $val$ of vertices, we first extract the high-degree vertices (line \ref{algo:go:extra}). As a rule of thumb, we simply extract the top 0.2\% vertices with the highest degree. Then we divide the graph (line \ref{algo:go:part}) with the exiting graph partitioning method, such as Rabbit-Partition \cite{arai2016rabbit}, Metis \cite{metis}, and Louvain \cite{louvain}. In our implementation, we use the graph partitioning method introduced in Rabbit.}
We proceed to compute the $val$ of each vertex within $G_i$ (line \ref{algo:go:intra:start}-\ref{algo:go:intra:end}). It is notable that the vertex $val$ is a local value within $G_i$. After computing the local vertex $val$, we treat each subgraph as a super-vertex (line \ref{algo:go:svs}-\ref{algo:go:sve}), the edges between super-vertices have weights that are equal to the number of edges between the subgraphs (line \ref{algo:go:se}). Then we compute the $val$ of each super vertex (line \ref{algo:go:rosvst}-\ref{algo:go:rosven}). 
After that, we sort the subgraphs ascending with $val$ of the super vertex (line \ref{algo:go:sortsv}). Then, we unzip the super vertices and obtain the global $val$ of each vertex in $G'$, which is done by adding the maximum $val$ of vertices in the previous subgraph to the $val$ of each vertex for $G_i$ (line \ref{algo:go:unsvst}-\ref{algo:go:unsven}). 

The $val$ is computed by finding the current optimal position in the current processing order, and taking the average of the $val$ of the predecessor and successor to indicate a value in between. 
During the search for the optimal position, it is unnecessary to recompute the value of $\M(\cdot)$ for every potential insertion position of vertex $v$.
This is due to the fact that for a vertex not connected to $v$, the count of positive and negative edges remains unchanged when $v$ is inserted in front of or behind it. 
Consequently, the value of $\M(\cdot)$ also remains unchanged. As shown in Fig. \ref{fig:theo-2}, inserting $v$ into the front or behind of $h$, the value of $\M(O^c)$ remains unchanged since $(v,p)$ and $(v,u)$ are always positive.
Therefore, to find the best position in the current order, it is enough to count the number of positive edges $pe_v$ when $v$ is inserted into the front or behind each neighbor, which can be implemented as follows (shown in \eat{\textbf{GetOptVal}}\func function). 

Firstly, we extract $v$'s neighbors that are in the processing order and form a neighbor sequence (line \ref{fun:neib}), then we sort the neighbors of vertex $v$ in ascending order according to their $val$ and form a sorted neighbor sequence (line \ref{fun:sort}). 
Then, we traverse each position in the neighbor sequence from the head or tail to find the optimal position one by one, 
and update the number of positive edges $pe_{v}$ 
incrementally (lines \ref{fun:upvalst}-\ref{fun:upvaled}), instead of recounting. If $v$ is inserted into the head of the neighbor sequence, the initial value of $pe_{v}=|OUT(v)|\cap O_V$ (line \ref{fun:peinit}). After moving the $v$ to the back of the next neighbor, $pe_{v}$ is updated. If the next neighbor is incoming neighbor, then $pe_{v}=pe_v+1$ (lines \ref{fun:pesubs}-\ref{fun:pesube}). Otherwise, the $pe_{v}=pe_v-1$ (lines \ref{fun:peadds}-\ref{fun:peadde}). Note that if $v$ is a super vertex, then the update granularity of $pe_v$ is the weight of connected edges (line \ref{fun:wesub} and line \ref{fun:weadd}).  When $v$ moves to the tail of the neighbor sequence, the value of $pe_{v}=|IN(v)|\cap O_V$. 
For the example in Fig. \ref{fig:theo-2}, we assume there is a neighbor sequence $[p,q,u]$. With $v$ as the head, initially the $pe_v=2$. After moving $v$ behind $p$, $pe_v=2-1=1$, since $p$ is an outgoing neighbor. After $v$ moves behind $q$, $pe_v=1+1=2$, since $q$ is an incoming neighbor.  
Once the optimal position is determined, we compute the $val$ of $v$ with its predecessor and successor (line \ref{fun:val}). 

\section{Experiments}\label{sec:expr}

\eat{will be rewritten, and redo the some expr 
\ys{ The number of iteration rounds was emphasized earlier, so do we need to show the experimental results of the iteration rounds?}}

\subsection{Experimental Setup}
In default, the experiments are performed on a Linux server with Intel Xeon Gold 6248R 3.00GHz 
CPU, 98 GB memory, and it runs on 64-bit Ubuntu 22.04 with compiler GCC 7.5. 

\etitle{Graph Workloads}.
%
We use four typical graph analysis algorithms in our experiments, including PageRank\cite{page1999pagerank}, 
Single Source Shortest Path (SSSP), Breadth First Search (BFS), and Penalized Hitting Probability (PHP)\cite{php}.
When the difference between vertex state value in two consecutive iterations is less than $10^{-6}$, PageRank and PHP are considered to have achieved convergence. For SSSP and BFS, the algorithms are considered to have reached convergence when all vertex states are no longer changing. 

\begin{table}[h]
    \caption{Datasets}
    \label{tab:data}
    \centering
    \footnotesize
    {\renewcommand{\arraystretch}{1.2}
    \setlength{\tabcolsep}{6pt} 
    \begin{tabular}{l c c c }
        \toprule
        {\textbf{Dataset}} &
        {\textbf{Vertices}} &
        {\textbf{Edges}} &
        {\textbf{Abbreviation}} \\
         \midrule
         Indochina~\cite{nr} & 11,358 & 49,138 & IC \\
         SK-2005~\cite{nr} & 121,422 & 36,7579 & SK \\
         Google~\cite{google} & 875,713 & 5,241,298 & GL \\
         Wiki-2009~\cite{nr} & 1,864,433 & 4,652,358 & WK\\
         Cit-Patents~\cite{cit-Patents} & 3,774,768 & 18,204,371 &  CP \\
         LiveJournal~\cite{nr} & 4,033,137 & 27,972,078 & LJ  \\
         \bottomrule
    \end{tabular}
    }
\end{table}

\etitle{Datasets}.
Six real-world datasets are used in our experiments, including indochina-2004 \cite{nr}, sk-2005 \cite{nr}, Google \cite{google}, wikipedia-2009 \cite{nr}, cit-Patents \cite{cit-Patents} and soc-livejournal \cite{nr}. The details of each dataset are outlined in Table \ref{tab:data}.

\newcommand{\fgwth}{0.25}
\newcommand{\fghs}{-0.14in}
\begin{figure*}[t]
\vspace{-0.15in}
  \centering
  \includegraphics[width=0.5\textwidth]{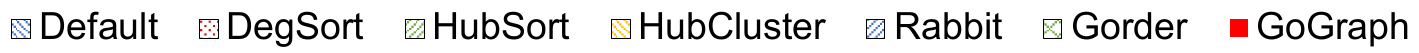}
  \\
  \begin{subfigure}{\fgwth\textwidth}
    \includegraphics[width=\linewidth]{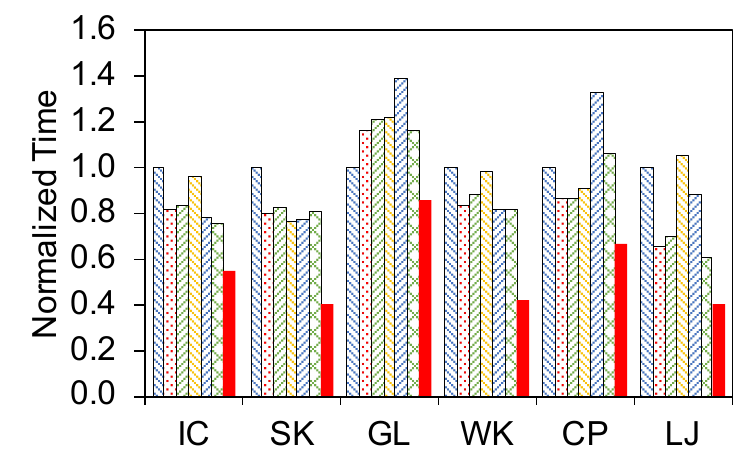}
    \caption{PageRank}
  \end{subfigure}
  \hspace{\fghs}
  \begin{subfigure}{\fgwth\textwidth}
    \includegraphics[width=\linewidth]{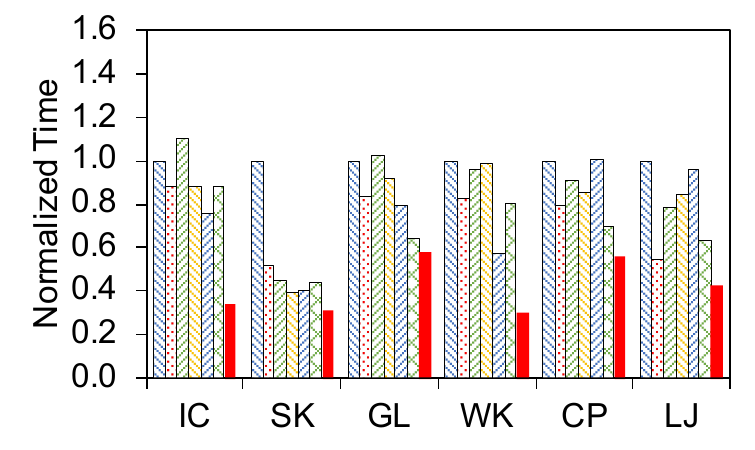}
    \caption{SSSP}
  \end{subfigure}
  \hspace{\fghs}
  \begin{subfigure}{\fgwth\textwidth}
    \includegraphics[width=\linewidth]{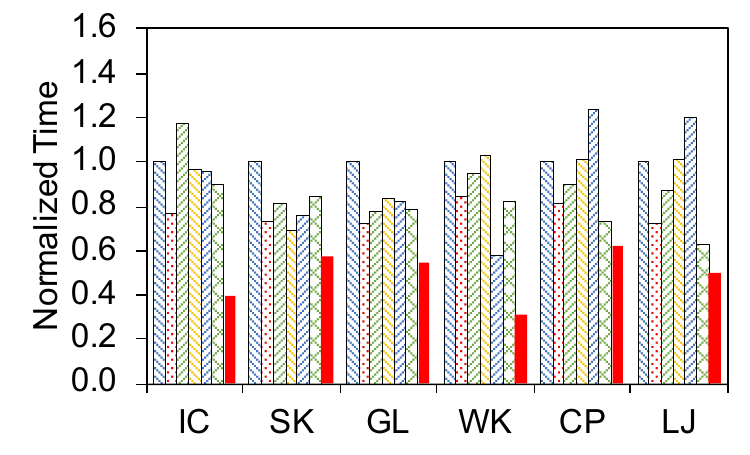}
    \caption{BFS}
  \end{subfigure}
  \hspace{\fghs}
  \begin{subfigure}{\fgwth\textwidth}
    \includegraphics[width=\linewidth]{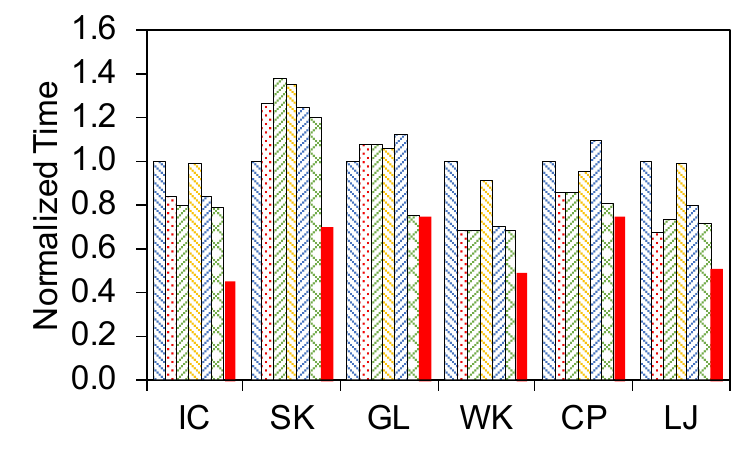}
    \caption{PHP}
  \end{subfigure}
  \vspace{-0.1in}
  \caption{The comparison of runtime}
  \label{fig:overall_runtime}
  \vspace{-0.1in}
\end{figure*}

\begin{figure*}[t]
\vspace{-0.05in}
  \centering
  \includegraphics[width=0.5\textwidth]{Expr/runtime2/tuli.pdf}
  \\
  \begin{subfigure}{0.24\textwidth}
    \includegraphics[width=\linewidth]{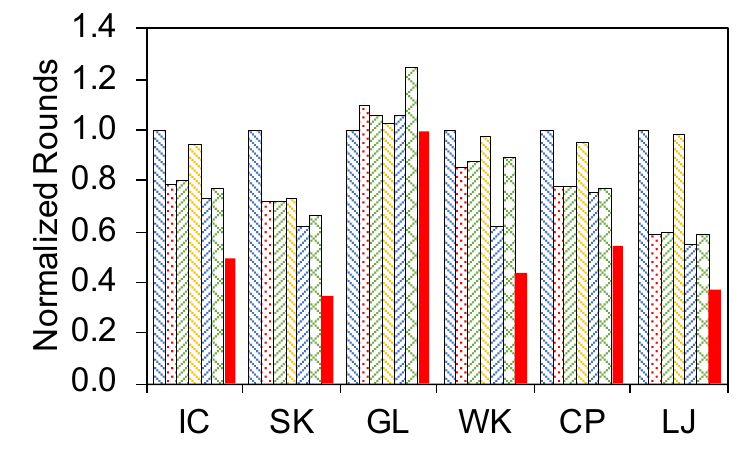}
    \caption{PageRank}
  \end{subfigure}
  \hspace{\fghs}
  \begin{subfigure}{0.24\textwidth}
    \includegraphics[width=\linewidth]{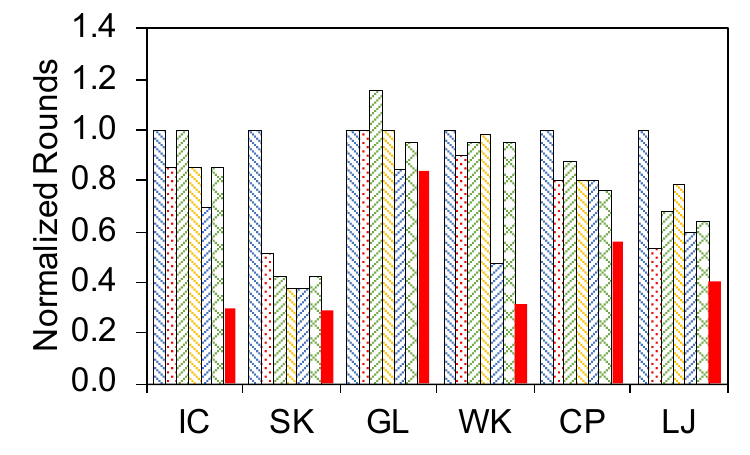}
    \caption{SSSP}
  \end{subfigure}
  \hspace{\fghs}
  \begin{subfigure}{0.24\textwidth}
    \includegraphics[width=\linewidth]{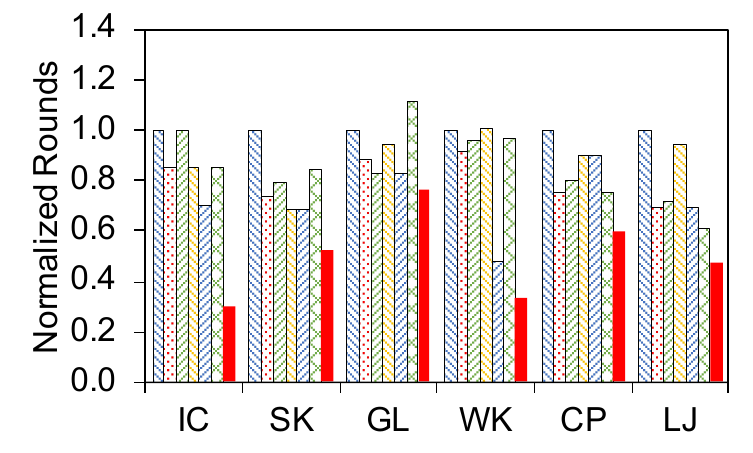}
    \caption{BFS}
  \end{subfigure}
  \hspace{\fghs}
  \begin{subfigure}{0.24\textwidth}
    \includegraphics[width=\linewidth]{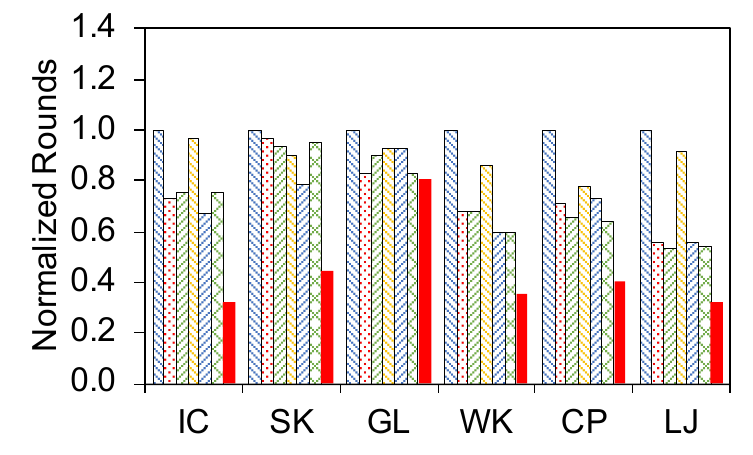}
    \caption{PHP}
  \end{subfigure}
  \vspace{-0.1in}
  \caption{The comparison of iteration rounds}
  \label{fig:overall_iter_round}
  \vspace{-0.15in}
\end{figure*}

\etitle{Competitors}.
We compare \go with the 6 graph ordering methods listed, Default, Degree Sorting, Hub Sorting \cite{zhang2016optimizing}, Hub Clustering \cite{balaji2018graph}, Rabbit \cite{arai2016rabbit}, and Gorder \cite{wei2016gorder}. The default order employs the original IDs as the processing order. 

\eat{
\begin{figure*}[!ht]
\begin{center}
   \newcommand{\imgw}{0.32}
   \newcommand{\imghs}{0.02in}
   \subfloat[PageRank]{\includegraphics[width = \imgw\textwidth]{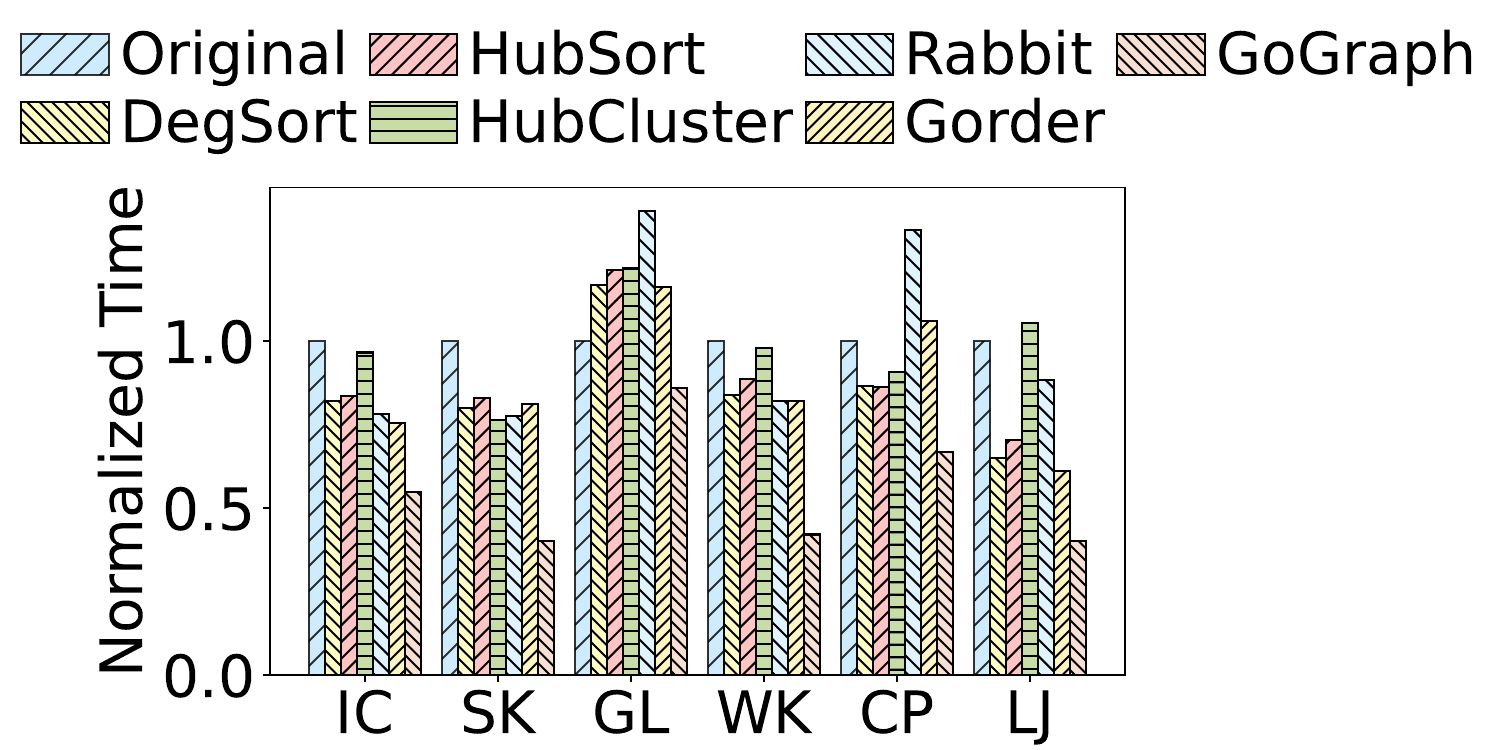}\label{fig:runtime_pagerank}}\hspace{\imghs}
  \subfloat[SSSP]{\includegraphics[width = \imgw\textwidth]{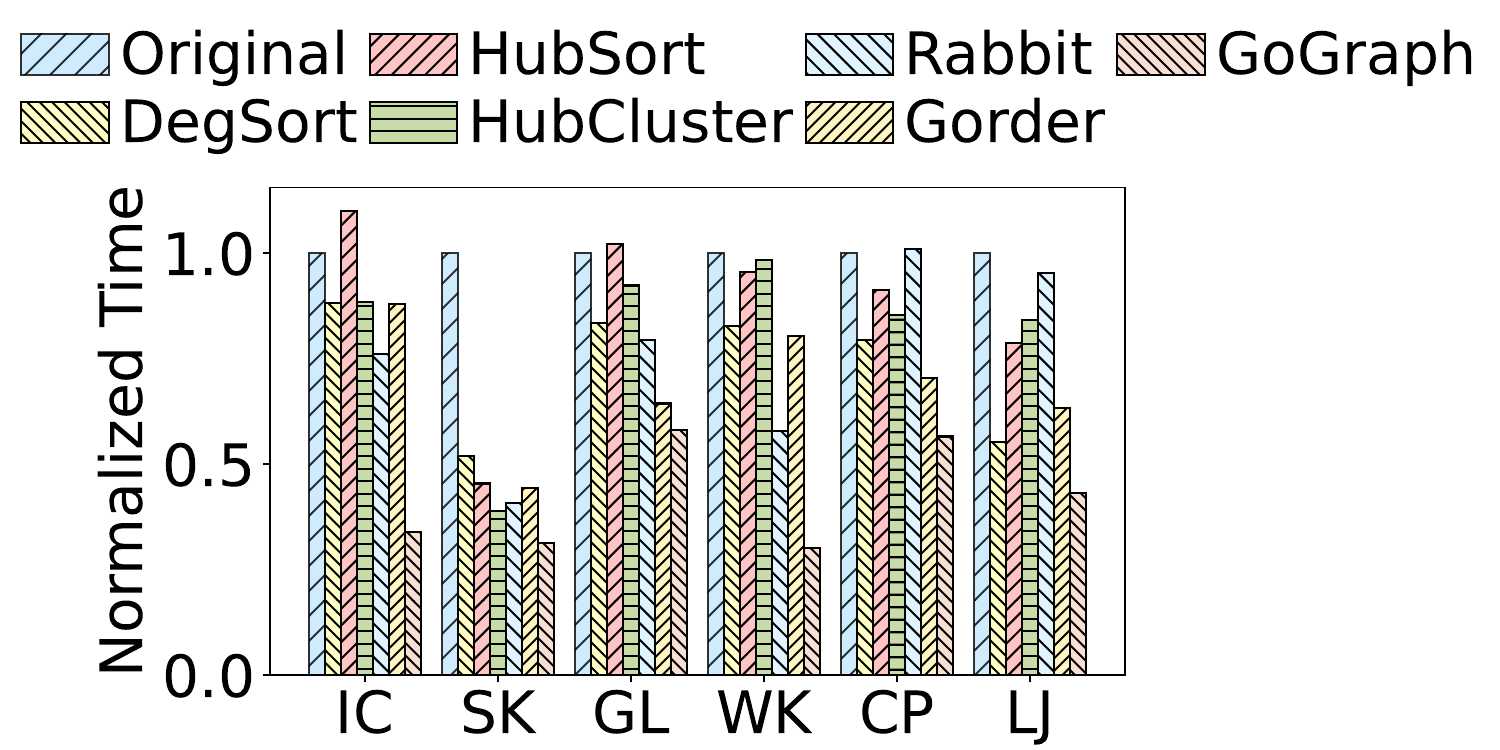}\label{fig:runtime_php}}\hspace{\imghs}
  \subfloat[BFS]{\includegraphics[width = \imgw\textwidth]{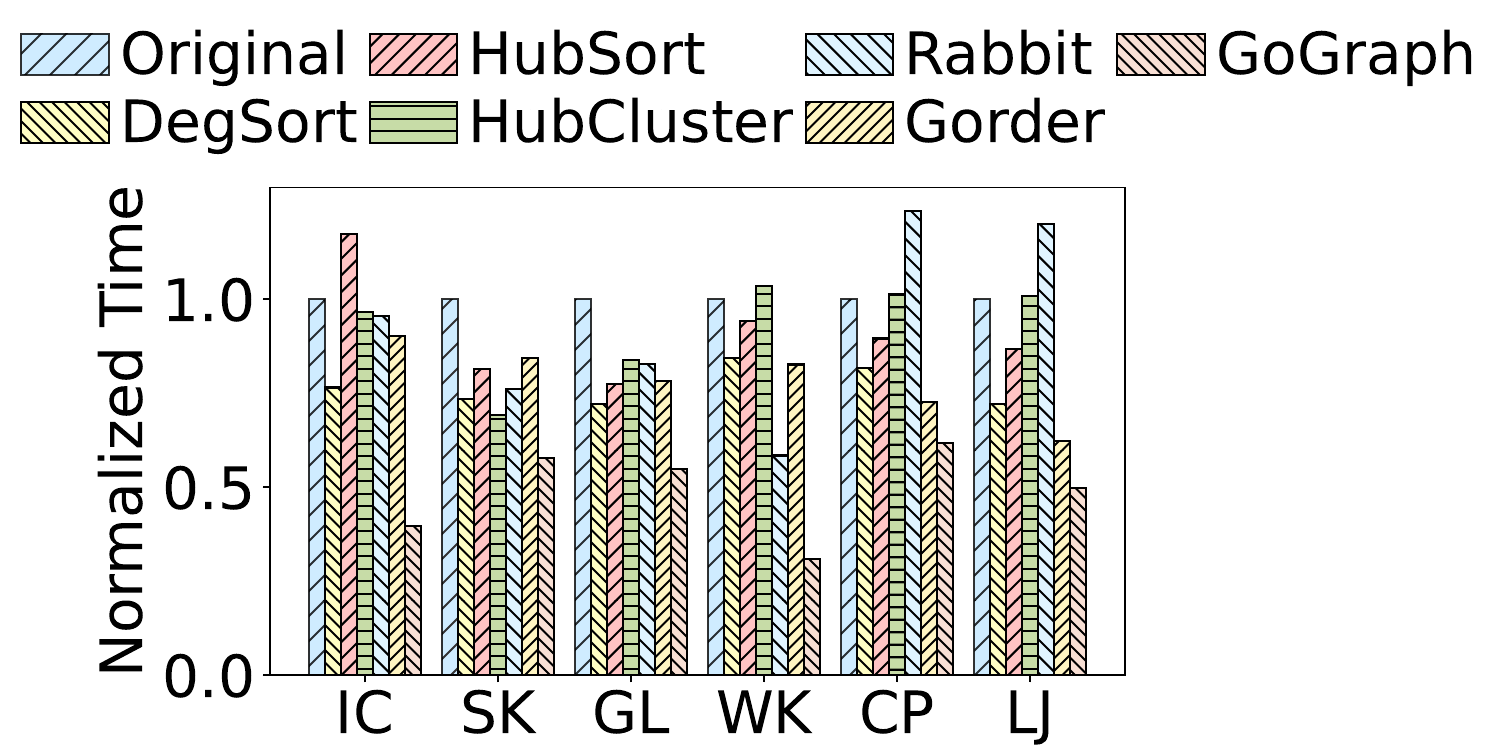}\label{fig:runtime_ppr}}\hspace{\imghs}
  \subfloat[PHP]{\includegraphics[width = \imgw\textwidth]{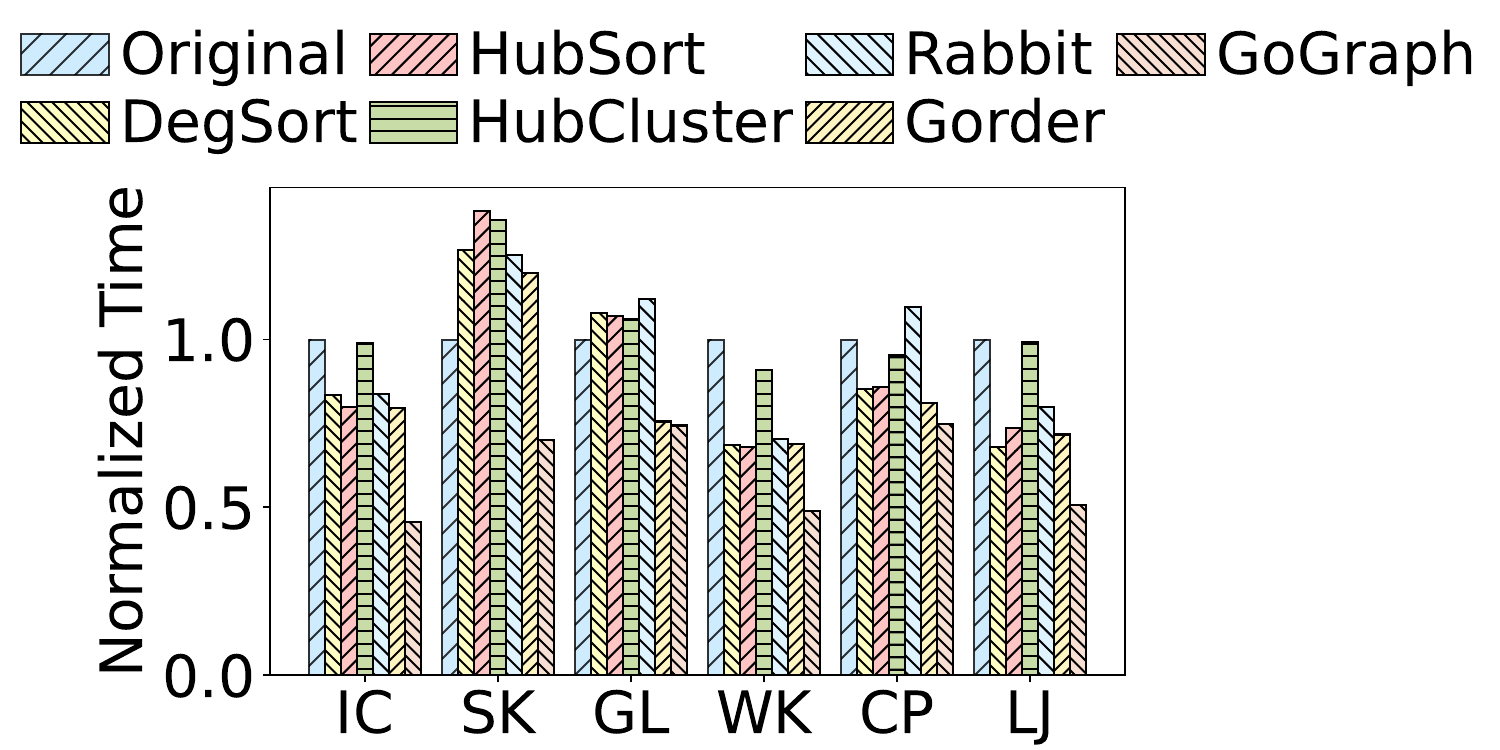}\label{fig:runtime_ppr}}\hspace{\imghs}\\
\end{center}
\caption{Running time comparison.}\label{fig:compare_runtime}
\end{figure*}
\begin{figure*}[!ht]
\begin{center}
   \newcommand{\imgw}{0.32}
   \newcommand{\imghs}{0.02in}
   \subfloat[PageRank]{\includegraphics[width = \imgw\textwidth]{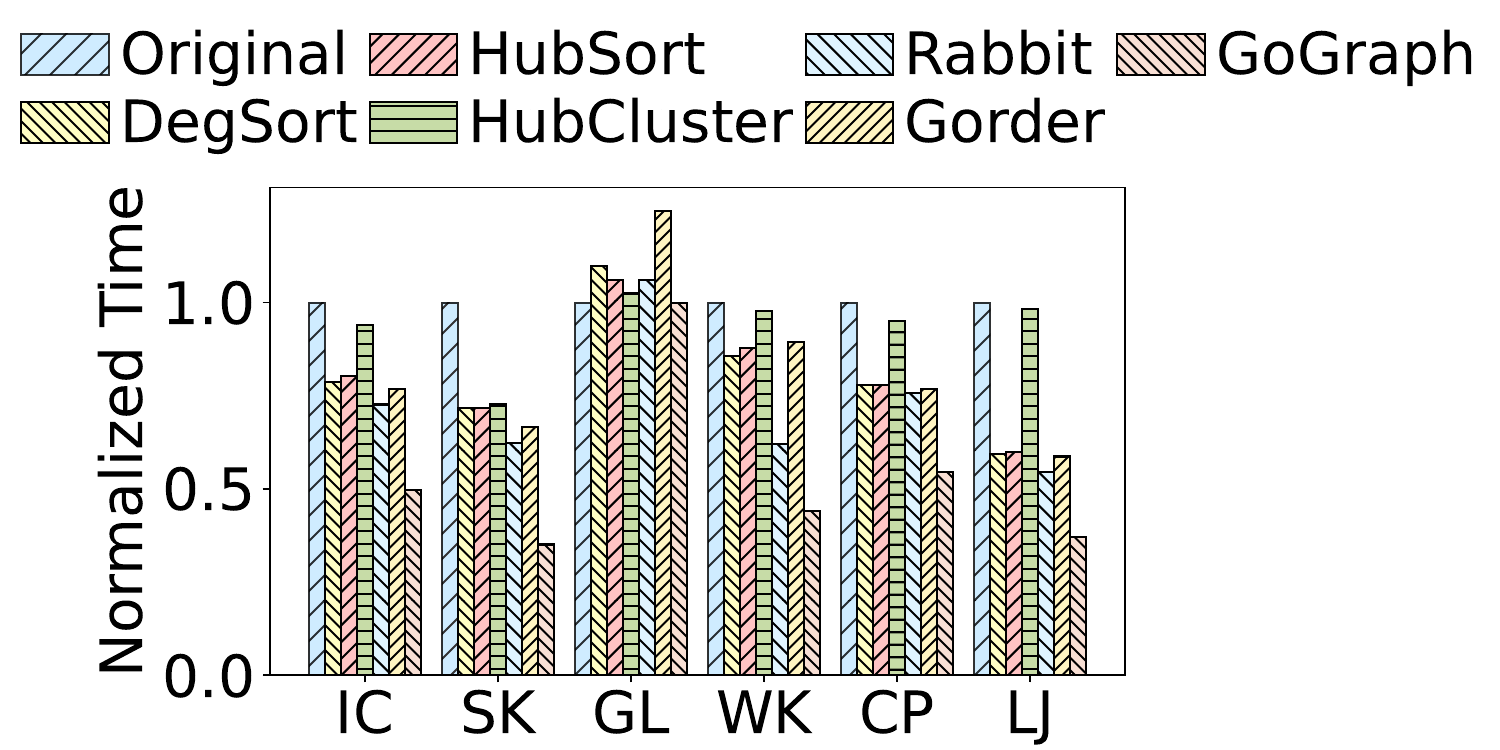}\label{fig:runtime_pagerank}}\hspace{\imghs}
  \subfloat[SSSP]{\includegraphics[width = \imgw\textwidth]{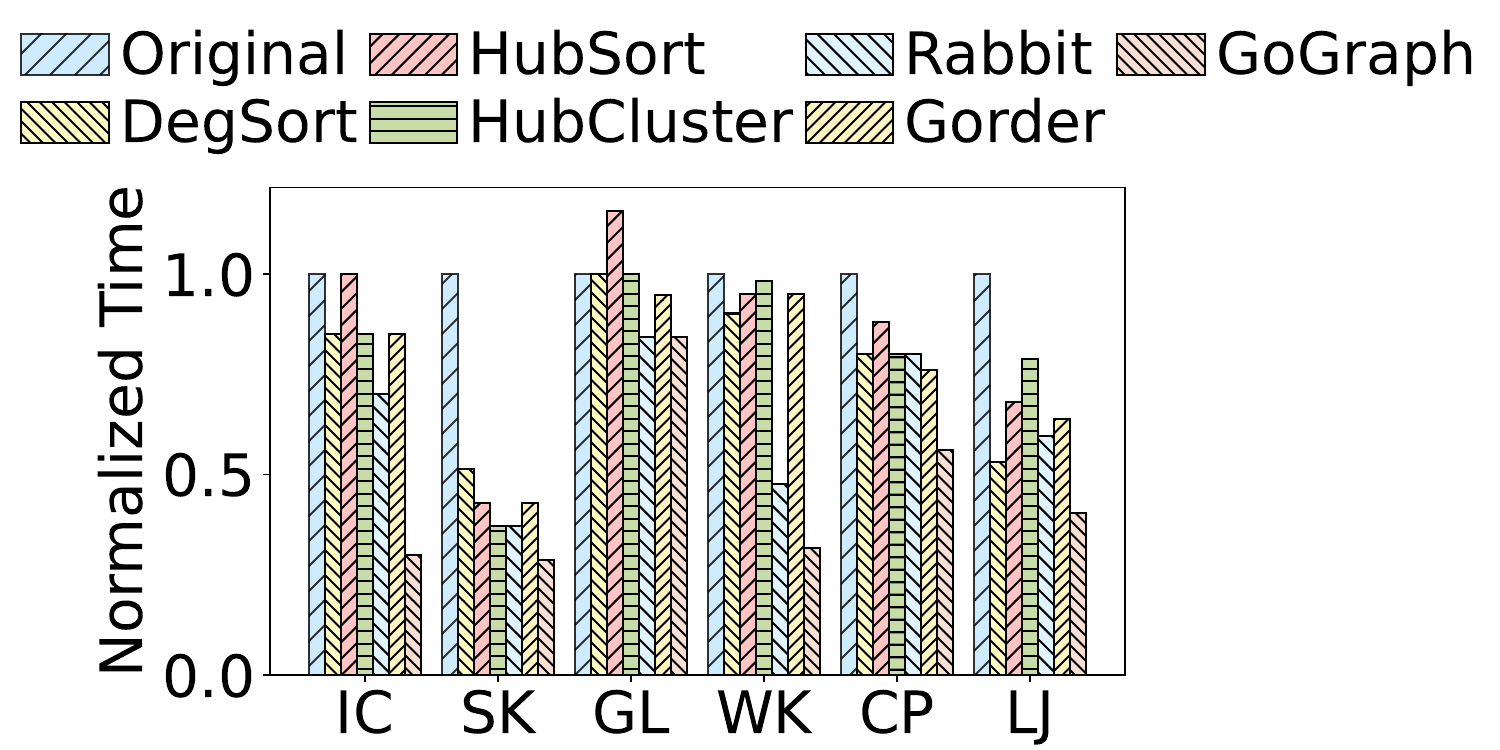}\label{fig:runtime_php}}\hspace{\imghs}
  \subfloat[BFS]{\includegraphics[width = \imgw\textwidth]{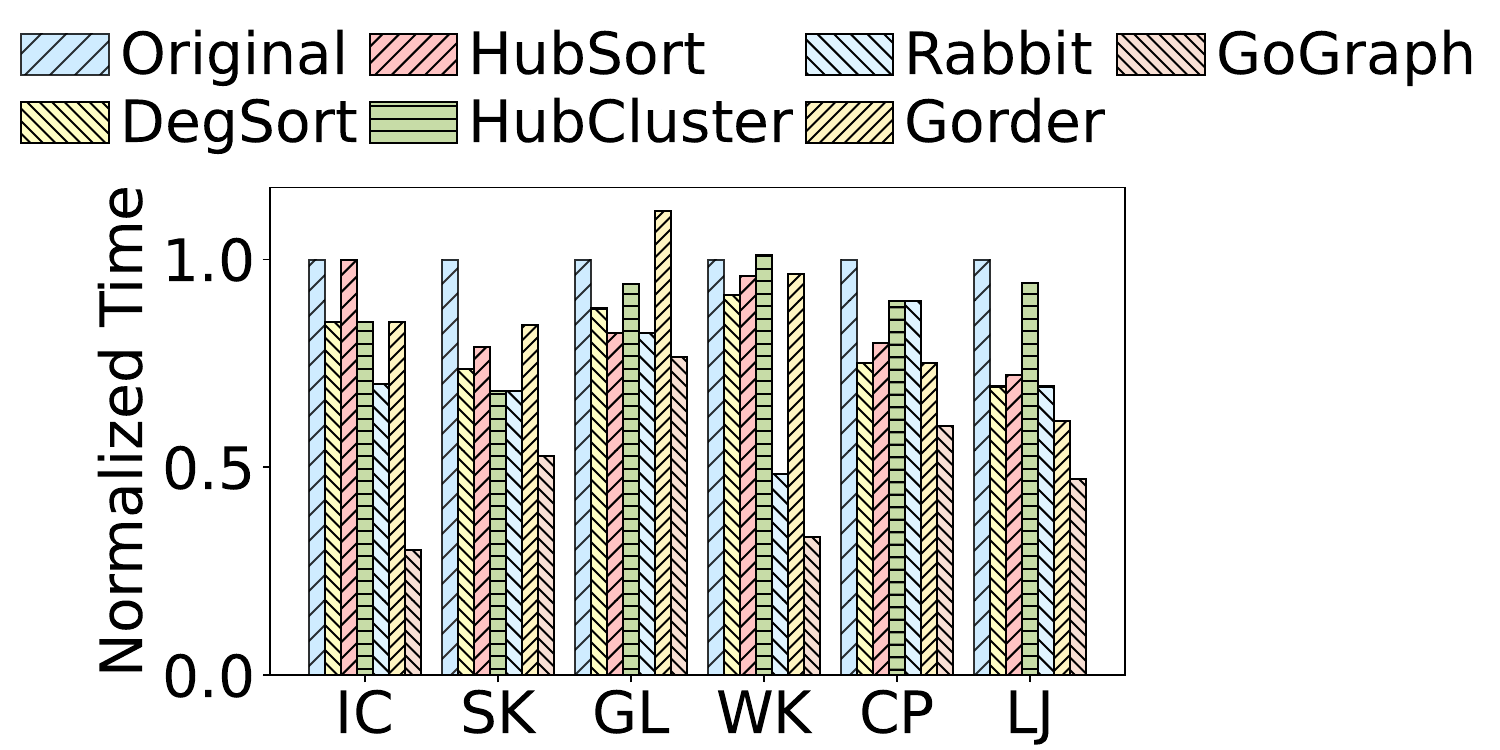}\label{fig:runtime_ppr}}\hspace{\imghs}
  \subfloat[PHP]{\includegraphics[width = \imgw\textwidth]{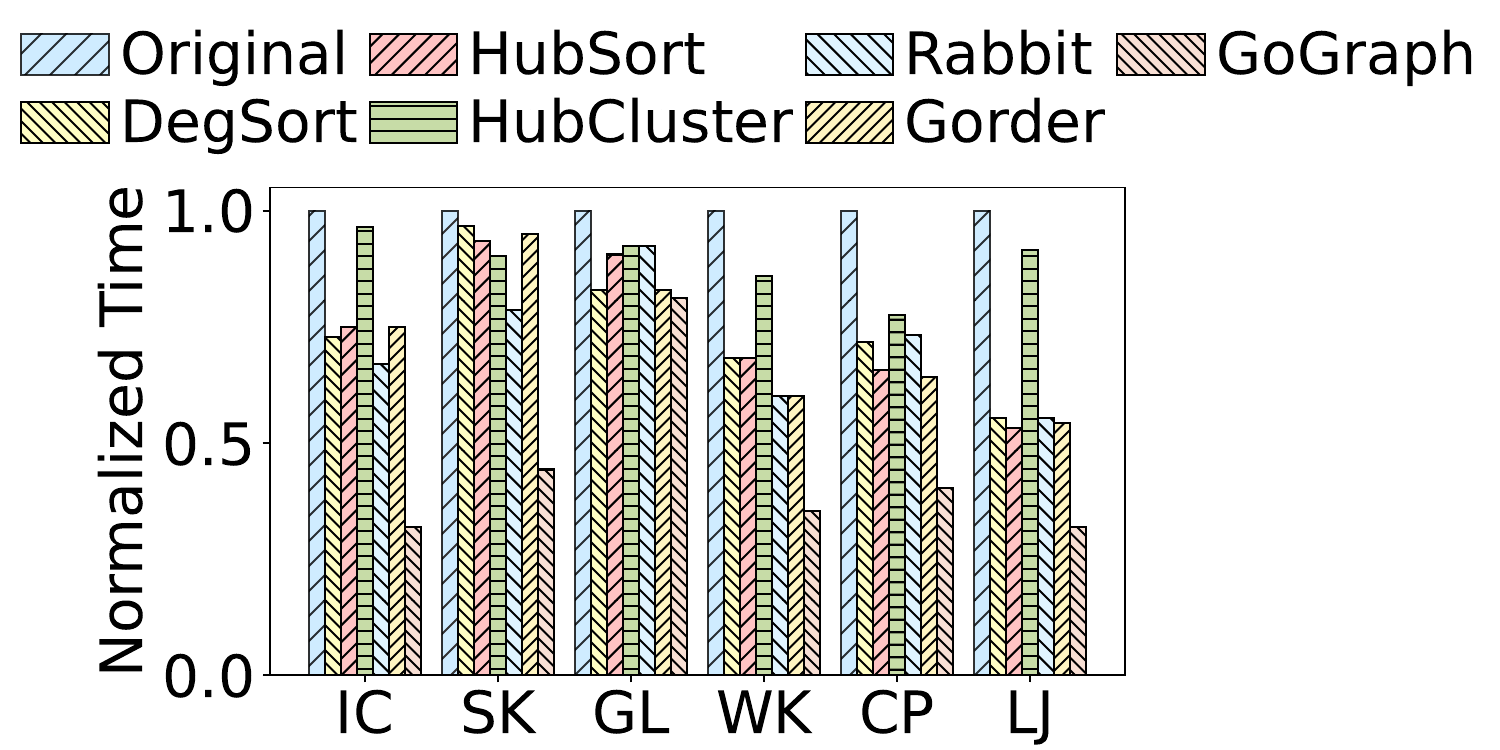}\label{fig:runtime_ppr}}\hspace{\imghs}\\
\end{center}
\caption{Iteration rounds comparison.}\label{fig:compare_iter}
\end{figure*}
}

\subsection{Overall Performance}

\eat{We first assessed \go against competitors by comparing runtime and iteration counts for graph algorithms on Table \ref{tab:data}'s graphs, with results {\em Normalized} in Fig. \ref{fig:overall_runtime} and Fig. \ref{fig:overall_iter_round}, respectively, using Default order as the baseline (unit time 1).}
We first compare \go with the competitors in terms of the runtime and number of iteration rounds for each graph algorithm executed on 
graphs in Table \ref{tab:data}. The {\em Normalized} runtime and iteration rounds results are reported in Fig. \ref{fig:overall_runtime} and Fig. \ref{fig:overall_iter_round}, respectively,
where the result of Default order is treated as the baseline, i.e., the Default finishes in unit time 1. 
It can be seen from Fig. \ref{fig:overall_runtime}, 
\go outperforms others in all cases.
Specifically, \go achieves 
2.10× speedup on average (up to 3.33× speedup) over Default, 
1.66× speedup on average (up to 2.75× speedup) over Degree Sorting, 
1.85× speedup on average (up to 3.24× speedup) over Hub Sorting, 
1.93× speedup on average (up to 3.34× speedup) over Hub Clustering, 
1.80× speedup on average (up to 2.42× speedup) over Rabbit, and 
1.62× speedup on average (up to 2.68× speedup) over Gorder.
On the other hand, in the measurement of the number of iterative rounds as shown in Fig. \ref{fig:overall_iter_round}, \go incurs the least number of iteration rounds on most tested conditions.  Specifically, \go reduces the number of iteration rounds on average by 52\% (up to 71\%) compared with Default,  
39\% (up to 65\%) compared with Degree Sorting,  
40\% (up to 70\%) compared with Hub Sorting, 
45\% (up to 68\%) compared with Hub Clustering, 
32\% (up to 57\%) compared with Rabbit, 
and 39\% (up to 67\%) compared with Gorder.
It is worth noting that \go does not appreciably reduce the number of iteration rounds in the PageRank result for the GL graph, demonstrating that the default order of GL graph is naturally a well-defined processing order. The gain of the corresponding \go 
on the GL graph comes from the reduction of cache miss. 
Another observation is that Orignal, Degree Sorting, Hub Sorting, and Hub Clustering exhibit similar trends in runtime and number of iteration rounds, whereas Rabbit and Gorder do not. This distinction is because the former focuses on the effect of vertex processing order, while the latter concentrates on CPU cache optimization. \go takes both into account in a comprehensive way. 

\subsection{Convergence comparison}\label{sec:expr:convergenc}
To evaluate the effect of \go's reordering, we compared the convergence rates of \go and its competitors.
Our evaluation consists of running the PageRank and SSSP algorithms on CP and LJ graphs, both of which use different reordering algorithms.
We use the absolute difference between 
the sum of vertex state values at convergence and
the sum of all vertex state values at time $t$ to represent the distance to convergence, mathematically expressed as:
$dist_t = \left|\sum_{v\in V} x^*- \sum_{v\in V} x_t\right|$.
Its trend over time is shown in Fig. \ref{fig:convergence}.
We can see that \go achieves convergence faster in all cases. 
In achieving the same converged state, the \go algorithm consumes only 59\% of the average time used by other algorithms (with a minimum requirement of 37\%).
This efficiency is attributed to the advantages gained in each iteration of the vertex processing order constructed by \go.

\eat{\zyj{As the iterations progress, the state values of each vertex gradually converge towards an invariant state.
We want to measure the iterative progress of the results obtained by each reordering algorithm at each iteration.
Our evaluation consists of applying PageRank and SSSP algorithms on CP and LJ graphs, both of which use different reordering algorithms.
To provide a illustration of the progress, we have transformed the dynamic distance transition between the current state and the converged state into a time-varying line graph, as depicted in Figure \ref{fig:convergence}. Each subplot contains lines representing different reordering algorithms. Each line is composed of numerous data points, where the x-axis signifies the runtime for each iteration, and the y-axis represents the L1 norm distance between the vector of current vertex states $\mathbf{V_{t}}$ at instance $t$ and the vector of convergence states $\mathbf{V_{conv}}$. Mathematically expressed as:
\begin{equation*}
||\mathbf{V_{conv}} - \mathbf{V_{t}}||_1 = \left|\sum_{v\in V} x_{v_{conv}} - \sum_{v\in V} x_{v_{t}}\right|,
\end{equation*}
this signifies the absolute difference between the sum of all vertex state values at instance $t$ and the sum of vertex state values at convergence.
The experiment results show that compared to other reordering algorithms, the vertex order generated by \go is closer to the converged state at each moment during iteration process. To reach the same converged state, the \go algorithm takes 59\% of the average time used by other algorithms (with a minimum requirement of 37\%). This is because our algorithm adjusts the relative processing order of neighboring vertices, allowing each vertex to perform computations on updated neighboring states that are closer to the converged state. In this way, \go achieves a further advancement towards convergence.
}}

\begin{figure*}[!ht]
   \centering
    \includegraphics[width=0.6\textwidth]{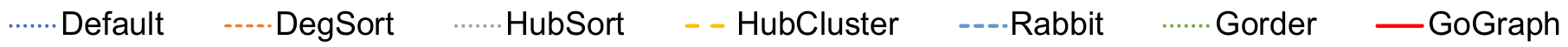}
\\
   \newcommand{\imgwpr}{0.2}
   \subfloat[PageRank on CP]{\includegraphics[width = 0.22\textwidth]{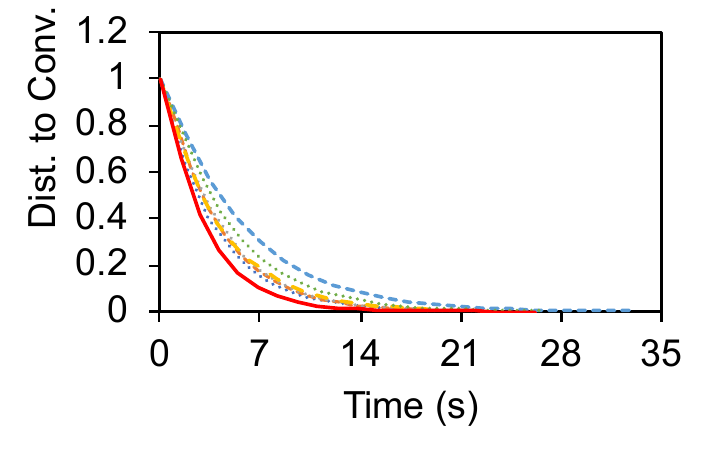}\label{fig:con_pr_2}}\hspace{0.08in}
   \subfloat[PageRank on LJ]{\includegraphics[width = 0.22\textwidth]{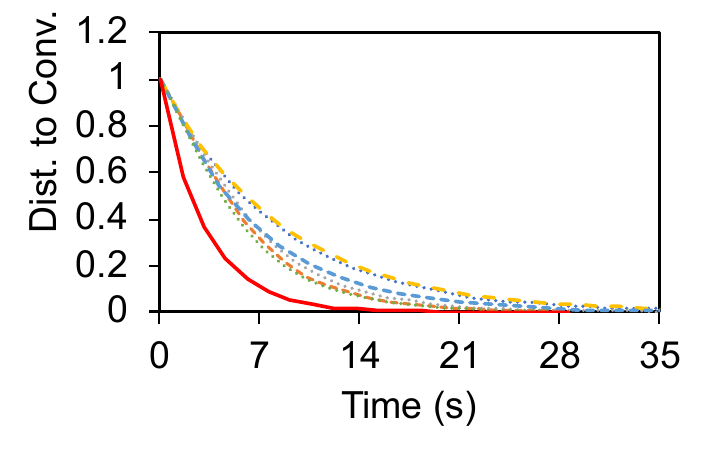}}\hspace{0.08in}
   \subfloat[SSSP on CP]{\includegraphics[width = 0.234\textwidth]{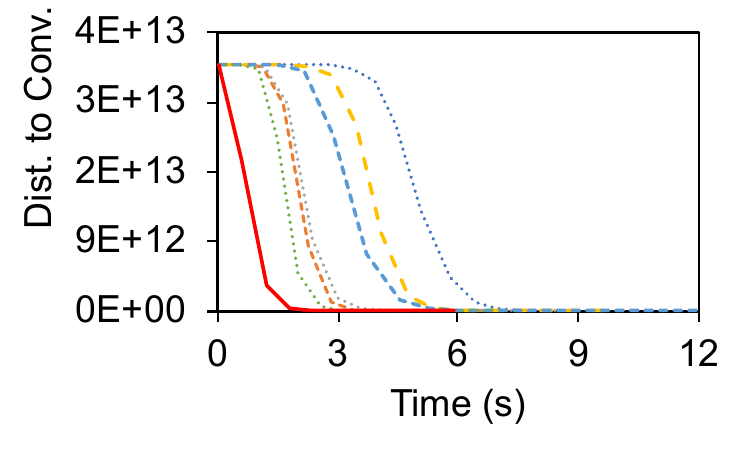}}\hspace{0.1in}
   \subfloat[SSSP on LJ]{\includegraphics[width = 0.247\textwidth]{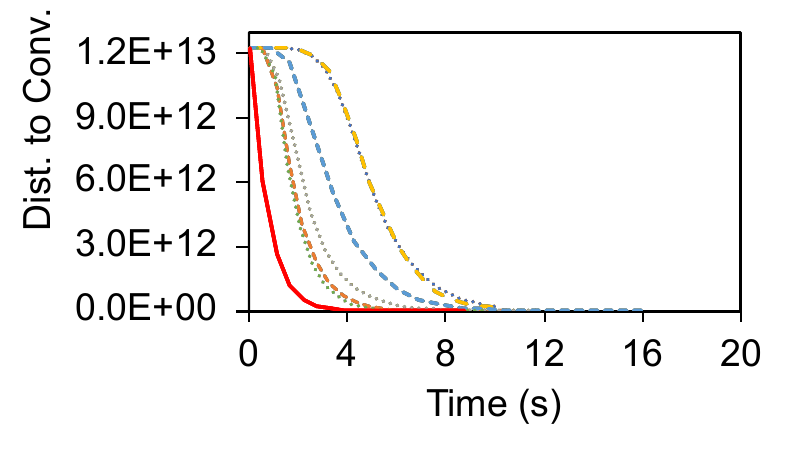}}
   \vspace{-0.05in}
    \caption{The comparison of convergence speed}\label{fig:convergence}
    \vspace{-0.15in}
\end{figure*}
\eat{
We calculate the cumulative value of the vertices states during iterations for PageRank and PHP on three datasets, and Fig. \ref{fig:convergence} depicts the results for the first 15 iterations. 
Fig. \ref{fig:convergence} demonstrates that after one iteration, the cumulative value of our algorithm on PageRank is significantly greater than that of other algorithms. This is because the design of our algorithm permits a greater utilization of messages within a single iteration, thereby increasing the efficiency of message passing.
As shown in Figures \ref{fig:con_php_1}, \ref{fig:con_php_2} and \ref{fig:con_php_3}, the vertices state values accumulated by \go in the PHP algorithm's previous rounds of iterations are less than those of other algorithms. Due to the fact that the PHP algorithm is a query algorithm based on a particular source vertex, the different positions of the query vertex will influence the message passing process during the early iteration process. However, Figures \ref{fig:con_php_1}, \ref{fig:con_php_2}, and \ref{fig:con_php_3} demonstrate that after a few iterations, the effect of \go will gradually become apparent, and the accumulated vertices state values will surpass those of other algorithms. In addition, \go achieves convergence faster than competing algorithms.
}

\subsection{Impact of processing order in improving Async mode}

To verify the impact of the asynchronous update mode and the processing orders on accelerating iterative computation, we compare the runtime of PageRank and SSSP on different graphs using synchronous update mode and default processing order (Sync + Def.), asynchronous update mode and default processing order (Async + Def.), asynchronous update mode and the processing order generated by \go (Async + \go). Fig. \ref{fig:accasyc} shows the normalized results. 
It is shown that asynchronous updating mode can accelerate iterative computation compared with synchronous updating mode. 
And \go achieves significant improvements, obtaining 1.56×-6.30× (3.04× on average) speedups. 

\begin{figure}
  \begin{minipage}{0.49\textwidth}
    \centering
    \includegraphics[width=0.6\textwidth]{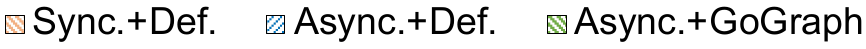}
    \subfloat[PageRank]{\includegraphics[width=0.5\textwidth]{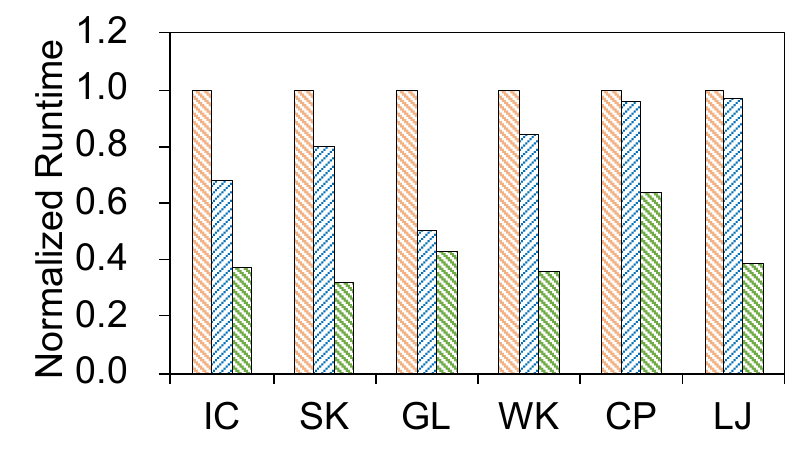}}
    \subfloat[SSSP]{\includegraphics[width=0.5\textwidth]{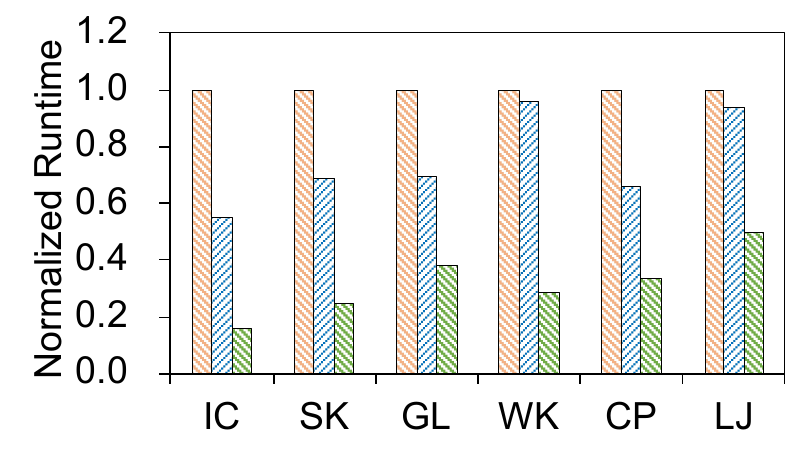}}
    \caption{Impact of processing order in improving Async mode}
    \label{fig:accasyc}
  \end{minipage}
\end{figure}

\subsection{CPU Cache Miss}



\begin{figure}[htbp]
    \begin{minipage}[t]{0.24\textwidth}
        \centering
        \includegraphics[width=\textwidth]{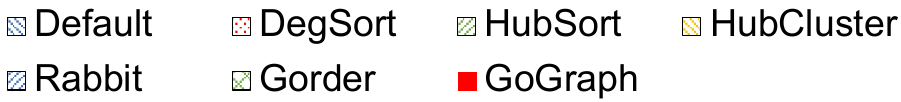}
        \includegraphics[width=\textwidth]{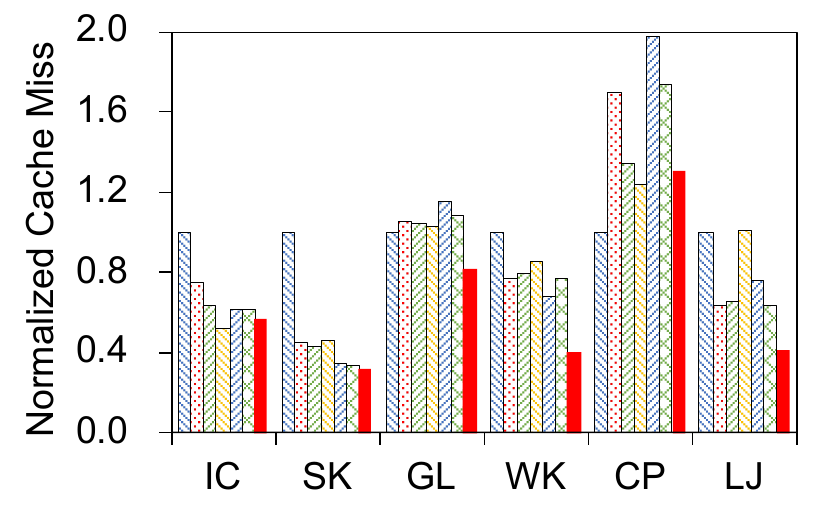}
        \caption{Cache miss comparison}\label{fig:cachemiss}
    \end{minipage}
    \vspace{-0.08in}
    \begin{minipage}[t]{0.24\textwidth}
        \centering
        \includegraphics[width=0.8\textwidth]{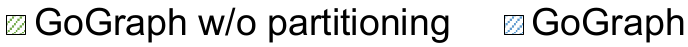}\\
        \includegraphics[width=0.95\textwidth]{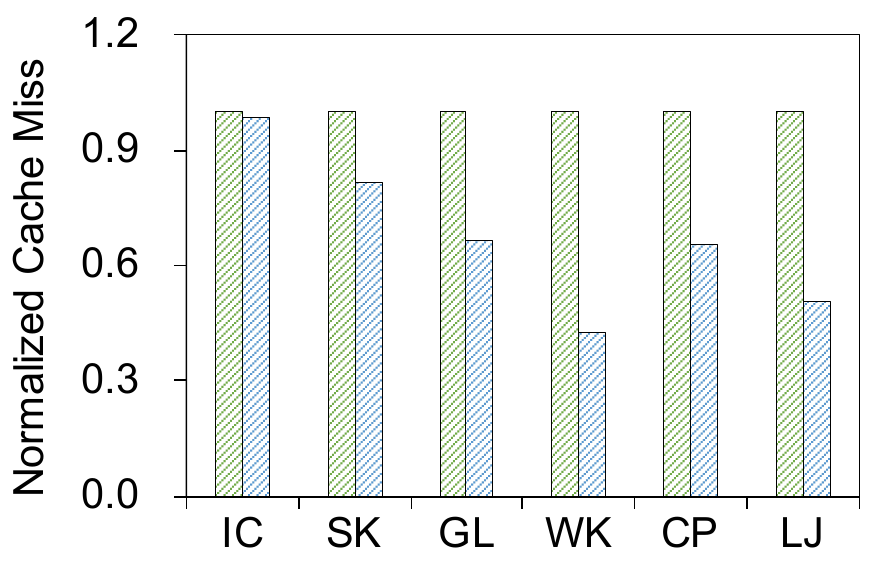}
        \caption{\gr{The impact of partition on cache miss}}\label{fig:cachemisspar}
    \end{minipage}
    \vspace{-0.1in}
\end{figure}

\eat{
\begin{figure}
  \begin{subfigure}{.15\textwidth}
    \centering
    \includegraphics[width=\linewidth]{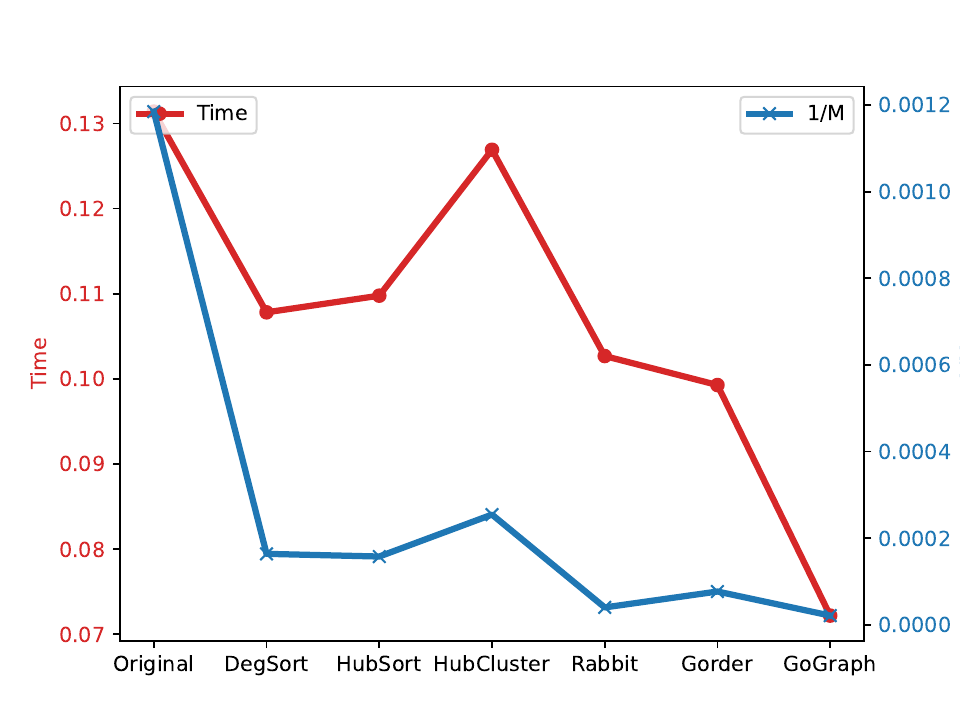}
    \caption{Caption 1}
  \end{subfigure}%
  \begin{subfigure}{.15\textwidth}
    \centering
    \includegraphics[width=\linewidth]{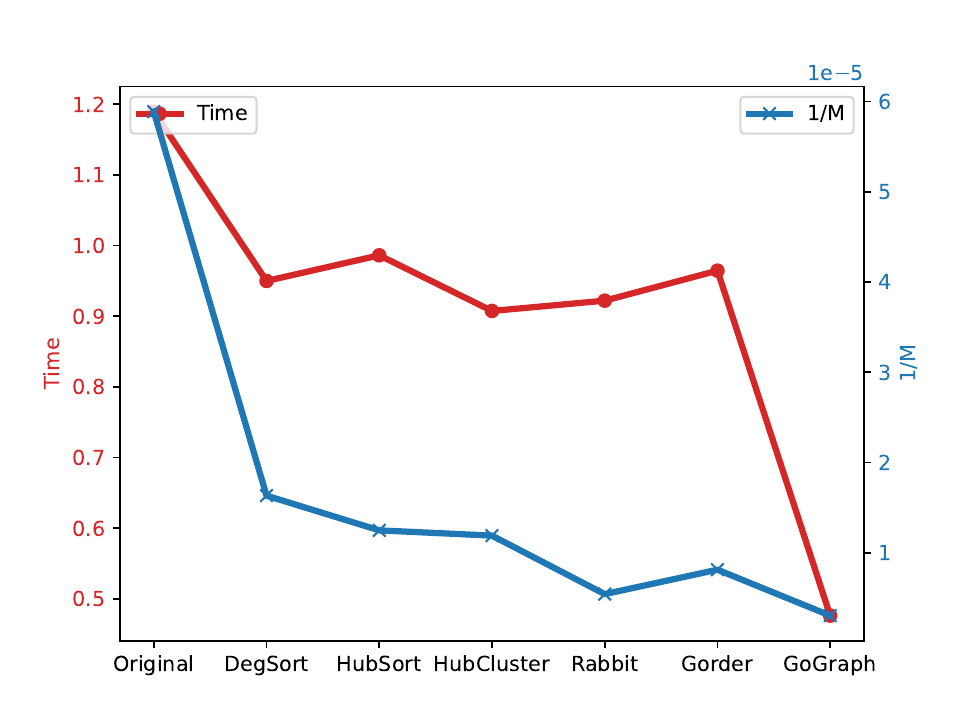}
    \caption{Caption 2}
  \end{subfigure}%
  \begin{subfigure}{.15\textwidth}
    \centering
    \includegraphics[width=\linewidth]{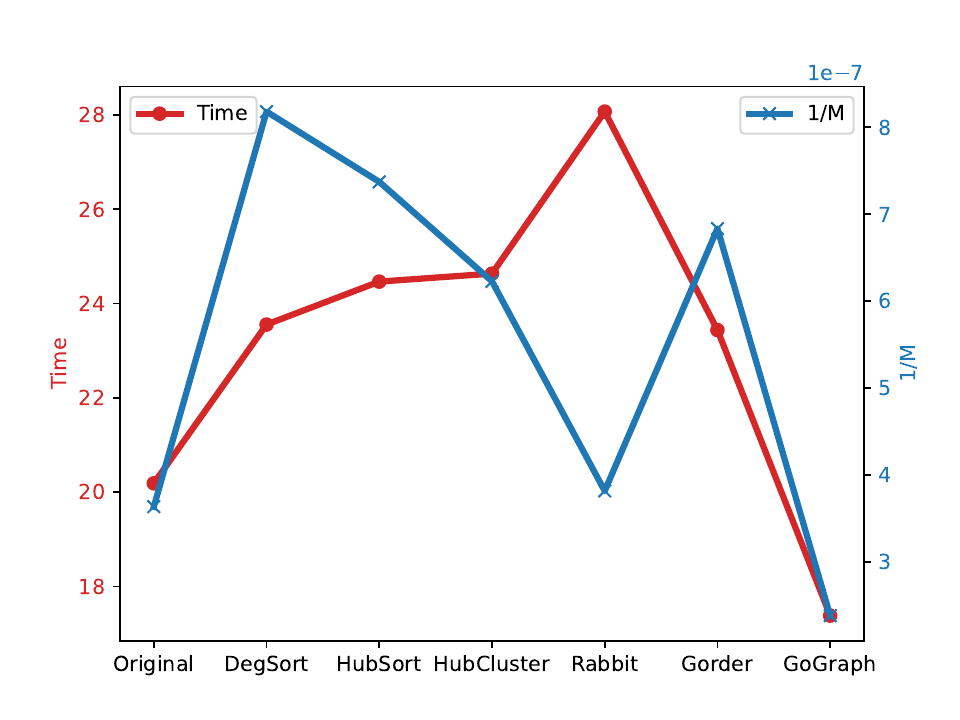}
    \caption{Caption 3}
  \end{subfigure}

  \begin{subfigure}{.15\textwidth}
    \centering
    \includegraphics[width=\linewidth]{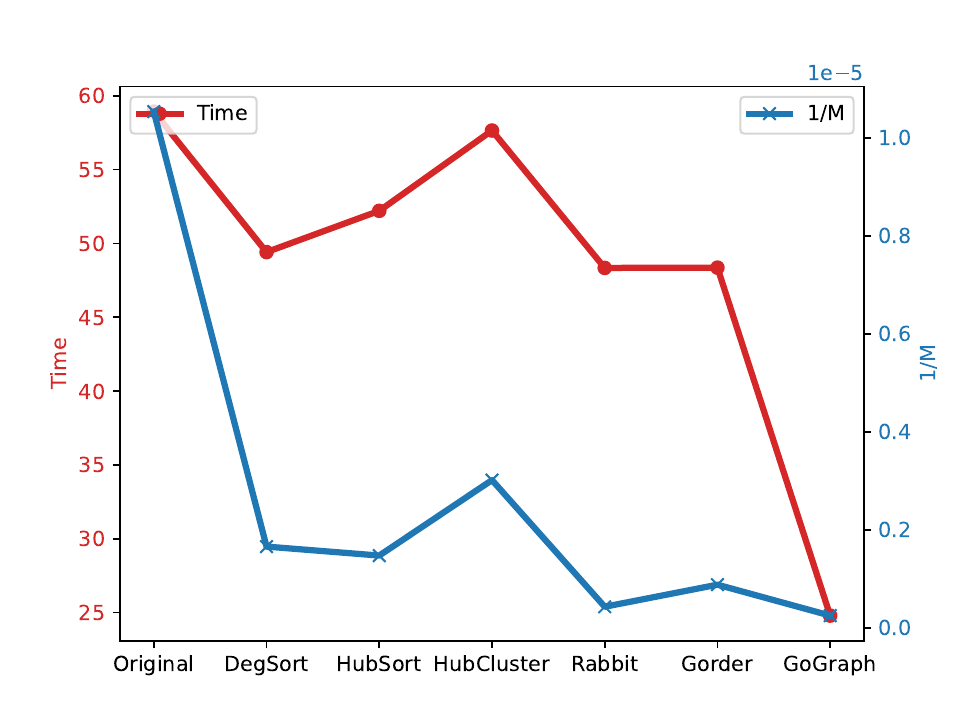}
    \caption{Caption 4}
  \end{subfigure}%
  \begin{subfigure}{.15\textwidth}
    \centering
    \includegraphics[width=\linewidth]{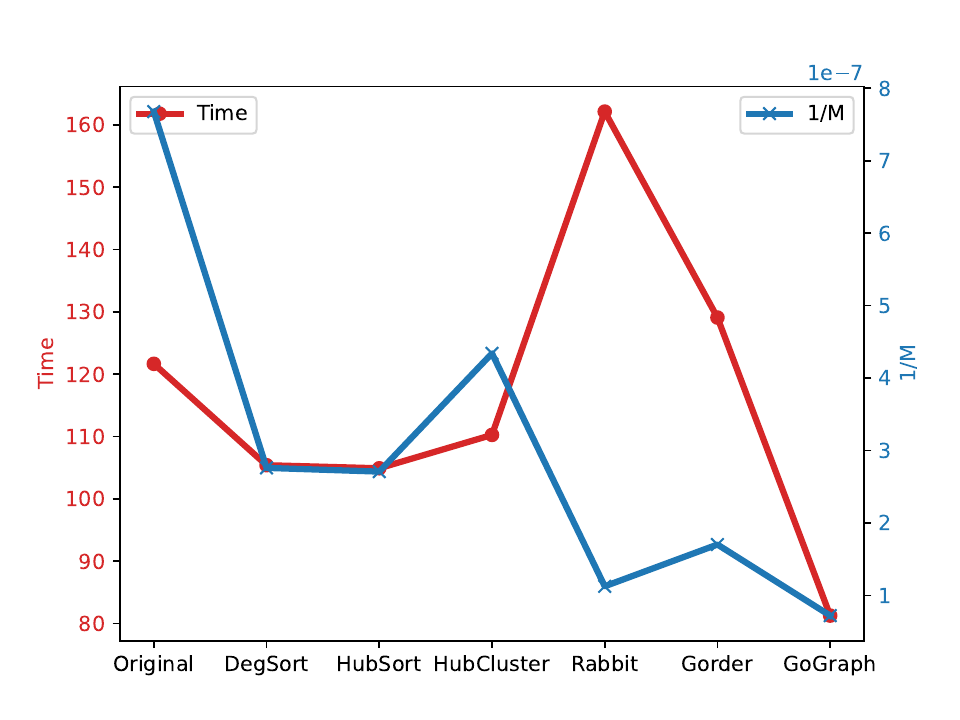}
    \caption{Caption 5}
  \end{subfigure}%
  \begin{subfigure}{.15\textwidth}
    \centering
    \includegraphics[width=\linewidth]{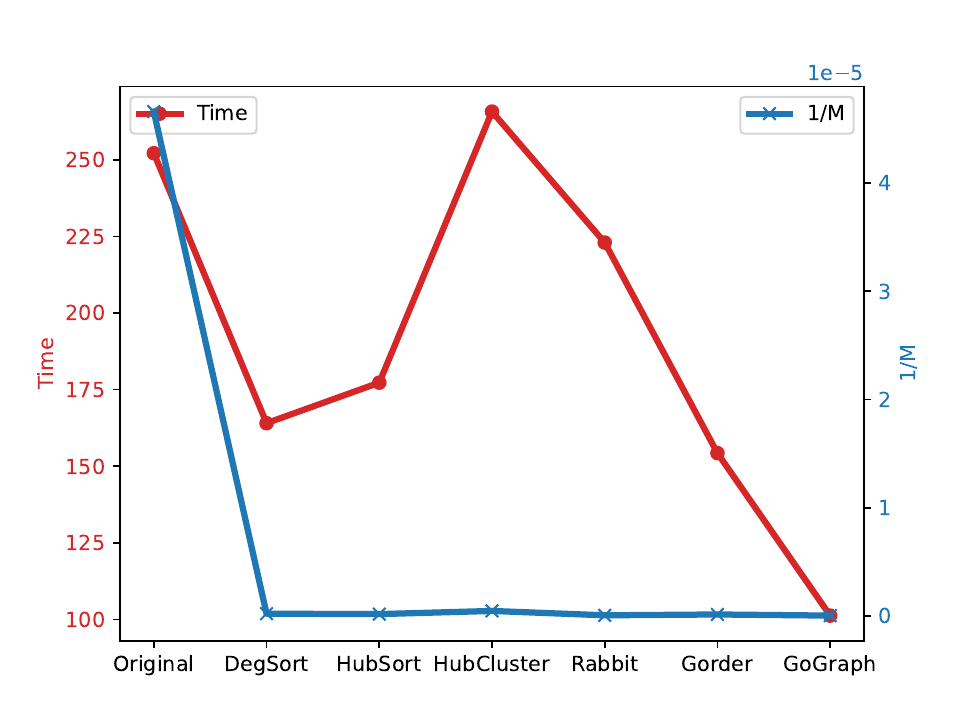}
    \caption{Caption 6}
  \end{subfigure}

  \caption{Overall Caption}
  \label{fig:overall}
\end{figure}
}
To figure out the effect of \go on reducing cache miss, 
we ran PageRank algorithm on all graphs and recorded the cache misses. 
The results are shown in Fig. \ref{fig:cachemiss}. 
\gr{Compared to its competitors, \go can reduce the cache miss by 30\% on average}, which means that we incur less I/O overhead when reading graphs from memory.
This is attributed to our consideration of localization when dividing the graphs.

\gr{To evaluate the impact of graph partitioning on reducing cache misses, we recorded the cache misses of PageRank on different processing orders obtained by \go with and without partitioning. 
The results are depicted in Fig. \ref{fig:cachemisspar}. It illustrates that the graph partition contributes to reducing 
33\% (up to 
58\%) cache misses.} 

\eat{
\begin{figure}
   \centering
   {\includegraphics[width = 0.4\textwidth]{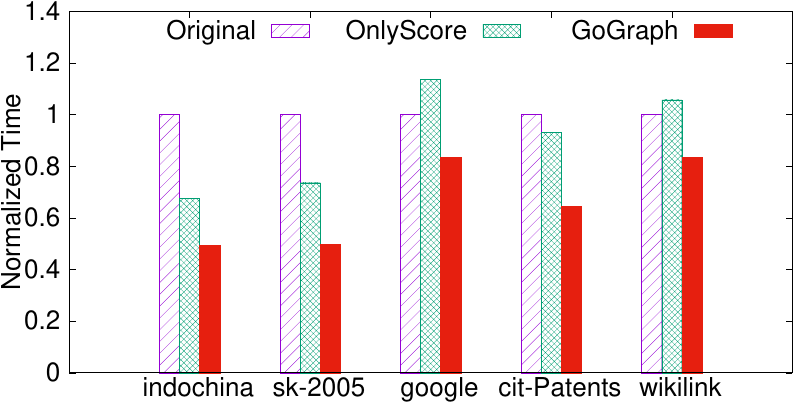}\label{fig:distri_pagerank-1}}
    \caption{\red{XXX.}}\label{fig:XXX}
\end{figure}
}


\subsection{Efficiency of Metric Function}
The design of \go revolves around maximizing the value of $\M(\cdot)$, in that the higher the value of $\M(\cdot)$, the fewer the number of iteration rounds.
To verify this claim, we record $\M(\cdot)$ values of processing orders of CP graph after applying different reorder methods and the number of iteration rounds of various algorithms on these orders. The results are shown in Table \ref{tab:m-time}. 

\begin{table}[h]
  \centering
  \caption{Metrics and Iteration Rounds of Various Algorithms After Applying Different Reorder Methods on CP.}
  \label{tab:m-time}
  \begin{tabular}{ccccccc}
    \toprule
    \multirowcell{2}[-3pt]{Reorder \\ method} & \multirow{2}{*}{$\M(\cdot)$} & \multirow{2}{*}{$\frac{\M(\cdot)}{|E|}$} & \multicolumn{4}{c}{Number of iteration rounds} \\
     \cmidrule(lr){4-7}
     & & & PageRank & SSSP & BFS & PHP \\
    \midrule
    Default & 1,302,313 & 0.07 & 99 & 25 & 36 & 67 \\
    HubCluster & 2,303,977 & 0.13 & 94 & 20 & 34 & 52 \\
    DegSort & 3,623,082 & 0.20 & 77 & 20 & 25 & 48 \\
    HubSort & 3,691,804 & 0.20 & 77 & 22 & 26 & 44 \\
    Gorder & 5,875,924 & 0.32 & 76 & 19 & 22 & 43 \\
    Rabbit & 8,883,616 & 0.49 & 75 & 20 & 25 & 49 \\
    \go & 13,871,315 & 0.76 & 54 & 14 & 17 & 27 \\
    \bottomrule
    \vspace{-0.2in}
  \end{tabular}
\end{table}

It can be seen that \romannumeral1) the larger the $\M(\cdot)$ value of the processing order is, the fewer iteration rounds the algorithm requires, which is in line with our claims and expectations. 
\romannumeral2) The $\M(\cdot)$ value of processing order produced by \go is the largest and the number of iteration rounds always are smallest. It means that the metric function $\M(\cdot)$ is effective for measuring the efficiency of processing order in accelerating iteration computations.

\subsection{Memory Usage}

\eat{Our approach enhances the performance of iterative computations by reducing the number of iterations. Compared with the baseline (synchronous update on default graph processing order), the asynchronous iterative computation does not introduce additional memory overhead in theory. 
To assess the memory overhead of our method, we evaluated the memory usage for three scenarios, synchronous update on default processing order (Sync. + Def.), asynchronous update on default processing order (Async. + Def), and asynchronous updates on a order reordered by \go  (Async. + \go). The results are shown in Fig. \ref{fig:mem-usage}. 
It can be seen that the memory usage of all three methods is comparable. 
\go enhances the performance of iterative computation only by rearranging the processing order without additional data structures. However, the synchronous version incurs slightly higher memory overhead due to the need to record the current state and the previous state of the vertices.
}

Our method boosts iterative computation performance by reducing iteration counts without theoretically increasing memory overhead, compared to the baseline (synchronous update with default graph order). To evaluate memory overhead, we examined memory usage in three scenarios: synchronous update with default order (Sync. + Def.), asynchronous update with default order (Async. + Def), and asynchronous update reordered by \go (Async. + \go), shown in Fig.  \ref{fig:mem-usage}. Memory usage across these methods is similar. \go improves iterative computation efficiency by reordering processing without extra data structures. However, the synchronous approach slightly increases memory overhead due to recording vertices' current and previous states.

\begin{figure}[ht]
  \begin{minipage}{0.49\textwidth}
    \centering
    \includegraphics[width=0.6\textwidth]{Expr/lastfigure/tuli.pdf}
    \subfloat[PageRank]{\includegraphics[width=0.5\textwidth]{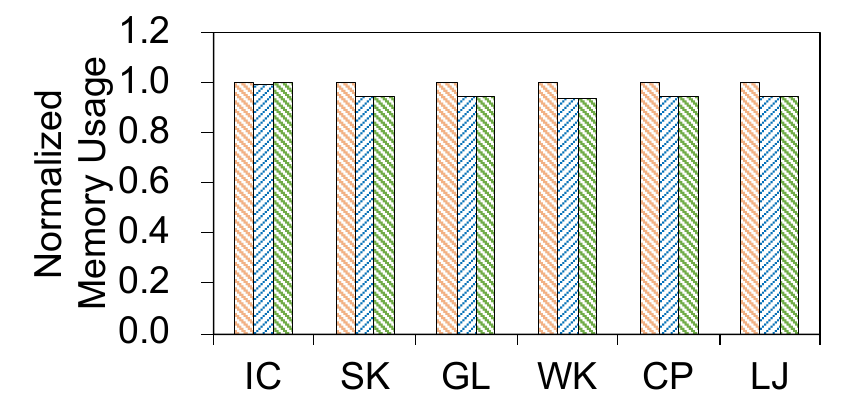}}
    \subfloat[SSSP]{\includegraphics[width=0.5\textwidth]{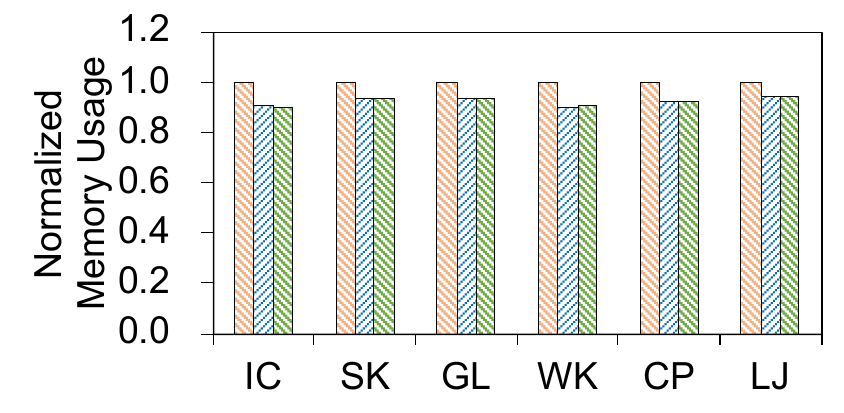}}
    \caption{\gr{Memory usage of different iterative computations}}
    \label{fig:mem-usage}
  \end{minipage}
  \vspace{-0.15in}
\end{figure}

\subsection{Average Degrees}




\gr{To evaluate the impact of average graph degrees on \go's performance, we generated a series of graphs with different average degrees (2, 4, 6 and 8) using the Barabasi-Albert model\cite{barabasi1999emergence} in NetworkX, each with 1,000,000 vertices. We applied different reordering methods and ran PageRank on these graphs. Fig. \ref{fig:diff-deg} shows PageRank's runtime and iteration counts across datasets, showing \go's superior performance in both aspects with varying average degrees.}
\begin{figure}[ht]
\vspace{-0.1in}
  \begin{minipage}{0.49\textwidth}
    \centering
    \includegraphics[width=0.88\textwidth]{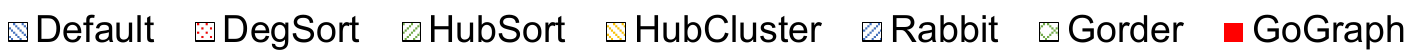}
    \\
    \subfloat[Runtime]{\includegraphics[width=0.48\textwidth]{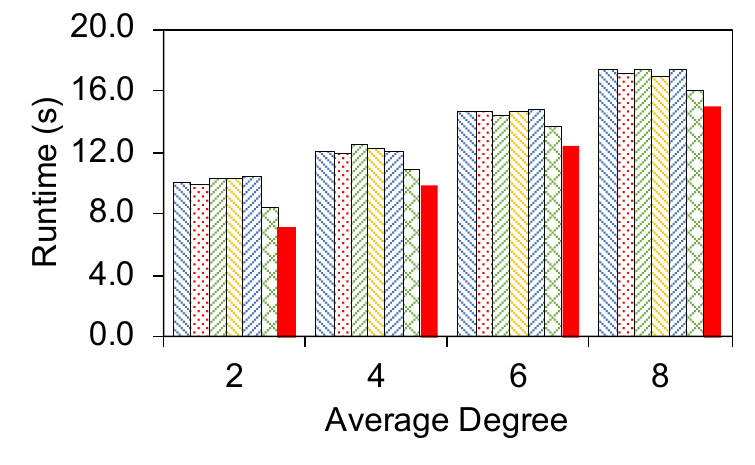}}
    \subfloat[Number of Iteration Rounds]{\includegraphics[width=0.48\textwidth]{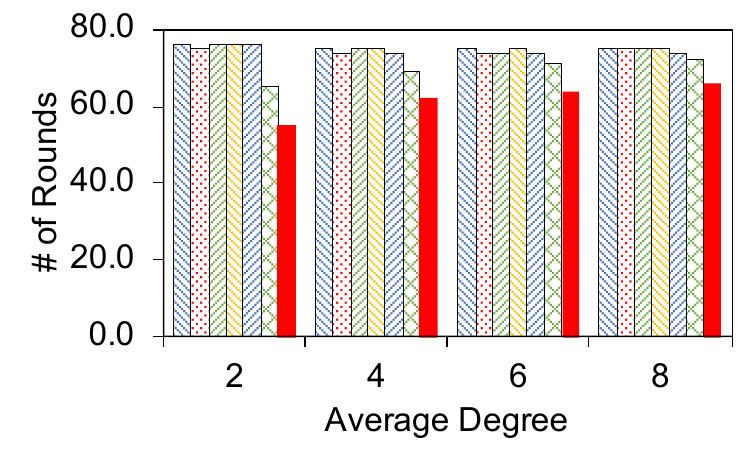}}
    \caption{The impact of different average degrees}
    \label{fig:diff-deg}
  \end{minipage}
   \vspace{-0.15in}
\end{figure}
\eat{As the average degree of the graph increases, the runtime of PageRank also increases. This is due to the enlargement of the graph's size, resulting in longer running times for each iteration, while the number of iteration rounds remains relatively unchanged. However, on these artificial graphs, all the graph reordering methods do not perform as well as on real graphs.
This is because the default processing order of these generated graphs is better than that of real graphs.
Hence, the improvement yielded by \go and other reordering methods are not as significant as in real graphs.
}

Increased average degrees led to longer PageRank runtimes due to larger graph sizes, though iteration counts stayed similar. However, reordering methods were less effective on these synthetic graphs compared to real ones, as the default order of generated graphs is already more optimal than that of real graphs, diminishing the enhancements from \go and others.

\subsection{Partition Methods}

\gr{In \go, we use the graph partition method introduced in Rabbit (Rabbit-Partition) by default. 
To evaluate the effectiveness of various graph partitioning methods in \go, we employed Metis\cite{metis}, Louvain\cite{louvain}, and Fennel\cite{tsourakakis2014fennel} for graph partitioning.
Fig. \ref{fig:diff-part} presents the normalized runtime and iteration counts of PageRank on graphs reordered by \go using these methods, with Rabbit-Partition as the baseline. }

\begin{figure}[ht]
\vspace{-0.15in}
  \begin{minipage}{0.49\textwidth}
    \centering
    \includegraphics[width=0.56\textwidth]{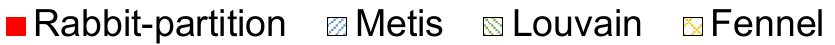}
    \\
    \subfloat[Normalized Runtime]{\includegraphics[width=0.48\textwidth]{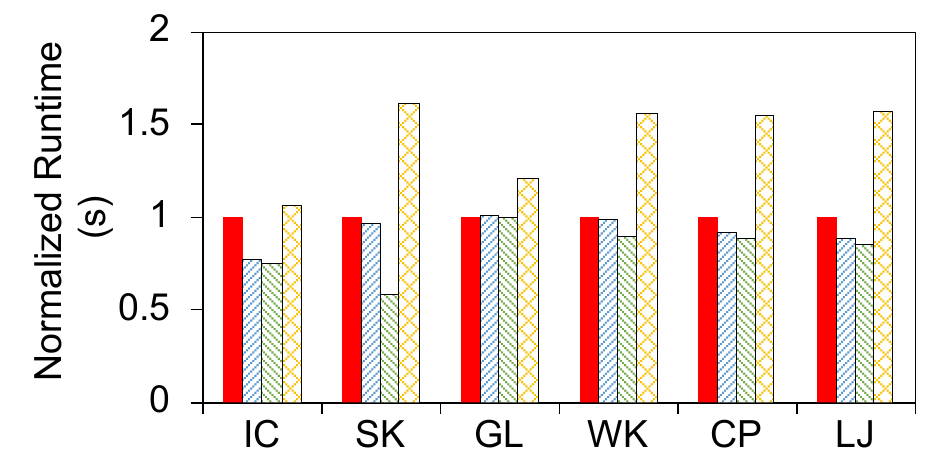}}\ \
    \subfloat[Normalized Rounds]
    {\includegraphics[width=0.48\textwidth]{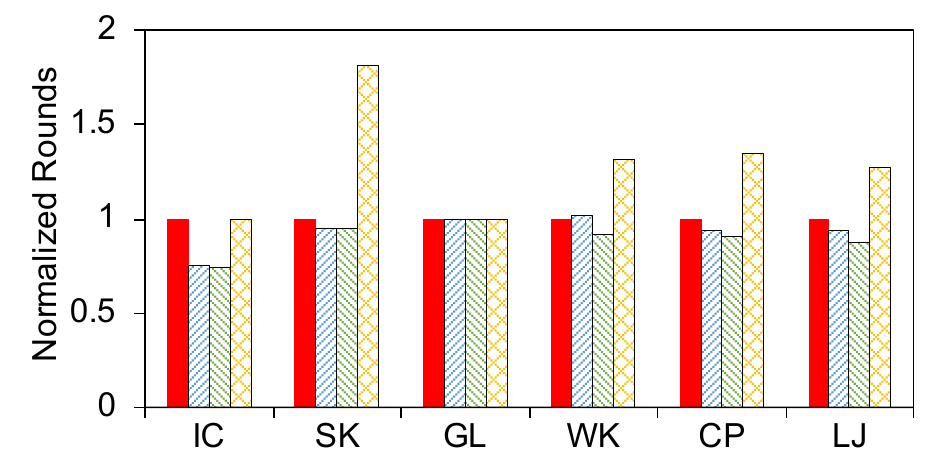}}
    \caption{The impact of different partition methods}
    \label{fig:diff-part}
  \end{minipage}
   \vspace{-0.1in}
\end{figure}

\eat{The experimental results reveal that employing different graph partitioning methods results in different runtime and the number of iteration rounds when executing graph algorithms on reordered graphs. Generally, Rabbit-partition, Metis, and Louvain exhibit comparable performance, whereas Fennel performed worse. 
This is because Fennel is a stream-based partitioning method, which relies on partial knowledge of the graph for making partitioning decisions. It means that a better graph partitioning method helps \go improve performance. 
}
The results indicate that different partitioning methods impact the runtime and iteration counts for graph algorithms on reordered graphs. Rabbit-Partition, Metis, and Louvain showed similar performance, while Fennel underperformed due to its stream-based approach, which makes decisions with partial graph knowledge. Thus, a superior partitioning method enhances \go's performance.

\vspace{-0.05in}
\section{Related Work}\label{sec:relate}
\vspace{-0.05in}

\stitle{Optimizations on synchronous iteration.}
There have been efforts to accelerate iterative computing by exploring asynchronous computing modes. Priter \cite{zhang2011priter} and Maiter \cite{zhang2013maiter} preferentially select vertices in each iteration\eat{that may speedup algorithm convergence to perform iterative computations instead of all vertices}{, aiming to expedite algorithm convergence by performing iterative computations on specific vertices instead of all vertices.}, thereby trying to avoid some inefficient computation. However, this method requires user-defined vertex selection strategies, which is not a trivial work. FBSGraph \cite{fbsgraph} employs a forward and backward sweeping execution framework for vertex processing, addressing the issue of slow propagation of vertex states when their outgoing neighbors precede them in the processing order. PathGraph \cite{pathgraph} further optimizes the path of vertex state propagation, enhancing the efficiency of state propagation. 
\go tries to maximize the advantages of asynchronous computations by rearranging graph processing order based on revealing the reason why asynchronous updating mode accelerates iterative computations.

\stitle{Graph reordering.}
Vertex reordering
has been a focal point in graph data preprocessing to enhance memory access efficiency through increased vertex locality.
To improve the temporal and spatial locality, Gorder~\cite{wei2016gorder} uses a slide window to compute the score between the ordered vertices and the unordered vertices. The larger the score, the more frequently the unordered vertices will be accessed after the ordered vertices, from which an ordering algorithm can be deduced. Rabbit~\cite{arai2016rabbit} maps the more frequently accessed vertices close to the L1 cache, thus reducing the overhead of swapping cache lines. Hub Clustering~\cite{balaji2018graph} assigns a contiguous range of subscripts to hub vertices whose degrees are larger than the average degree at the front of the graph data array. Since the neighbors of the high-degree vertices are likely to overlap, storing them together in memory decreases the frequency of cache line swapping. Hub Sorting (also known as frequency based clustering)~\cite{zhang2016optimizing} is a lightweight reordering method that extracts hub vertices, arranges them in descending order, and then swaps them with the vertices with continuous subscripts at the front of the data array. Thus, the subscripts of non-hub vertices can be preserved as much as possible in order to reduce the cost of reordering operations and improve the locality of the power law graph. 
They reorder vertices to enhance graph locality, boost cache hit rates, and speed up iterations. GoGraph optimizes vertex processing order, allowing more vertices to refresh their states using neighbors' latest states within the same iteration.


\eat{\stitle{Max acyclic subgraph problem.}
The Maximum Acyclic Subgraph (MAS) problem involves identifying the largest acyclic subgraph within a given directed graph $G$, 
where an acyclic subgraph contains no cycles.
Although MAS is not only an NP-hard but also an NP-approximate problem \cite{karp2010reducibility,guruswami2008beating,lucan2016exploring}, there have been many works that explored this problem, providing mathematical analyses and continually refining the bounds. \cite{berger1990approximation} proved that in a directed graph \eat{in which the length of each cycle is greater than 2}{that has no cycles of length 2}, its maximum acyclic subgraph includes at least $1/2+\Omega(1/\sqrt{d_{max}})$ edges, where $d_{max}$ is the maximum vertex degree in the graph. \cite{hassin1994approximations} introduced another algorithm with the same approximation guarantee but improved running time in specific cases. \cite{charikar2007advantage} presented a new approximation algorithm for the MAS problem, achieving an approximation ratio of $1/2+\Omega(1/(\log n \log\log n))$. \cite{cvetkovic2020maximal} introduced a matrix-based solution for the MAS problem, simplifying it to the task of identifying the nearest nilpotent matrix to the graph's matrix representation. \cite{cvetkovic2020maximal} also suggested that the MAS problem can be weakened to the problem of transforming the graph into an acyclic graph by cutting the minimum number of edges. This idea has been applied in other papers, such as \cite{gupte2011finding,lu2017finding}, where the removal of all possible cycles enables the remaining graph to form a Directed Acyclic Graph (DAG), revealing the hierarchical structure of the graph.}

\section{Conclusion}\label{sec:conclu}
In this paper, we propose \go, a graph reordering algorithm that establishes a well-formed graph processing order, 
resulting in a reduction of the number of iteration rounds and acceleration of iterative computation. 
Specifically, we introduce a metric to evaluate the quality of the processing order, based on the count of positive edges. \go employs a divide-and-conquer strategy to optimize this metric, with experimental results validating its effectiveness in reorganizing the vertex processing order.

\vspace{-0.05in}
\section*{Acknowledgment}
\vspace{-0.05in}
This work is supported by The National Key R\&D Program of China (2018YFB1003400), 
The National Natural Science Foundation of China (U2241212, 62072082, 62202088, 62072083, and 62372097), Joint Funds of Natural Science Foundation of Liaoning Province (2023-MSBA-078), Research Grants Council of Hong Kong, China, No.14205520, and Fundamental Research Funds for the Central Universities (N2216012 and N232405-17).


\bibliographystyle{IEEEtran}  
\bibliography{ref} 

\end{document}